\pdfoutput=1
\documentclass[a4paper,11pt]{book}

\usepackage[T1]{fontenc}
\usepackage[utf8]{inputenc}
\usepackage[french,german,english]{babel}

%

\setlength{\textwidth}{146.8mm}
\setlength{\oddsidemargin}{11.6mm}
\setlength{\evensidemargin}{0.8mm}
\setlength{\topmargin}{-2.2mm}
\setlength{\textheight}{221.9mm}
\setlength{\headheight}{14pt}
\setlength{\parindent}{0pt}

\usepackage{setspace} 
\setstretch{1.1}

\makeatletter
\setlength{\@fptop}{0pt}  
\makeatother

\usepackage{graphicx}
\usepackage[table]{xcolor}
\graphicspath{{images/}}

\usepackage{subfigure}
\usepackage{booktabs}
\usepackage{microtype}
\usepackage{url}
\usepackage[final]{pdfpages}

\usepackage{fancyhdr}

\pagestyle{fancy}
	\fancyhf{}

	\fancyhead[OR]{\bfseries \nouppercase{\rightmark}}
	\fancyhead[EL]{\bfseries \nouppercase{\leftmark}}
	\fancyfoot[EL,OR]{\thepage}
\fancypagestyle{plain}{
	\fancyhf{}

	\fancyfoot[EL,OR]{\thepage}}
\fancypagestyle{addpagenumbersforpdfimports}{
	\fancyhead{}
	
	\fancyfoot{}
	\fancyfoot[RO,LE]{\thepage}
}

\usepackage{listings}
\lstset{language=[LaTeX]Tex,tabsize=4, basicstyle=\scriptsize\ttfamily, showstringspaces=false, numbers=left, numberstyle=\tiny, numbersep=10pt, breaklines=true, breakautoindent=true, breakindent=10pt}

\usepackage{hyperref}
\hypersetup{pdfborder={0 0 0},
	colorlinks=true,
	linkcolor=black,
	citecolor=black,
	urlcolor=black}
\urlstyle{same}

\makeatletter
\def\cleardoublepage{\clearpage\if@twoside \ifodd\c@page\else
    \hbox{}
    \thispagestyle{empty}
    \newpage
    \if@twocolumn\hbox{}\newpage\fi\fi\fi}
\makeatother \clearpage{\pagestyle{plain}\cleardoublepage}

\usepackage{color}
\usepackage{tikz}
\usepackage[explicit]{titlesec}

\usepackage[explicit]{titlesec}
\titleformat{\chapter}[hang]
        {\normalfont\bfseries\Huge}
        {\makebox[.5ex][r]{\colorbox{black}{%
            \hspace*{5cm}\rule[-1.5mm]{0pt}{13mm}\color{white}\thechapter\,%
          }}\,}
        {0pt}{#1}

\titlespacing*{\chapter}{0pt}{50pt}{30pt}
\titlespacing*{\section}{0pt}{13.2pt}{*0}
\titlespacing*{\subsection}{0pt}{13.2pt}{*0}
\titlespacing*{\subsubsection}{0pt}{13.2pt}{*0}

\newcounter{myparts}
\newcommand*\partlabel{}
\titleformat{\part}[display]
	{\normalfont\bfseries\Huge}
	{\gdef\partlabel{\thepart\ }}
 	{0pt}
 	  {\setlength{\unitlength}{20mm}
	  \addtocounter{myparts}{1}
	  \begin{tikzpicture}[remember picture,overlay]
    \node[anchor=north west,xshift=-65mm,yshift=-6.9cm-\value{myparts}*20mm] at (current page.north east)
      {\begin{tikzpicture}[remember picture, overlay]
        \draw[fill=black] (0,0) rectangle(62mm,20mm);
        \node[anchor=north west,yshift=-6.1cm-\value{myparts}*20mm,xshift=-60.5mm,minimum height=30mm,inner sep=0mm] at (current page.north east)
        {\parbox[top][30mm][t]{55mm}{\raggedright \color{white}Part \partlabel $\phantom{\textrm{l}}$}};
        \node[anchor=north east,yshift=-6.1cm-\value{myparts}*20mm,xshift=-63.5mm,text width=\textwidth,minimum height=30mm,inner sep=0mm] at (current page.north east)
              {\parbox[top][30mm][t]{\textwidth}{\raggedleft \color{black}#1}};
       \end{tikzpicture}
      };
   \end{tikzpicture}
   \gdef\partlabel{}
  }

\usepackage{amsmath,amsfonts,amsthm}

\newtheorem{theorem}{Theorem}[chapter]

\makeatletter
\def\resetMathstrut@{%
  \setbox\z@\hbox{%
    \mathchardef\@tempa\mathcode`\(\relax
      \def\@tempb##1"##2##3{\the\textfont"##3\char"}%
      \expandafter\@tempb\meaning\@tempa \relax
  }%
  \ht\Mathstrutbox@1.2\ht\z@ \dp\Mathstrutbox@1.2\dp\z@
}
\makeatother

\usepackage{multirow}
\usepackage{footnote}
\usepackage{amsmath}
\usepackage{amsfonts}
\usepackage[ruled,vlined]{algorithm2e}
\usepackage{amsthm}
\newtheorem{remark}{Remark}[chapter]
\usepackage[outdir=./]{epstopdf}
\newcommand{\norm}[1]{\lVert#1\rVert}

\begin{document}
\frontmatter
\begin{titlepage}

\thispagestyle{plain}
\begin{center}
    \Large
    \textbf{Decentralised Resource Allocation and Coordination for\\ 5G Cellular Communication Networks}
       
    \vspace{0.4cm}
    Gabriel Otero P\'erez, Manuel Fern\'{a}ndez Veiga\let\thefootnote\relax\footnote{G.\,O.\,P\'erez is affiliated with the Atlantic Research Center for Information and Communication Technologies (AtlantTIC), University of Vigo, Spain(e-mail:\,gabrieloteroperez@gmail.com). M.\,F.\,Veiga is affiliated with the Department of Telematics Engineering, University of Vigo, Spain (e-mail:\,mveiga@det.uvigo.es)}

    \vspace{0.9cm}
    \textbf{Abstract}
\end{center}
In order to cope with the ever increasing traffic load that networks will need to support, a new approach for planning cellular networks deployments should be followed. Traditionally, cell association and resource allocation has been based on the received signal power but this approach seems to be inadequate regarding the brewing of heterogeneous networks. In this work, we first implement a network simulator in order to test new cell associacion and resource allocation techniques. Then, we pose the network utility maximisation problem, reformulating the Downlink and Uplink Decoupling (DUDe) scheme under the framework and tools of mathematical optimisation. We derive the explicit solution of the problem under fixed and non-fixed association policy so as to propose and develope both centralised and decentralised algorithms capable of solving cell association and resource allocation problems. We observe that the decentralised approach requires low computational effort and represents a significant gain in the overall performance of the network.

\vspace{0.9cm}
\emph{Index Terms --} 5G, Heterogeneous Networks, Downlink and Uplink Decoupling, Utility Maximisation, Optimisation, Distributed algorithms 

\end{titlepage}

\let\cleardoublepage\clearpage

\setlength{\parskip}{1em}

\mainmatter
\chapter{Introduction}

\section{Motivation and related work}

 According to the \emph{Visual Networking Index} (\textsc{vni}) Global Mobile Data Forecast \cite{cisco} released by Cisco in February 2016, there will be 5.5 billion global mobile users by 2020. In addition, networks will face 11.6 billion mobile-ready devices and connections, nearly 4 billion more than in 2015. The average mobile connection speed will increase 3.2-fold, from 2.0 Mbps in 2015 to 6.5 Mbps by 2020 and global mobile IP traffic will reach an annual run rate of 367 Exabytes, up from 44 Exabytes in 2015. Furthermore, this raising numbers are not confined to the downlink plane. There has been a growth of the uplink importance due to the sensor networks and \emph{Machine2Machine} communications. This context favours the brewing of new cutting-edge technologies such as 5\textsc{g}.

5\textsc{g} is not meant to be an incremental advance on 4\textsc{g} but a complete paradigm shift. Indeed, it will represent higher carrier frequencies, moving towards and into millimeter wave spectrum. Just a few years ago this scenario seemed unthinkable but nowadays, this is possible because new semiconductor technologies and short-range standards are maturing~\cite{mature1}-\cite{mature3}. Also, the costs and power consumption of mobile devices are rapidly falling. Massive bandwidths and higher aggregate data rates will turn out to be essential. The amount of data that networks must be able to handle will need to increase, roughly, by 1000 times from 4\textsc{g} to 5\textsc{g}~\cite{whatwillbe}. To that end, many changes will appear on the physical layer such as massive \textsc{mimo} technologies which blossomed in the late $1990$s~\cite{mimo1},\cite{mimo2}. \textsc{mimo} leverages the spatial dimention of communications taking advantage of multipath propagation and achieving enormous enhancement in spectral efficiency. Wireless cellular networks are evolving towars heterogeneity. Heterogeneous cellular networks comprise traditional cellular networks overlaid with smaller base stations which work using a lower transmission power~\cite{hetnets}.

Focusing on the network layer, single macrocells will need to support both high-rate and low-rate devices. This will require large-scale changes to the control plane. Centralised solutions do not seem to be appropriate anymore for such a potentially large subscriber base. Additionally, granting the same treatment to each type of device might not be the best option. A simple, albeit effective way to increase the network capacity is to increase the density of small cells which live together with the old macrocells and therefore, making the coverage areas smaller. This adds significant additional complexity to the design and deployment of new networks, since now the decision of which base station should serve a given user in downlink and which one in uplink is not trivial. Consequently, there exists much room for improvement and optimising the associations between users and base stations in 5\textsc{g}. To fully support all the above-mentioned features, the network will have to meet higher levels of intelligence.

Among the related literature, it is worth highlighting several approaches which have been explored recently. Smiljkovikj \emph{et al.} \cite{capacity},\cite{capacity2} study the downlink-uplink decoupled access using the framework of stochastic geometry and probability theory in order to derive the association probabilities. Architectural changes needed to facilitate the decopling are also outlined in their work. In \cite{utilitymax}, Wildman and Weber study the network utility maximisation problem under single station association policies assuming only one link per user (\emph{downlink}). Their results include solutions based on greedy rounding of multi-station associations and association heuristics. Last but not least, Palomar \emph{et al.} \cite{Palomar06} address the understanding of decomposition methods applied to network utility maximisation. They review the basics of convexity, Lagrange duality, distributed subgradient method and Jacobi and Gauss-Seidel iterations.

\section{Contributions}
The main goal of this work is to explore some alternatives that might help solve the actual challenges in the network control plane. Namely, we pay attention to the user association and the resource allocation process.

First, we present the network model which we will rely on to conduct our study. We describe the heterogeneous network paradigm, explaining the system model along with the association scheme that this approach suggests, that is, the \emph{Downlink and Uplink decoupling} (DUDe). We make use of the \emph{Poisson Point Processes} (\textsc{ppp}) in order to model the locations of the users and the base stations. These represent the trending alternative contrary to the traditional hexagonal grid deployments. Next, we implement a network simulator based on a 2-tier heterogeneous network and derive some performance measurements such as: distribution of the distance to the serving station, average signal-to-noise ratio, association probabilities, aggregate throughput, etc. After assessing the validity of the simulator, we pose the network utility maximisation problem, reformulating the DUDe scheme under the framework and tools of mathematical optimisation. We derive the explicit solution of the problem under fixed association policy and we validate the resulting centralised algorithm after, using the implemented network simulator. Again, we include different performance indicators so as to compare the benefits of this alternative in contrast to the original DUDe scheme. We pay special attention to aggregate spectral efficiency and uplink-downlink rate asymmetry.

Finally, we propose and develope a decentralised algorithm capable of solving, jointly, the cell association and the resource allocation problems. To that end, a full dual decomposition of the problem is performed, followed by the implementation and testing. Two kinds of tests have been carried out. In the first one,  with the aim of presenting the characteristics and strengths of the algorithm in a more suitable and friendly way, we test this approach in a custom deployment with a few base stations and users, chosen manually. Afterwards, once the validity and convergence of the decenstralised algorithm has been tested, we integrate it into the simulation tool. To conclude, we conduct a comparison between the three main options studied throughout this work.
\chapter{Downlink and Uplink Decoupling (DUDe)}
\label{chap:dude}

\section{Network architecture}
In order to cope with the ever increasing traffic load that networks will need
to support, a new approach for planning cellular networks deployments should
be followed. One way to expand mobile network capacity is to enlarge the
number of base stations but this can only be performed to a certain extent due
to the fact that it necessitates a huge capital expenditure and because
finding new spots for stations is increasingly hard, particularly in big
cities. Exactly for that reason, nowadays \textbf{heterogeneous networks} are
becomimg more and more important.

\subsection{Heterogeneous cellular networks}

A heterogeneous cellular network often implies the use of multiple radio
access technologies, each one supported by different base stations. Namely, a
wide area network might use \emph{macrocells}, \emph{small cells} and/or
\emph{femtocells} to offer seamless coverage. The last two of them are just
low-power and low-cost access nodes (base stations). We define a
\textbf{multi-tier} network as a network in which both traditional cellular
network (macrocells) and small cell network coexist. Each one of them
constitutes a network tier.

\begin{figure}[!htb]
\begin{center}
  \includegraphics[scale=0.75]{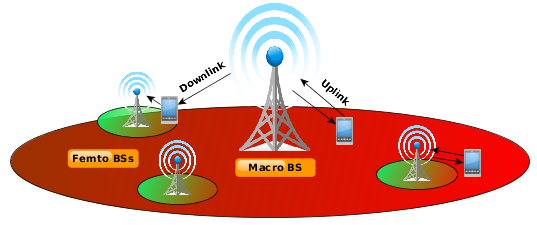}
  \caption{Two tier heterogeneous network.}
  \label{fig:5g}
\end{center}
\end{figure}

The proposed network architecture is a two-tier heterogeneous cellular
network. We take into account two levels which correspond to each one of the
two types of base stations we are going to model, \emph{macrocells} and
\emph{femtocells}. Therefore, we have the following components:

\begin{itemize}
\item Two kinds of Base Stations (\textsc{bs}), each one belonging to a
  specific tier.
\item Macro Base Stations (\textsc{mbs}s) for tier 1.
\item Femto Base Stations (\textsc{fbs}s) for tier 2.
\item User devices, which are inherently mobile.
\end{itemize}

\subsection{System model}
\label{sec:system_model}

In this section, we provide some preliminaries on the implemented model
required to support the discussion presented later.

\subsubsection*{Network components distribution}
The location of each one of the aforementioned network components is computed
using independent \textsc{ppp} (\emph{Poisson Point Process}) for every tier.
 
Def (\textsc{ppp}): Let A $\subseteq \mathbb{R}^2$. $\Phi$ is a \textsc{ppp}
on A if:
\begin{enumerate}
\item The number of points in B $\subseteq $ A is Poisson distributed with
  rate $\lambda$, per unit area.
\item $\Phi(B_1), \,\, \Phi(B_2)$ are statistically independent if
  $B_1,\,\, B_2$ are disjoint.
\end{enumerate} 
That is, the base stations in the $i$-th tier are spatially distributed as a
\textsc{ppp} $\Phi_i$ of density $\lambda_i$. Ramdom models based
on stochastic geometry have shown their accuracy to model real-world network
deployments \cite{ktier}-\cite{stochastic}, and usually they work better than grid-based models.

Similarly, user devices location is also modeled by an independent
\textsc{ppp} $\Phi_d$ of density $\lambda_d$.

\subsubsection*{Channel model}

In wireless environments, the simplest method of relating the transmitted and
the received signal power is to state that the received signal power is
proportional to the distance between the transmitter and the receiver raised
to a certain exponent, $\alpha$, which models the path-loss. Typical values
for the path-loss exponent are $\alpha = 2$ for free-space and $\alpha = 4$
for a path model of an urban radio channel.

In addition, we use $P_i$ to denote the \textbf{transmission power} of a node
at tier $ i \in \lbrace \textsc{m},\textsc{f}\rbrace$, where $i = \textsc{m}$
for \textsc{mbs}s and $i = \textsc{f}$ for \textsc{fbs}s. Every base station
in the same tier uses the same transmission power. The transmission power for
each user device is $P_d$.\textbf{ Noise} is additive, Gaussian and has
constant power $\sigma^2$. \textbf{Rayleigh fading} is used to model the
channel quality fluctuations between the \textsc{bs} and the mobile
device. $h_x \sim exp(1)$ describes the Rayleigh fading and it is an
exponentially distributed random variable with unit mean.
 
Hence, the received power (downlink) at a typical user device located at $y$
from a \textsc{bs} located at $x_0$ is
\begin{equation}
  \label{eq:1}
  P_r^{DL} = P_i \,h_{x_0} \, || x_0 - y ||^{-\alpha}
\end{equation}
where $||\cdot||$ is the Euclidean norm. Similarly, the signal power received
at the \textsc{bs} in the uplink is given by:
\begin{equation}
  \label{eq:2}
  P_r^{UL} = P_d \,h_{x_0} \, || x_0 - y ||^{-\alpha}.
\end{equation}
 
\subsubsection*{Interference model}

The interference power depends on medium access control protocol and the
network characteristics (e.g., network topology, association criterion, etc.)

We should note that no intra-cell interference is considered in this
implementation. To that end, both orthogonal (e.g., \textsc{tdma},
\textsc{ofdma}, etc) or non-orthogonal (e.g., \textsc{cdma}) multiple access
methods should be employed. In other words, we assume that each base station
avoids the interference among the devices associated to it by orthogonal
resource allocation. Conversely, inter-cell interference is considered. There
exist several techniques to model the interference between base stations. Most
of them only take into account region bounds or $k$ nearest
interferers. Despite the fact that it is a very popular technique because of
its simplicity and accuracy, it seems obvious that we cannot neglect distant
interferers when the path-loss exponent is low, $\alpha < 4$. Thus, with the
aim of building the most versatile model (and deployment simulator) as
possible, we have chosen to consider every other base station in the interest
area (which is not being part of the uplink/downlink) as an interferer.

The resulting donwlink \textsc{sinr} expression assuming a user device at $y$
connects to a \textsc{bs} at $x_0$ is
\begin{equation}
\displaystyle
  \mathsf{SINR}_{DL}(y) = \frac{P_i \,h_{x_0} \, \norm{x_0 -
      y}^{-\alpha}}{\sum^k_{j=1}\,\, \sum_{x \in \Phi_i,\, x\neq
      x_0} P_j \,h_{x_0} \, \norm{x - y}^{-\alpha}\,\, +\,\, \sigma^2}
\end{equation}
where $j$ is the $j$-th tier, $\Phi_i$ is the Poisson Point Process which
models the location of tier $i$'s \textsc{BSs} and
$ i \in \lbrace \textsc{m},\textsc{f}\rbrace$.

The same reasoning applies to the uplink. This model assumes that all the
\textsc{bs}s are transmitting without interruption (all the time) and always
with the same power. If that is not the case, the analysis can be extended by
considering a fraction of the density of interfering \textsc{bs}s.

\subsection{Association scheme}
\label{sec:assoc_rules}

In the proposed model, we dismiss the traditional association scheme based on
the downlink received power (\textsc{rp}). While the downlink association will
still be based on downlink \textsc{rp}, this is no longer true for the
uplink. The latter is going to be based on pathloss, that is, in the uplink,
the device is associated to the \textsc{bs} to which it transmits with the
highest average power.

These two different assumptions for uplink and downlink association is what we
call \emph{\textbf{Downlink and Uplink Decoupling}}~\cite{refdude}. \textsc{dud}e leads to different coverage boundaries for
uplink and downlink, which will be discussed later on this document.

The association decision is now based on the average received signal in
\textsc{dl}/\textsc{ul} separately. The expectation is taken over the pdf of
the fading. We can obtain the signal powers for uplink and downlink by
averaging (\ref{eq:1}) and (\ref{eq:2}) as follows

\begin{equation}
  \mathbb{E}_h[P_r^{DL}] = \mathbb{E}_h[P_i \,h_{x_0} \, \norm{x_0 -
    y}^{-\alpha}] = \mathbb{E}_h[P_i \, \norm{x_0 - y}^{-\alpha}]\,
  \mathbb{E}_h[h_{x_0}]= P_i\,\norm{x_0 - y}^{-\alpha} 
\end{equation}
\begin{equation}
  \mathbb{E}_h[P_r^{UL}] = \mathbb{E}_h[P_d \,h_{x_0} \, \norm{x_0 -
    y}^{-\alpha}] = \mathbb{E}_h[P_d \, \norm{x_0 - y}^{-\alpha}]
  \,\mathbb{E}_h[h_{x_0}]= P_d \, \norm{x_0 - y}^{-\alpha}
\end{equation}
where $i \in \lbrace \textsc{m},\textsc{f}\rbrace$. Let $D_i$ be the distance
between the device and the serving base station, that is,
$D_i = \norm{x_0 - y}$ where $i \in \lbrace \textsc{m},\textsc{f}\rbrace
$.
Thus, following the aforementioned policy we can derive some association
rules:
\begin{itemize}
\item Connect to a \textbf{macrocell in downlink} if
  $P_\textsc{m}\, D_\textsc{m}^{-\alpha} > P_\textsc{f}\,
  D_\textsc{f}^{-\alpha}$. Otherwise, connect to a femtocell.
\item Associate to a \textbf{macrocell in uplink} if
  $P_\textsc{d}\, D_\textsc{m}^{-\alpha} > P_\textsc{d}\,
  D_\textsc{f}^{-\alpha} \longrightarrow D_\textsc{m}^{-\alpha} >
  D_\textsc{f}^{-\alpha} \longrightarrow D_\textsc{f}^{\alpha} >
  D_\textsc{m}^{\alpha}$ and connect to a femtocell otherwise.
\end{itemize}

As we can see, the distance to the serving base station is the parameter we
will measure in order to decide which is the appropriate base station to
associate to in the uplink.

In a two-tier heterogeneous network, these simple association rules lead us
four possible cases while choosing the base station for uplink and downlink.

\begin{enumerate}
\item \emph{Case 1}: associate to a macrocell in both donwlink and uplink.
  This happens whenever:
  \begin{equation}
    P_\textsc{m}\, D_\textsc{m}^{-\alpha} > P_\textsc{f}\,
    D_\textsc{f}^{-\alpha} \text{ and } D_\textsc{f}^{\alpha} >
    D_\textsc{m}^{\alpha} 
  \end{equation}

\item \emph{Case 2}: choose a macrocell in downlink and a femtocell in
  uplink. Two conditions may hold for this case:
  \begin{equation}
    P_\textsc{m}\, D_\textsc{m}^{-\alpha} > P_\textsc{f}\,
    D_\textsc{f}^{-\alpha} \text{ and } D_\textsc{f}^{\alpha} \leq
    D_\textsc{m}^{\alpha} 
  \end{equation}

\item \emph{Case 3}: associate to a femtocell in downlink and choose a
  macrocell in uplink. The intersection of the conditions that should hold for
  this to be true is an empty set:
  \begin{equation}
    P_\textsc{m}\, D_\textsc{m}^{-\alpha} \leq P_\textsc{f}\,
    D_\textsc{f}^{-\alpha} \text{ and } D_\textsc{f}^{\alpha} >
    D_\textsc{m}^{\alpha}
  \end{equation}

  Both conditions cannot hold at the same time since
  $\frac{P_\textsc{f}}{P_\textsc{m}} < 1$.

\item \emph{Case 4}: connect to a femtocell in both donwlink and uplink. The
  conditions are:
  \begin{equation}
    P_\textsc{m}\, D_\textsc{m}^{-\alpha} < P_\textsc{f}\,
    D_\textsc{f}^{-\alpha} \text{ and } D_\textsc{f}^{\alpha} \leq
    D_\textsc{m}^{\alpha} 
\end{equation}
\end{enumerate}

\section{Initial proof of concept}\label{sec:simulator_reference}

A 2-tier network simulator has been implemented to evaluate some performance
parameters such as: distribution of the distance to serving station, average signal-to-noise
ratio, user's throughput, etc.  In order to accomplish the empirical test of
the aforementioned network architecture, some issues need to be addressed.

\subsection{Simulator basics}
The first aspect we need to tackle is the construction of the coverage
maps. As we already stated above, our goal is to assess the impact of uplink
and downling decoupling on the overall system performace. To that end, we have
split the problem in two separate parts.

Firstly, we are going to describe how to build the \textbf{uplink} coverage
regions according to the random antenna deployment.

\subsubsection*{Uplink coverage. Voronoi diagrams}

In the uplink case, reacall that the device is associated to the base station
to which it transmits with the highest average power. Since only the distance
to the serving base station is going to be taken into account for the uplink,
the method for dividing up the interest area between the different base
stations (Femtocells, Macrocells) is the \textbf{Voronoi
  tesselation}~\cite{voronoi}.

Therefore, the uplink coverage map is basically a Voronoi diagram, that is, a
partitioning of the interest area into regions based on distance to the
available base stations. In simple words, it is a diagram created by taking
pairs of points that are close together and drawing a line that is equidistant
between them and perpendicular to the line connecting them.

As shown in figure~\ref{fig:voronoi}, all points on the blue lines in the
diagram are equidistant to the nearest two (or more) base stations.

\begin{figure}[!htb]
  \begin{center}
    \includegraphics[scale=0.75]{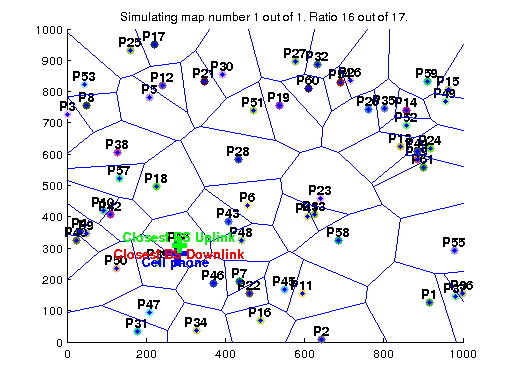}
    \caption{Voronoi diagram computation.}
    \label{fig:voronoi}
  \end{center}
\end{figure}

\subsubsection*{Downlink coverage. Multiplicatively weighted Voronoi diagrams.}

Regarding to the downlink, we need to be aware of the received power in
addition to the distance. In other words, the device is associated to the base
station from which it receives the highest average power. Unfortunaley, the
standard Voronoi diagram is no longer usefull if we expect to grasp both
received power and distance at the same time.

We need Voronoi cells to be defined in terms of a distance which is modified
by weights assigned to each generator point (base station), that is, we have
to change the Euclidean distance $d(x, p)$ for the weighted distance
$d_w(x, p)$. Now, the straight lines separating each region become circles. It
can be seen as if the straight lines shown before were just circles with an
infinite radius.

We can set out the main problem to solve, which is finding the points that are
equidistant to two given points, given the new distance definition. In order
to do that, solve:
\begin{itemize}
\item Let $\mathbb{X} = (x,y)$ be a point on a two dimensional grid which is
  equidistant to two given points ($P, Q$).
\item Let $P,\, Q$ be the two points under study. We want to compute the
  dominance area for each one of them.
\item Let $\displaystyle d_w(a,\, b) = \frac{|a-b|}{W_b}$ be the definition
  of the weighted distance between two given points.
\item Let $W_p$ and $W_q$ be the weigth factors for each one of the points.
\end{itemize}

Therefore, the problem can be stated as follows,
\begin{equation}
    d\,(\mathbb{X},\,P) = d\,(\mathbb{X},\,Q)
\end{equation}

\begin{center}
$\displaystyle \frac{|\mathbb{X} - P|}{W_p} = \frac{|\mathbb{X} - Q|}{W_q} \longrightarrow \frac{|\mathbb{X} - P|}{|\mathbb{X} - Q|}  = \frac{W_p}{W_q} = \lambda$
\end{center}

\begin{center}
$\displaystyle \frac{|(x, y) - (P_x, P_y)|}{|(x, y) - (Q_x, Q_y)|} = \lambda$
\end{center}

Solving the above, leads to the following circumference equation which is called \emph{Circle of Apollonious}.  For a more detailed proof, we refer the reader to
  Appendix~\ref{app:before}.

\begin{center}
$\displaystyle (x - \frac{P_x-Q_x\lambda^2}{1-\lambda^2})^2 + (y - \frac{P_y-Q_y\lambda^2}{1-\lambda^2})^2 = \frac{\lambda^2[(P_x-Q_x)^2+(P_y-Q_y)^2]}{(1-\lambda^2)^2}$
\end{center}

As we can see, \emph{Apollonious' circles} are closely related to \emph{Weighted Voronoi} diagrams. The points on the \emph{Apollonious} circumference are equidistant to $P$ and $Q$, considering weighted distances $W_i > W_j$ (see figure~\ref{fig:apollonious}). Thus, $P$ dominates $Q$ and $P$'s dominance area is the outer area (blue) and $Q$'s is the inner area of the circle.


\begin{figure}[!htb]
\begin{center}
    \includegraphics[scale=0.8]{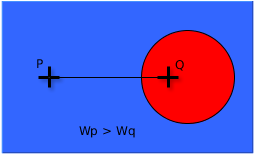}
   \caption{Apollonious circle for two points.}
   \label{fig:apollonious}
   \end{center}
\end{figure}


The complete coverage map is built in an iterative way. We compute each point's dominance area by intersecting every \emph{Apollonious circle} involving that point (base station) and any of the others (base stations). After carrying out the aforementioned procedure, we reach our final coverage map (see figure~\ref{fig:voronoi_weighted}). Figure~\ref{fig:superimposed} shows superimposed uplink and downlink maps.

It is worth saying that the realizations of the Poisson Point Processes are generated in the following manner:
\begin{enumerate}
\item Draw the number of points using a Poisson distribution with mean parameter $\lambda$.
\item Uniformly distribute the obtained number of points across across the 2-dimensional area independently for each dimension.
\end{enumerate}


\begin{figure}[!htb]
\begin{center}
    \includegraphics[scale=0.82]{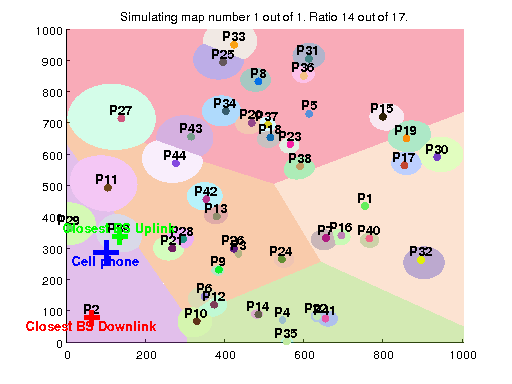}
   \caption{Downlink coverage map.}
   \label{fig:voronoi_weighted}
   \end{center}
\end{figure}



\begin{figure}[!htb]
\begin{center}
    \includegraphics[scale=0.78]{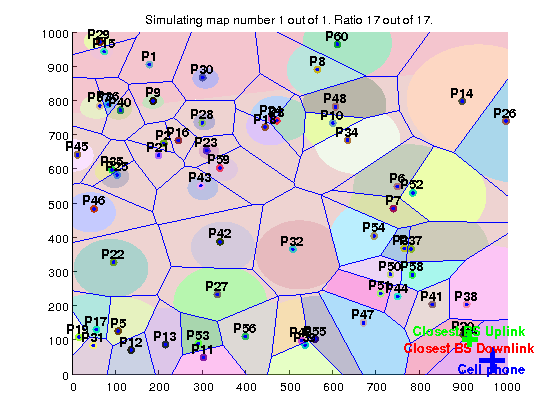}
   \caption{Coverage map: uplink and downlink.}
   \label{fig:superimposed}
   \end{center}
\end{figure}


\newpage
\subsection{Network deployment}
Relying on the 2-tier heterogeneous network architecture, we now proceed with the description of the test-bed network. Table~\ref{parameters} shows the parameters we have used to conduct the assessment of the simulator.

\begin{savenotes}
  \begin{table}[!htb]
    \caption{Deployment parameters}
    \centering
    \label{parameters}
    \begin{tabular}{cl} \\ \hline
      \multicolumn{1}{|c|}{\cellcolor[HTML]{EFEFEF}{\color[HTML]{343434} Area of interest}} & \multicolumn{1}{l|}{$1000$ m $\times$ $1000$ m} \\ \hline
      \multicolumn{1}{|c|}{\cellcolor[HTML]{EFEFEF}{\color[HTML]{343434} }}
                                                                                            &
                                                                                              \multicolumn{1}{l|}{$\lambda_\text{Macrocells}
                                                                                              =
                                                                                              3$}
      \\ 
\multicolumn{1}{|c|}{\cellcolor[HTML]{EFEFEF}{\color[HTML]{343434} }}                                                                                                 & \multicolumn{1}{l|}{$\lambda_\text{Femtocells}$ = $\lambda_\text{Macrocells}\cdot$  ratio\footnote{$ratio = \frac{\lambda_F}{\lambda_M}$. Ratio of the number of femtocells to the number of macrocells. In our case, this ratio ranges from 1 to 17.}} \\
\multicolumn{1}{|c|}{\multirow{-3}{*}{\cellcolor[HTML]{EFEFEF}{\color[HTML]{343434} \begin{tabular}[c]{@{}c@{}}Network deployment\\ (PPP intensities)\end{tabular}}}} & \multicolumn{1}{l|}{$\lambda_ {Users} = 5500$}                         \\ \hline
\multicolumn{1}{|c|}{\cellcolor[HTML]{EFEFEF}{\color[HTML]{343434} }}                                                                                                 & \multicolumn{1}{l|}{Macrocell DL/UL = 20 MHz}                       \\
\multicolumn{1}{|c|}{\multirow{-2}{*}{\cellcolor[HTML]{EFEFEF}{\color[HTML]{343434} Channel bandwidth}}}                                                              & \multicolumn{1}{l|}{Femtocell DL/UL = 1 GHz}                        \\ \hline
\multicolumn{1}{|c|}{\cellcolor[HTML]{EFEFEF}{\color[HTML]{343434} }}                                                                                                 & \multicolumn{1}{l|}{MBS = 46 dBm}                                   \\
\multicolumn{1}{|c|}{\cellcolor[HTML]{EFEFEF}{\color[HTML]{343434} }}                                                                                                 & \multicolumn{1}{l|}{FBS = 20 dBm}                                   \\
\multicolumn{1}{|c|}{\multirow{-3}{*}{\cellcolor[HTML]{EFEFEF}{\color[HTML]{343434} Transmit power}}}                                                                 & \multicolumn{1}{l|}{Device = 20 dBm}                                \\ \hline
\multicolumn{1}{|c|}{\cellcolor[HTML]{EFEFEF}{\color[HTML]{343434} Path-loss exponent}}                                                                               & \multicolumn{1}{l|}{$\alpha = 4$}                                      \\ \hline
\multicolumn{1}{|c|}{\cellcolor[HTML]{EFEFEF}{\color[HTML]{343434} Propagation constant}}                                                                             & \multicolumn{1}{l|}{1}                                              \\ \hline
\multicolumn{1}{|c|}{\cellcolor[HTML]{EFEFEF}{\color[HTML]{343434} Noise level}}                                                                                      & \multicolumn{1}{l|}{$- 106$ dBm}                                      \\ \hline
\multicolumn{1}{|c|}{\cellcolor[HTML]{EFEFEF}Number of iterations}                                                                                                    & \multicolumn{1}{l|}{$450$ maps}                                       \\ \hline
\end{tabular}
\end{table}
\end{savenotes}

\subsubsection*{Scenario description}

The experimental evaluation of decoupled access scheme is going to be focused
on a finite area of interest, which is a square of side $S = 1000$ m. As we
have already stated a randomised deployment, based on stochastic geometry and
Poison Point Processes (\textsc{ppp}s) has been used to settle all the base
stations (taking into account a specific ratio). After that, we use the same
association rules explained before (see Section~\ref{sec:assoc_rules}) in
order to build the coverage maps for both uplink and downlink.

As soon as we have got all these elements, in what follows we simulate the
arrival of users to the area of interest. To that end, we again employ a
\textsc{ppp} $\Phi_u$ of density $\lambda_{Users}$ to simulate it. Once we
have made all the desired measurements, we store them so that we can average
them later and repeat the process as many times as iterations (number of maps)
we have set but leaving the ratio unchanged. The aim of averaging is to smooth
the measurements. We do this in order to avoid giving excessive weight to
unusual deployments/cases that may emerge whiel using a randomised
deployment. Thus, we get a more reliable snapshot of the network
parameters. Finally, we repeat this process for every possible ratio.

\subsubsection{Distribution of the distance to the serving base station}

The first parameter we wanted to check was the distribution of the distance to a base
station which is serving a given user device. To that end, we now provide the
numerical results. Figure~\ref{fig:distances} shows the probability density
functions of the distance to the serving \textsc{bs} for two different ratios
($\frac{\lambda_F}{\lambda_M} = 5$ and $\frac{\lambda_F}{\lambda_M} = 17$) and
compares them with no decoupling case.

\begin{figure}[!htb]
  \begin{center}
    \includegraphics[scale=0.85]{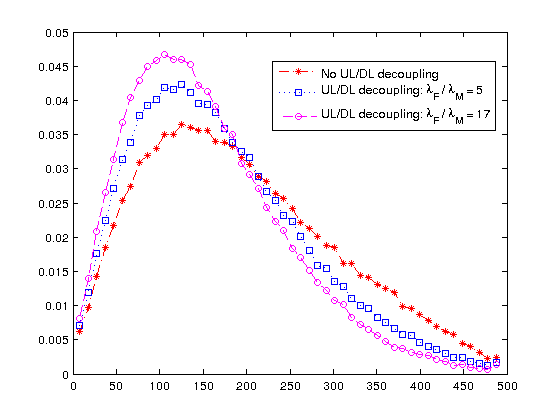}
    \caption{Probability density functions for the distances to the serving
      station.} 
    \label{fig:distances}
  \end{center}
\end{figure}

It seems clear that using uplink/downlink decoupling, the distribution becomes
narrower and is shifted to the left, that is, the average distance to the
serving station is smaller.  As we can see, the distance distribution shifts
to the left as we increase the ratio $\frac{\lambda_F}{\lambda_M}$. This makes
sense since femtocells have a smaller transmit power and coverage. Therefore,
user devices tend to remain in \emph{Case 2} and \emph{Case 4} (see
\ref{sec:assoc_rules}). Finally, we should note that the tails of the
distributions vanish sooner while using decoupling scheme than when
traditional approach is used.

\subsubsection{Association probabilities}

One of the main parameters that most probably will affect the network
performance is the optimal association decision. Willing to shed some light on
this issue, we carried out a simulation to assess the effect of increasing the
femtocell density ($\lambda_\textsc{f}$) in the association
cases. Figure~\ref{fig:prob_cases} shows association probabilities for the
deployed network.

\begin{figure}[!htb]
  \begin{center}
    \includegraphics[scale=0.85]{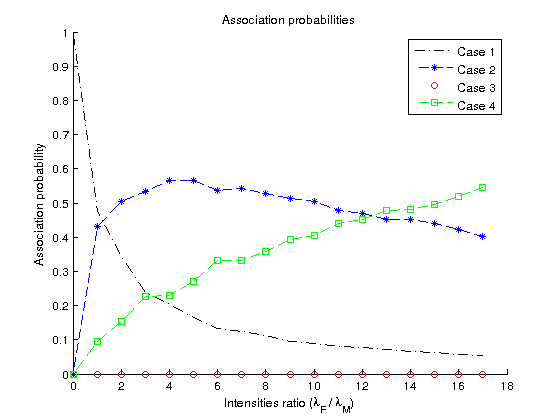}
    \caption{Association probabilities of each association case.}
    \label{fig:prob_cases}
  \end{center}
\end{figure}

It seems important to notice that as we increase the density of femtocells,
decoupled access scheme is enhanced (\emph{Case 2}), but this only happens to
a certain point. By further increasing $\lambda_\textsc{f}$, \emph{Case 2}'s
probability gradually decreases in favour of choosing a small base station for
both uplink and downlink, i.e. \emph{Case 4}.

\subsubsection{Average throughput}
In order to test the throughput measurement tool, we need to set some
additional parameters. First of all, while measuring throughput values for a
given user, some other users are assumed to be active at the same time. If
they are being served by the same base station that is serving the target
user, they are going to share the available bandwidth at that time. The
interference model taken into account to compute the \textsc{sinr} value for
the user is the one explained in section~\ref{sec:system_model}.

Finally, we define the uplink and downlink rate using \textbf{Shannon-Hartley
  theorem} as follows
\begin{equation}
  R = \frac{1}{N_{\text{Users}}} \,\,\textsc{bw} \,\,\log_2(1 + \mathsf{SINR}).
\end{equation}

It is woth mentioning that in this case we assume $500$ active users in the
downlink and $400$ in the uplink, in addition to the target user device.

Figure~\ref{fig:avg_sinrs} shows average downlink and uplink \textsc{sinr}
values as we increase femtocell densities.

\begin{figure}[!htb]
  \centering 
  \subfigure[Average downlink
  \textsc{sinr}.]{\label{fig:Avg_DL_SINR}\includegraphics[scale=0.50]{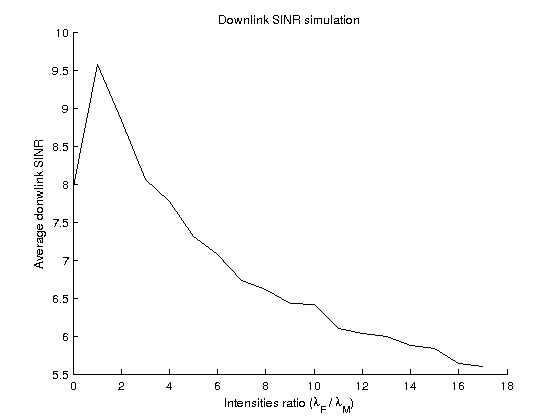}}
  \subfigure[Average uplink
  \textsc{sinr}.]{\label{fig:Avg_UL_SINR}\includegraphics[scale=0.50]{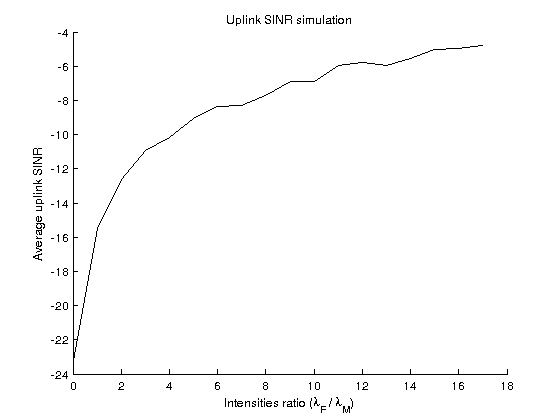}}
  \caption{Average \textsc{sinr} values.}
  \label{fig:avg_sinrs}
\end{figure}

As shown in figure~\ref{fig:Avg_DL_SINR}, despite the fact that at first sight
increasing the density of small cells leads to higher values of \textsc{sinr}
at the downlink, as $\frac{\lambda_\textsc{f}}{\lambda_\textsc{m}}$ becomes
large the number of interfering nodes also increase. As a consequence, a
trade-off between the number of femtocells and decoupled access seems the best
practise.

On the other hand, throughput measurements are shown in
figure~\ref{fig:avg_rates}. It seems that the enhancement obtained by using decoupled access is enough to compensate for the previously mentioned \textsc{sinr} degradation, at least with the assumed user density ($500$ active
users in the downlink and $400$ in the uplink at the same time). It might be
due to the fact that increasing small cell density increases interference but
also decreases the average number of users per cell and thus, leaves more
available bandwidth for each one of the remaining users in the cell.

\begin{figure}[!htb]
  \centering     
  \subfigure[Downlink rate (bits per second).]{\label{fig:Downlink_Throughput}\includegraphics[scale=0.50]{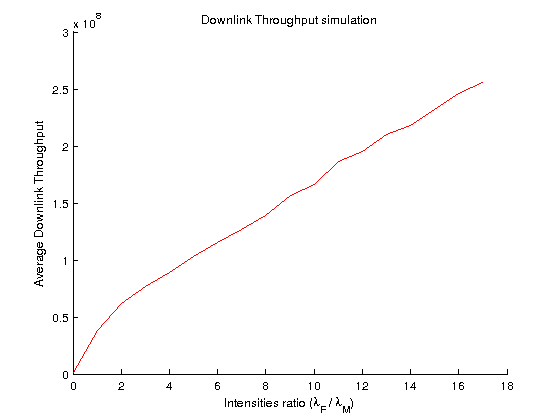}}
  \subfigure[Uplink rate (bits per second).]{\label{fig:Uplink_Throughput}\includegraphics[scale=0.50]{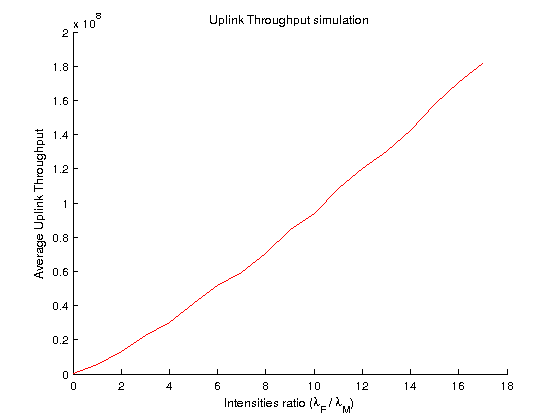}}
  \caption{Average throughput values.}
  \label{fig:avg_rates}
\end{figure}

\subsubsection*{Comparison with 'received power' network deployment}
Finally, we try to justify the benefits of using \textsc{ul}/\textsc{dl}
decoupling (\textsc{dud}e) instead of the traditional approach.

In figure~\ref{fig:comparison}, we show a comparison between \textsc{dud}e and
received power approaches for the uplink. No comparison for the downlink is
provided since it still uses received power approch in order to decide to
which base station a devices connects to.

\begin{figure}[!htb]
  \centering     
  \subfigure[Uplink \textsc{sinr} comparison.]{\label{fig:Uplink_SINR_comp}\includegraphics[scale=0.55]{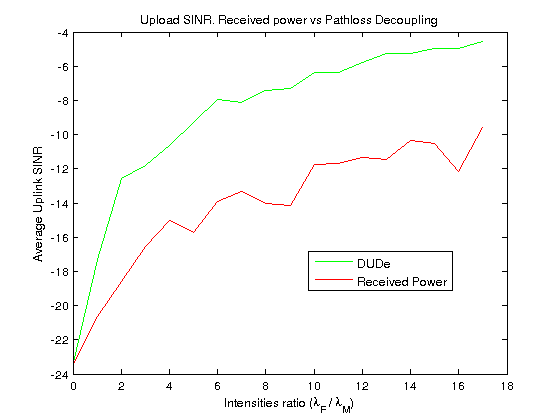}}
  \subfigure[Uplink rate (bits per second).]{\label{fig:Uplink_Throughput_comp}\includegraphics[scale=0.55]{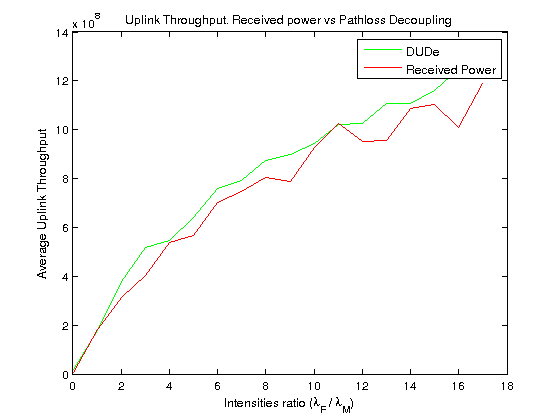}}
  \caption{Average throughput comparison.}
  \label{fig:comparison}
\end{figure}

As we can see, there exists a significant gain in both average \textsc{sinr}
($\simeq 4$dB) and average uplink throughput when we use decoupling.

\section{Conclusions}
We have assessed the performance gain obtained when using uplink/downlink
decoupling on a heterogeneous network deployment. The provided model can be
easily extended to implement new physical layer technologies such as
\textsc{mimo}, cell biasing, power control, etc. At the network level, device
to device communications, scheduling and complex cooperation techniques
between base stations can be included with minimal effort.

A random spatial distribution approach has been used to compute both the
position of base stations and user devices (\textsc{ppp}). These random models
based on stochastic geometry have shown their accuracy to model real-world
network deployments \cite{ktier}. Nevertheless, future work may include point
processes which model a minimum separation between points, i.e, \emph{Hard
  core point processes} (\textsc{hcpp}s). In that case, no two points of the
process coexist with a separating distance less than a predefined hard core
parameter. \emph{Poisson cluster processes} (\textsc{pcp}s), built from a
parent \textsc{ppp} can also be useful to model the clustering behaviour
observed on real cellular networks. The same discussion applies to user
devices.

It seems obvious that \textsc{ul}/\textsc{dl} decoupling allows us to
dynamically switch on or off some base stations which are not being used.

\chapter{Network utility maximisation}
\label{chap:num}

\section{DUDe as an optimization problem}

In emerging heterogeneous networks as the one presented in the latter chapter, using
association metrics like \textsc{sinr} or received power can lead
to load imbalance due to the disparate transmit powers of base stations. This
problem is illustrated in figure~\ref{fig:load_imbalance}. Despite the fact
that DUDe still yields substantial performance gains over co-located
association, the plot shows that a few Macro Base Stations are serving most of
the users whereas some other \textsc{bs}s are idle. Though this problem can be
partially solved by increasing the smaller base stations density, another
approach for granting the resources of \textsc{bs}s to the users could be
adopted.

\begin{figure}[!htb]
  \begin{center}
    \includegraphics[scale=0.7]{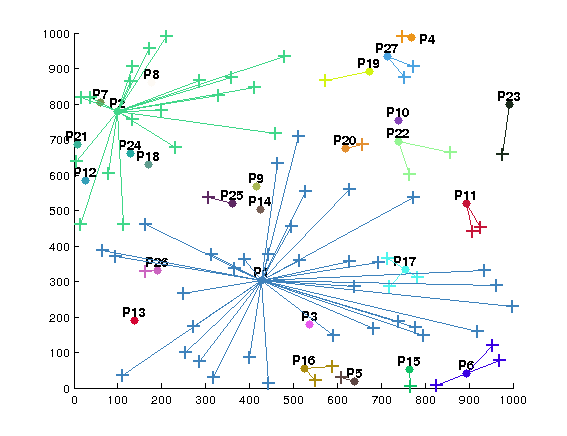}
    \caption{Load imbalance example. Downlink snapshot.}
    \label{fig:load_imbalance}
  \end{center}
\end{figure}

In this chapter, we shall reformulate the DUDe scheme under the framework and
tools of mathematical optimisation. Specifically, DUDe will be modeled as a
general \emph{network utility maximisation} (\textsc{num}) problem adapted to a
heterogeneous network architecture. Under \emph{single station association}
  (\textsc{ssa}) policies, a mobile user will be attached to a single base
station in each link (downlink and uplink). Consequently, the problem involves
computing both the optimal \emph{association} from mobile users to base
stations and optimal \emph{allocation} of the \textsc{bs} resources to every
user associated to it. In general, finding the optimal solution is a
combinatorial problem whose complexity grows exponentially with the number of
base stations and users. Since this approach is well-known to be
\textsc{np}-hard, we shall address this issue by introducing problem relaxations
and thereby reducing the complexity at the expense of obtaining tight upper
bounds. For a thorough review of NUM and its possible decomposition methods,
refer to~\cite{Palomar06}.

\section{System model and notation}

In this section, we are going to consider a network utility maximisation
problem (\textsc{num}) with an $\alpha$-proportional fair utility
function. As we mentioned before, each mobile user (\textsc{mu}) is allowed to
associate to only one \textsc{bs} per link (downlink and uplink). The total
utility a \textsc{mu} obtains from the network is computed by summing the
allocations granted by the serving \textsc{bs}s in downlink and uplink to that
\textsc{mu}. In addition, each allocation is then weighed by the instantaneous
rate on each link.

Consider a coverage area $A \subseteq \mathbb{R}^2$ for the wireless
network. Let $\mathcal{B_{\textsc{dl}}}$ be a set of \textsc{bs}s capable of
providing a downlink service, $\mathcal{B_{\textsc{ul}}}$ the set of
\textsc{bs}s capable of providing an uplink service, and $\mathcal{U}$ the
finite set of users. The \emph{instantaneous rate} for user
$u \in \mathcal{U}$ in the uplink or downlink direction with respect to the
serving \textsc{bs} $b$ is the ergodic (Shannon) channel capacity given by
\begin{align}
  r_{ub} &= \log_2 (1 + \underline{\mathsf{sinr}}_{ub}) \\
  r^\prime_{ub} &= \log_2 (1 + \overline{\mathsf{sinr}}_{ub})
\end{align}
where the signal-to-interference-plus-noise ratio in the downlink for a user
at location $y$ and a serving \textsc{bs} at $x_0$ is
\begin{equation}
  \label{eq:sinr}
  \underline{\mathsf{sinr}}_{ub} = \frac{P_i \,h_{x_0} \, \norm{x_0 -
      y}^{-\alpha}}{\sum^k_{j=1}\,\, \sum_{x \in \Phi_i,\, x\neq x_0} P_j
    \,h_{x_0} \, \norm{x - y}^{-\alpha}\,\, +\,\, \sigma^2}, 
\end{equation}
and $\overline{\mathsf{sinr}}_{ub}$ is defined similarly for the
uplink. In~\eqref{eq:sinr}, $j$ is the $j$-th tier, $\Phi_i$ denotes the
Poisson Point Process which models the location of tier $i$'s \textsc{bs}s and
$i \in \lbrace \textsc{m},\textsc{f} \rbrace$. We use $P_i$ to denote the
\textbf{transmission power} of a node at tier
$i \in \lbrace \textsc{m},\textsc{f}\rbrace$, where $i = \textsc{m}$ for macro
\textsc{bs}s and $i = \textsc{f}$ for femto \textsc{bs}s. The noise is
additive, Gaussian and has constant power $\sigma^2$. \textbf{Rayleigh fading}
is used to model the channel quality fluctuations between the \textsc{bs} and
the mobile device. $h_x \sim exp(1)$ describes the Rayleigh fading and it is
an exponentially distributed random variable with unit mean. In addition, we
assume path-loss exponent $\alpha \geq 2$. The same reasoning applies for the
uplink. In summary, we borrowed a very simmilar system model to that explained
in section~\ref{sec:system_model}.

Let $y_{ub}$ be the fraction of resources that \textsc{bs} $u$ grants to
\textsc{mu} $u$. This resource allocation fraction may represent a certain
amount of time to transmit depending on the multiplexing scheme the
\textsc{bs} is using.  Then, the sum-rates from the downlink and the uplink
for user $u$ are, respectively,
\begin{align}
  R_{u} (\mathbf{y}_1) &= \sum_{b \in \mathcal{B}_\text{DL}} r_{ub} y_{ub} \\
  R^\prime_{u} (\mathbf{y}_2) &= \sum_{b \in \mathcal{B}_\text{UL}}
                                     r^\prime_{ub} y^\prime_{ub}
\end{align}
where
$\mathbf{y}_1 = (y_{ub}, u \in \mathcal{U}, b \in \mathcal{B}_{\textsc{dl}})$
and
$\mathbf{y}_2 = (y_{ub}, u \in \mathcal{U}, b \in
\mathcal{B}_{\textsc{ul}})$
are the $|\mathcal{U}| \times |\mathcal{B}_{\textsc{dl}}|$ and
$|\mathcal{U}| \times |\mathcal{B}_{\textsc{ul}}|$ resource allocation
matrices for the downlink and for the uplink, respectively. In the following,
scalar or vector symbols with prime superindices will denote quantities for
the uplink channels, and vectors are denoted with boldface symbols. The first
constraint we need to define concerns the set of resources a \textsc{bs} can
offer. The maximum amount of resources that a \textsc{bs} can allocate is
normalised and set to unity. Therefore, the set of feasible allocations for
each link are\footnote{Note that both sets $\mathcal{Y_{\textsc{dl}}}$ and $\mathcal{Y_{\textsc{ul}}}$ are closed and convex.}
\begin{align*}
  \mathcal{Y_{\textsc{dl}}} &= \lbrace \mathbf{y}_1 : \mathbf{y}_1 \in
  \mathbb{R}_+^{|\mathcal{U}| \times  |\mathcal{B}_{\textsc{dl}}|},
  \sum_{u}y_{ub} = 1, \forall \, b \in \mathcal{B}_{\textsc{dl}} \rbrace \\
  \mathcal{Y_{\textsc{ul}}} &= \lbrace \mathbf{y}_2 : \mathbf{y}_2
  \in \mathbb{R}_+^{|\mathcal{U}| \times |\mathcal{B}_{\textsc{ul}}|}, \sum_u
  y^\prime_{ub} = 1, \forall \, b \in \mathcal{B}_{\textsc{ul}} \rbrace 
\end{align*}
Our goal is to maximise the network utility function, which is defined as the
sum of the individual users' utility function plus an additional term. For the
individual utility functions, we will use the class of $\alpha$-proportional
fair utility functions~\cite{Mo00,Uchida11}, defined as follows
\begin{equation}
  \label{eq:utility_function}
  U_{\alpha}(R) = \begin{cases}
    \frac{R^{1-\alpha}}{1- \alpha}, & \quad \alpha \geq 0,\, \alpha \neq 1 \\
    \log(R), & \quad \alpha = 1.
  \end{cases}
\end{equation}
Here, $R$ denotes the rate the user is perceiving from the network either in
uplink or downlink.

\section{\textsc{ssa} with optimal resource allocation and symmetrical link balance}
In this section, we assume that the association of users to their respective serving base stations is fixed. In this situation, the only way of changing the performance of the network is by tweaking the resource allocation parameters for each user, that is, the resource allocation vectors. 

\subsection{Problem formulation}

We formulate the complete optimisation problem using all the elements we
discussed in the last section. First, we need to capture the concept of each
\textsc{mu} associating with at most one base station on each link.  To that
end, define two association restriction vectors
$\mathbf{z}_1 = (z_{ub}, u \in \mathcal{U}, b \in \mathcal{B}_{\textsc{dl}})$
and
$\mathbf{z}_2 = (z^\prime_{ub}, u \in \mathcal{U}, b \in
\mathcal{B}_{\textsc{ul}})$,
one for each link. The elements of vectors $\mathbf{z}_1$ and
$\mathbf{z}_2$ are binary, so $z_{ub} = 1$ means that user $u$ is
associated to base station $b$ in the downlink. Thus, under a \textsc{ssa}
association rule, the sets $\mathcal{Z_{\textsc{dl}}}$ and
$\mathcal{Z_{\textsc{ul}}}$ of feasible associations are
\begin{align}
  \label{eq:feasible-dl}
  \mathcal{Z_{\textsc{dl}}} &= \lbrace \mathbf{z}_1: \mathbf{z}_1 \in
  \mathbb{Z}_+^{|\mathcal{U}| \times |\mathcal{B}_{\textsc{dl}}|}, \sum_{b \in
    \mathcal{B}_\text{DL}} z_{ub} = 1, \forall \, u \in \mathcal{U} \rbrace \\
  \label{eq:feasible-ul}
  \mathcal{Z_{\textsc{ul}}} &= \lbrace \mathbf{z}_2: \mathbf{z}_2 \in
  \mathbb{Z}_+^{|\mathcal{U}| \times |\mathcal{B}_{\textsc{ul}}|}, \sum_{b \in
  \mathcal{B}_\text{UL}} z_{ub} = 1, \forall \, u \in \mathcal{U} \rbrace.
\end{align}

The problem of optimal resource allocation under a \textsc{ssa} policy is that
of finding an optimal allocation of resources $\mathbf{y}_1$ and
$\mathbf{y}_2$ that maximises the sum utility of the sum rates at each
one of the \textsc{mu}s in the network. In addition, we want to positively
reward symmetrical downlink and uplink rates, that is, we want to minimise the
difference between them.

After introducing the \textsc{ssa}
constraints~\eqref{eq:feasible-dl}-\eqref{eq:feasible-ul}, the sum rates of
user $u$ for both downlink and uplink may now be expressed as
\begin{align}
  \label{eq:rates_dl}
  R_{u} (\mathbf{y}_1, \mathbf{z}_1) = \sum_{b \in \mathcal{B}_\textsc{dl}}
  r_{ub} y_{ub} z_{ub}, \\
  \label{eq:rates_ul}
  R^\prime_{u} (\mathbf{y}_2, \mathbf{z}_2) = \sum_{b \in
  \mathcal{B}_\textsc{ul}} r^\prime_{ub} y^\prime_{ub} z^\prime_{ub}.
\end{align}
Accordingly, the network utility maximisation (\textsc{num}) problem
under single station association (\textsc{ssa}) policy is
\begin{equation}\label{eq:main_v1}
\displaystyle
  f_\alpha^{\textsc{ssa}} \equiv \underset{\substack{\mathbf{y}_1,
      \mathbf{y}_2\\ \mathbf{z}_1, \mathbf{z}_2}}{\max} \sum_u
    U_{\alpha}(R_u (\mathbf{y}_1, \mathbf{z}_1)) + U_{\alpha}(R^\prime_u
    (\mathbf{y}_2, \mathbf{z}_2)) - A |R_u (\mathbf{y}_1, \mathbf{z}_1)
    - R^\prime_{u} (\mathbf{y}_2, \mathbf{z}_2)|
\end{equation}
such that
\begin{subequations}
  \begin{equation}
    \label{eq:constraint-ra-dl}
    \sum_{u \in \mathcal{U}} y_{ub} = 1, \forall\, b \in
    \mathcal{B}_{\textsc{dl}}, \mathbf{y}_1 \in \mathcal{Y}_\textsc{dl}
  \end{equation}
  \begin{equation}
    \label{eq:constraint-ra-ul}
    \sum_{u \in \mathcal{U}} y^\prime_{ub} = 1, \forall\, b \in
    \mathcal{B}_{\textsc{ul}}, \mathbf{y}_2 \in \mathcal{Y}_\textsc{ul}
  \end{equation}
  \begin{equation}
    \label{eq:constraint-ssa-dl}
    \sum_{b \in \mathcal{B}_\textsc{dl}} z_{ub} = 1, \forall\, u \in
    \mathcal{U}, \mathbf{z}_1 \in \mathcal{Z}_\textsc{dl}
  \end{equation}
  \begin{equation}
    \label{eq:constraint-ssa-ul}
    \sum_{b \in \mathcal{B}_\textsc{ul}} z^\prime_{ub} = 1, \forall\, u \in
    \mathcal{U}, \mathbf{z}_2 \in \mathcal{Z}_\textsc{ul}
  \end{equation}
\end{subequations}
where $A$ is a positive constant. Note that under feasible \textsc{ssa}
allocations for each link, the summation over $b$ in~\eqref{eq:rates_dl}
and~\eqref{eq:rates_ul} contains only one positive term each. That is, each
mobile user is associated with exactly one base station per link. Therefore,
in general we can rewrite the sum utility function as
\begin{equation}
  \sum_u U_{\alpha} \bigl( R(\mathbf{y}_1, \mathbf{z}_1) \bigr) = \sum_u
  U_{\alpha} \bigl( \sum_b r_{ub} y_{ub} z_{ub} \bigr) = \sum_{u,b}
  U_{\alpha}(r_{ub} y_{ub}) z_{u}.
\end{equation}
Thus, problem~\eqref{eq:main_v1} may be equivalently written as
\begin{equation}
  \label{eq:main_v2}
  f_\alpha^{\textsc{ssa}} \equiv \max \sum_{u,b} \bigl( U_{\alpha}(r_{ub} y_{ub})
  z_{ub} + U_{\alpha}(r^\prime_{ub} y^\prime_{ub}) z^\prime_{ub} \bigr) - A
  |\sum_{u,b} r_{ub} y_{ub} z_{ub} - r^\prime_{ub} y^\prime_{ub} z^\prime_{ub}| 
\end{equation}
with
constraints~\eqref{eq:constraint-ra-dl}-\eqref{eq:constraint-ssa-ul}. Note
that~\eqref{eq:main_v1} or~\eqref{eq:main_v2} is a mixed optimization problem
with both integer and continuous variables. Due to the combinatorial nature of
the objective function, the optimal solution is, except for degenerate cases,
hard to find. Note also that the last term in~\eqref{eq:main_v2} couples the
variables $\mathbf{y}_1$ and $\mathbf{y}_2$, so $f_\alpha^\textsc{ssa}$ is
not separable. The term
$|R_u(\mathbf{y}_1, \mathbf{z}_1) - R_u(\mathbf{y}_2, \mathbf{z}_2)|$
in~\eqref{eq:main_v1} quantifies the \emph{user rate imbalance}. i.e., the
asymmetry between the downlink and uplink rates; hence, the last term in the
objective function introduces a penalty on the \emph{network imbalance},
defined as the aggregate sum of users' imbalances\footnote{Another possible interpretation for the last term is that of regularisation, i.e., the introduction of a penalty term so that the preferred solutions are almost symmetrical in the user rates.}. An alternative statement of
the problem can be obtained by defining the new utility function
\begin{equation}
  V_\alpha(\mathbf{y}_1, \mathbf{y}_2, \mathbf{z}_1, \mathbf{z}_2)
  = \sum_b r_{ub} y_{ub} z_{ub} + \sum_b r^\prime_{ub} y^\prime_{ub}
  z^\prime_{ub} - A | \sum_b r_{ub} y_{ub} z_{ub} - r^\prime_{ub}
  y^\prime_{ub} z^\prime_{ub} |
\end{equation}
so that
\begin{equation}
  f_\alpha^\textsc{ssa} \equiv \max \sum_{u \in \mathcal{U}}
  V_\alpha(\mathbf{y}_1, \mathbf{y}_2, \mathbf{z}_1, \mathbf{z}_2)
\end{equation}
with the same constraints as before.

In spite of the complexity of the optimization problem, notice that if a
\emph{fixed} association between users and base stations is known, the problem
simplifies remarkably.
\begin{theorem}
  \label{thm:ssa-convexity}
  Choose feasible association schemes $\mathbf{z}_1$ and $\mathbf{z}_2$ for
  the downlink and uplink transmissions. Then, problem
  $f_\alpha^{\textsc{ssa}}$ is convex.
\end{theorem}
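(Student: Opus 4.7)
The plan is to argue that, once the associations are frozen, the program reduces to maximising a concave objective over a convex set, which is the standard characterisation of a convex optimisation problem. I would proceed in three steps: isolate the feasible set, analyse each piece of the objective for concavity, and combine.

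First I would note that with $\mathbf{z}_1, \mathbf{z}_2$ fixed, the only free variables are $\mathbf{y}_1$ and $\mathbf{y}_2$, and the feasible set collapses to $\mathcal{Y}_{\textsc{dl}} \times \mathcal{Y}_{\textsc{ul}}$, which the footnote already records as convex (it is the product of two probability-simplex-type sets, each cut out by non-negativity and a single affine equality per base station). So the constraint side is handled with essentially no work.

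Next I would turn to the objective. Because $\mathbf{z}_1$ is fixed, $R_u(\mathbf{y}_1,\mathbf{z}_1) = \sum_b r_{ub} y_{ub} z_{ub}$ is an \emph{affine} function of $\mathbf{y}_1$; identically $R^\prime_u$ is affine in $\mathbf{y}_2$. The $\alpha$-proportional fair function $U_\alpha$ in~\eqref{eq:utility_function} is concave and nondecreasing on $(0,\infty)$ for every $\alpha\ge 0$ (immediate from $U_\alpha^{\prime\prime}(R) = -\alpha R^{-\alpha-1} \le 0$ when $\alpha\ne 1$, and from $(\log R)^{\prime\prime} = -R^{-2} < 0$ when $\alpha=1$). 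Composing a concave nondecreasing function with an affine map preserves concavity, so each $U_\alpha(R_u(\mathbf{y}_1,\mathbf{z}_1))$ and $U_\alpha(R^\prime_u(\mathbf{y}_2,\mathbf{z}_2))$ is concave in $(\mathbf{y}_1,\mathbf{y}_2)$. For the penalty, the difference $R_u(\mathbf{y}_1,\mathbf{z}_1) - R^\prime_u(\mathbf{y}_2,\mathbf{z}_2)$ is affine in $(\mathbf{y}_1,\mathbf{y}_2)$, so $|\,R_u - R^\prime_u\,|$ is convex and hence $-A\,|R_u - R^\prime_u|$ is concave for any $A>0$. Summing over $u\in\mathcal{U}$, the objective is a sum of concave functions and is therefore concave.

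Finally I would conclude: maximising a concave function over a convex set is, by definition, a convex optimisation problem, so $f_\alpha^{\textsc{ssa}}$ with fixed feasible $\mathbf{z}_1,\mathbf{z}_2$ is convex. I do not foresee a genuine obstacle here; the only subtlety worth flagging is the domain of $U_\alpha$ when $\alpha \ge 1$, where $R=0$ sits on the boundary. I would deal with this either by restricting attention to users with a strictly positive serving rate $r_{ub}z_{ub}>0$, or by adopting the standard extended-value convention $U_\alpha(0)=-\infty$, which keeps the extended objective concave without altering the argument.
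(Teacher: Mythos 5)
Your argument is correct and follows essentially the same route as the paper's proof (and its Appendix~\ref{app:a}): concavity of the utility terms as compositions of a concave function with affine maps of $\mathbf{y}_1,\mathbf{y}_2$, concavity of the penalty as the negative absolute value of an affine expression, and convexity of the feasible sets $\mathcal{Y}_\textsc{dl}$, $\mathcal{Y}_\textsc{ul}$. Your added remark about the boundary behaviour of $U_\alpha$ at $R=0$ is a sensible refinement the paper glosses over, but it does not change the argument.
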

\begin{proof}
  If $\mathbf{z}_1$ and $\mathbf{z}_2$ are known, the first term in the
  objective function is the composition of a concave function with an affine
  function of $\mathbf{y}_1$, the second term is similar and the third summand
  is linear in $(\mathbf{y}_1, \mathbf{y}_2)$. Therefore, the objective
  function is concave. The equality constraints~\eqref{eq:constraint-ra-dl}
  and~\eqref{eq:constraint-ra-ul} are linear and the feasible sets
  $\mathcal{Y}_\textsc{dl}$ and $\mathcal{Y}_\textsc{ul}$ are easily seen to
  be convex. For a more detailed proof, we refer the reader to
  Appendix~\ref{app:a}.
\end{proof}

\subsection{Problem solution}

Under fixed association, the solution to $f_\alpha^\textsc{ssa}$ can be
found by means of standard convex optimization theory. Such solution gives
insight into structural properties of the optimal solution to the \textsc{ssa}
problem. Actually, we restate the problem in a slightly more general form
\begin{equation}
  \label{eq:main_v3}
  f_\alpha^\textsc{ssa}(\epsilon) \equiv \max_{\mathbf{y}_1, \mathbf{y}_2}
  \sum_u U_\alpha\bigl( R_u(\mathbf{y}_1, \mathbf{z}_1) \bigr) + U_\alpha\bigl(
  R_u(\mathbf{y}_2, \mathbf{z}_2) \bigr) - A \sum_u |
  R_u(\mathbf{y}_1, \mathbf{z}_1) - R_u ( \mathbf{y}_2,
  \mathbf{z}_2) |^{1 + \epsilon}
\end{equation}
with constraints $\mathbf{y}_1 \succeq 0$, $\mathbf{y}_2 \succeq 0$ and
$\sum_u y_{ub} = \sum_{u} y^\prime_{ub} = 1$ for all \textsc{BS} $b$, where
$\epsilon > 0$ is a sufficiently small number.\footnote{This is because
  $|x|^q$ is differentiable at $x = 0$ if $q > 1$.} Clearly,
$f_\alpha^\textsc{ssa}(\epsilon)$ is not decomposable in the variables
$\mathbf{y}_1$, $\mathbf{y}_2$, but its optimal solution may be
characterised by computing explicitly the Karush-Kuhn-Tucker (KKT) conditions
for the Lagrangian which, by Theorem~\ref{thm:ssa-convexity}, are sufficient and necessary for optimality. The
Lagrangian is\footnote{We omit the association vectors $\mathbf{z}_1$ and
  $\mathbf{z}_2$ from the notation for simplicity.}
\begin{equation}
  \label{eq:lagrangian}
  \begin{aligned}
  L(\mathbf{y}_1, \mathbf{y}_2, \boldsymbol\lambda,
  \boldsymbol\lambda^\prime, \boldsymbol\mu, \boldsymbol\mu^\prime) =
  &-f_\alpha^\textsc{ssa}(\epsilon) + \sum_{b_{\textsc{dl}}} \lambda_b (\sum_u y_{ub} - 1) + \sum_{b_{\textsc{ul}}}\lambda^\prime_b (\sum_u y^\prime_{ub} - 1) \\
  &+ \sum_{ub_{\textsc{dl}}} \mu_{u}
  y_{ub} + \sum_{ub_{\textsc{ul}}} \mu^\prime_{u} y^\prime_{ub}.
  \end{aligned}
\end{equation}
From this, there must exist Lagrange multipliers vectors $\boldsymbol\lambda$,
$\boldsymbol\lambda^\prime$, $\boldsymbol\mu$ and $\boldsymbol\mu^\prime$ such
that
\begin{subequations}\label{eq:kkts}
  \begin{equation}
    \label{eq:kkt-stationarity}
    \frac{\partial L}{\partial y_{ub}} = \frac{\partial L}{\partial
      y^\prime_{ub}} = 0, \quad \forall u, b \quad(\text{stationarity})
  \end{equation}
  \begin{equation}
        \label{eq:kkt-primal-feas}
    \begin{cases}
      & \sum_u y_{ub} = \sum_u y^\prime_{ub} = 1 \\
      & y_{ub} \geq 0, y^\prime_{ub} \geq 0
    \end{cases} \quad (\text{primal feasibility})
  \end{equation}
  \begin{equation}
    \label{eq:kkt-dual-feas}
    \mu_{ub} \geq 0, \mu^\prime_{u,b} \geq 0 \quad (\text{dual feasibility})
  \end{equation}
  \begin{equation}
    \label{eq:kkt-slackness}
    \mu_{u} y_{ub} = \mu^\prime_{u} y^\prime_{ub} = 0 \quad
    (\text{complementary slackness}).
  \end{equation}
\end{subequations}

  When $\alpha > 0$, $\alpha \neq 1$, the utility function is
  
  \begin{equation*}
  U_{\alpha}(R) = \frac{R^{1 - \alpha}}{1 - \alpha}.
  \end{equation*}   

Substituting the utility function expression into the Lagrangian leads to

\begin{equation*}
\begin{aligned}
L(\mathbf{y}_1, \mathbf{y}_2, \boldsymbol\lambda,
  \boldsymbol\lambda^\prime, \boldsymbol\mu, \boldsymbol\mu^\prime) &=
  -\sum_{ub} \bigl[ \frac{(r_{ub} y_{ub})^{1-\alpha}}{1-\alpha} z_{ub}\quad + \quad \frac{(r^\prime_{ub} y^\prime_{ub})^{1-\alpha}}{1-\alpha} z^\prime_{ub}  \bigr] \\
  &+ A\sum_{ub} \bigl[|r_{ub} y_{ub} z_{ub} - r^\prime_{ub} y^\prime_{ub} z^\prime_{ub} |^{1+\epsilon}\bigr] + \sum_{b_{\textsc{dl}}} \lambda_b (\sum_u y_{ub} - 1) \\
  &+ \sum_{b_{\textsc{ul}}} \lambda^\prime_b (\sum_u y^\prime_{ub} - 1) + \sum_{ub_{\textsc{dl}}} \mu_{u} y_{ub} + \sum_{ub_{\textsc{ul}}} \mu^\prime_{u} y^\prime_{ub}.
  \end{aligned}
\end{equation*}

Its partial derivatives, focusing only on one user are

\begin{subequations}
  \begin{equation}
  \begin{aligned}
   \frac{\partial L}{\partial y_{ub}} = &A (1+\epsilon)\,|r_{ub}y_{ub}z_{ub} - r^\prime_{ub} y^\prime_{ub} z^\prime_{ub}|^{\epsilon}\,r_{ub}z_{ub} \operatorname{sign}(r_{ub}y_{ub}z_{ub} - r^\prime_{ub} y^\prime_{ub} z^\prime_{ub}) \\ 
    &-r_{ub}^{1-\alpha}y_{ub}^{-\alpha} z_{ub} + \lambda_b + \mu_{u}  
  \end{aligned}
  \end{equation}
  \begin{equation}
  \begin{aligned}
\frac{\partial L}{\partial y^\prime_{ub}} = &- A (1+\epsilon)\,|r_{ub}y_{ub}z_{ub} - r^\prime_{ub} y^\prime_{ub} z^\prime_{ub}|^{\epsilon}\,r^\prime_{ub}z^\prime_{ub}\operatorname{sign}(r_{ub}y_{ub}z_{ub} - r^\prime_{ub} y^\prime_{ub} z^\prime_{ub}) \\
 &-r_{ub}^\prime{}^{1-\alpha} y_{ub}^\prime{}^{-\alpha} z^\prime_{ub}  + \lambda^\prime_b + \mu^\prime_{u}  
  \end{aligned}
  \end{equation}
\end{subequations}

For a given pair user $u$, base station $b$ assume that $y^\ast_{ub} > 0$, which partially satisfies primal feasibility~\eqref{eq:kkt-primal-feas}, and set
$\mu^\star_{u,b} = 0$, satisfying dual feasibility~\eqref{eq:kkt-dual-feas}
and complementary slackness~\eqref{eq:kkt-slackness}. In addition, assuming $\epsilon \to 0$ along with the stationary conditions~\eqref{eq:kkt-stationarity}, the above equations become 

\begin{subequations}
  \begin{equation}
    \frac{r_{ub}^{1-\alpha}}{y_{ub}^{\alpha}} z_{ub} = A r_{ub}z_{ub}\operatorname{sign}(r_{ub}y_{ub}z_{ub} - r^\prime_{ub} y^\prime_{ub} z^\prime_{ub}) + \lambda_b^\star
  \end{equation}
  \begin{equation}
\frac{r_{ub}^\prime{}^{1-\alpha}}{y_{ub}^\prime{}^{\alpha}} z^\prime_{ub} = - A r^\prime_{ub}z^\prime_{ub}\operatorname{sign}(r_{ub}y_{ub}z_{ub} - r^\prime_{ub} y^\prime_{ub} z^\prime_{ub}) + \lambda_b^\prime{}^\star 
  \end{equation}
\end{subequations}

This yields\footnote{Recall that $y_{ub}$ stands for the uplink allocation granted by base station $b \in \mathcal{B}_{\textsc{dl}}$ to user $u$ and $y^\prime_{ub}$ stands for the uplink allocation granted by base station $b \in \mathcal{B}_{\textsc{ul}}$ to user $u$, where $b$ may not be the same for each case.}
\begin{subequations}

  \begin{equation}  \label{eq:solution-dl}
    y^\star_{ub} = \begin{cases}
      \bigl(\frac{r_{ub}^{1-\alpha}}{A r_{ub} \operatorname{sign}(r_{ub}y_{ub} - r^\prime_{ub} y^\prime_{ub})+ \lambda_b^\star}\bigr)^{1/\alpha}, & \quad\text{if user $u$ is associated to $b$ in the downlink} \\ 
      0, & \quad\text{otherwise}
    \end{cases}
  \end{equation}

  \begin{equation}  \label{eq:solution-ul}
    y_{ub}^\prime{}^\star = \begin{cases}
      \bigl(\frac{r_{ub}^\prime{}^{1-\alpha}}{- A r^\prime_{ub} \operatorname{sign}(r_{ub}y_{ub} - r^\prime_{ub} y^\prime_{ub})+ \lambda_b^\prime{}^\star}\bigr)^{1/\alpha}, & \quad\text{if user $u$ is associated to $b$ in the uplink} \\
      0, & \quad\text{otherwise}.
    \end{cases}
  \end{equation}
\end{subequations}

Substituting the above equations into the primal feasibility conditions (\ref{eq:kkt-primal-feas}) we have, for each base station

\begin{subequations}

  \begin{equation}\label{eq:multipliers-dl}
    \sum_u \bigl(\frac{r_{ub}^{1-\alpha}}{A r_{ub} \operatorname{sign}(r_{ub}y_{ub} - r^\prime_{ub} y^\prime_{ub})+ \lambda_b^\star}\bigr)^{1/\alpha} = 1
  \end{equation}

  \begin{equation}  \label{eq:multipliers-ul}
	\sum_u \bigl(\frac{r_{ub}^\prime{}^{1-\alpha}}{- A r^\prime_{ub} \operatorname{sign}(r_{ub}y_{ub} - r^\prime_{ub} y^\prime_{ub})+ \lambda_b^\prime{}^\star}\bigr)^{1/\alpha} = 1
  \end{equation}
\end{subequations}

which are the implicit expressions for $\lambda_b^\star$ and $\lambda_b^\prime{}^\star$ multipliers.

When $\alpha = 1$ the utility function is $U_{\alpha}(R) = \log(R)$. We can specialise the above solution as follows

\begin{subequations}
  \begin{equation}  \label{eq:solution-dl-log}
    y^\star_{ub} = \begin{cases}
      \frac{1}{A r_{ub}\operatorname{sign}(r_{ub}y_{ub} - r^\prime_{ub} y^\prime_{ub})+ \lambda_b^\star}, & \quad\text{if $u$ is associated to $b$ in the
        downlink} \\ 
      0, & \quad\text{otherwise}
    \end{cases}
  \end{equation}
  \begin{equation}  \label{eq:solution-ul-log}
    y_{ub}^\prime{}^\star = \begin{cases}
      \frac{1}{- A r^\prime_{ub}\operatorname{sign}(r_{ub}y_{ub} - r^\prime_{ub} y^\prime_{ub})+ \lambda_b^\prime{}^\star}, & \quad\text{if $u$ is associated to $b$ in the uplink} \\
      0, & \quad\text{otherwise}.
    \end{cases}
  \end{equation}
\end{subequations}

And the same idea holds for the multipliers

\begin{subequations}

  \begin{equation}\label{eq:multipliers-dl-log}
    \sum_u \frac{1}{A r_{ub}\operatorname{sign}(r_{ub}y_{ub} - r^\prime_{ub} y^\prime_{ub})+ \lambda_b^\star} = 1
  \end{equation}
  \begin{equation} \label{eq:multipliers-ul-log}
	\sum_u \frac{1}{- A r^\prime_{ub}\operatorname{sign}(r_{ub}y_{ub} - r^\prime_{ub} y^\prime_{ub})+ \lambda_b^\prime{}^\star} = 1 .
  \end{equation}
\end{subequations}

\subsection{Explicit solution of $f_\alpha^{\textsc{ssa}}$}
Although the solution obtained is in a non-explicit form, this does not preclude the posibility to obtain a closed solution to the \textsc{ssa} problem with optimal resource allocation and user-balanced load for any real case. Our first approach consisted of solving the set of nonlinear equations \eqref{eq:solution-dl}, \eqref{eq:solution-ul}, \eqref{eq:multipliers-dl}, \eqref{eq:multipliers-ul} using a \textsc{matlab} toolbox for such purpose. We have to take into account that the number of equations grows with the number of base stations and users. In particular, the number of equations is one per user and link \eqref{eq:solution-dl}, \eqref{eq:solution-ul}, and another one per base station and link \eqref{eq:multipliers-dl}, \eqref{eq:multipliers-ul}. Therefore, the total number of nonlinear equations is: 

\begin{equation}
2\cdot|\mathcal{B}| + 2\cdot|\mathcal{U}|,
\end{equation}

where $|\mathcal{B}|$ denotes the total number of base stations and $|\mathcal{U}|$ is the total number of users in the network. In addition, the bigger the number of users, the more complex become \eqref{eq:multipliers-dl} and \eqref{eq:multipliers-ul}. These equations constitute the system of nonlinear equations that we must solve in order to obtain the optimal resource allocation when the association of mobile users to their respective base stations is fixed. As we expected, this seems to be a huge number of equations for the \textsc{matlab} toolbox to handle with. Such was the case that we only found a closed solution, that is, the solver managed to converge to an optimal solution only for extremely simple topologies with a low number of base stations and users and small values of $\alpha$.

Nevertheless, we were able to devise a solution to partially avoid this issue. First of all, given a user $u$, it seems clear that: 
\begin{equation}\label{eq:approximation}
\operatorname{sign}(r_{ub}y_{ub} - r^\prime_{ub} y^\prime_{ub}) = \operatorname{sign}(r_{ub} - r^\prime_{ub}),
\end{equation}
if the optimal allocation variables $y_{ub}$ and $y^\prime_{ub}$ do not invert the sign of the sum after the resource granting round. Hence, the previously mentioned set of equations can be wrtitten as follows:

\begin{subequations}
  \begin{equation} \label{eq:solution-dl-tractable}
    y^\star_{ub} = \begin{cases}
      \bigl(\frac{r_{ub}^{1-\alpha}}{A r_{ub} \operatorname{sign}(r_{ub} - r^\prime_{ub})+ \lambda_b^\star}\bigr)^{1/\alpha}, & \quad\text{if user $u$ is associated to $b$ in the downlink} \\ 
      0, & \quad\text{otherwise}
    \end{cases}
  \end{equation}
  \begin{equation}\label{eq:solution-ul-tractable}
    y_{ub}^\prime{}^\star = \begin{cases}
      \bigl(\frac{r_{ub}^\prime{}^{1-\alpha}}{- A r^\prime_{ub} \operatorname{sign}(r_{ub} - r^\prime_{ub})+ \lambda_b^\prime{}^\star}\bigr)^{1/\alpha}, & \quad\text{if user $u$ is associated to $b$ in the uplink} \\
      0, & \quad\text{otherwise} 
    \end{cases}
  \end{equation}
\end{subequations}

and

\begin{subequations}

  \begin{equation}\label{eq:multipliers-dl-tractable}
    \sum_u \bigl(\frac{r_{ub}^{1-\alpha}}{A r_{ub} \operatorname{sign}(r_{ub} - r^\prime_{ub})+ \lambda_b^\star}\bigr)^{1/\alpha} = 1
  \end{equation}
 
  \begin{equation} \label{eq:multipliers-ul-tractable}
	\sum_u \bigl(\frac{r_{ub}^\prime{}^{1-\alpha}}{- A r^\prime_{ub} \operatorname{sign}(r_{ub} - r^\prime_{ub})+ \lambda_b^\prime{}^\star}\bigr)^{1/\alpha} = 1.
  \end{equation}
\end{subequations}

Note that now, solving the system of equations has become a substantially simpler task. We can numerically solve \eqref{eq:multipliers-dl-tractable} and \eqref{eq:multipliers-ul-tractable} and thus obtain the uplink and downlink multipliers for each base station. Afterwards, we may substitute the multipliers back into \eqref{eq:solution-dl-tractable} and \eqref{eq:solution-ul-tractable} in order to compute the optimal resource allocation for each user. A pair of custom algorithms Alg.~\ref{alg:alg1-dl} and Alg.~\ref{alg:alg1-ul} were implemented for this purpose.
Consequently, if \eqref{eq:approximation} holds for all users, this solution is also a solution of the original problem, that is, it satisfies \eqref{eq:solution-dl}, \eqref{eq:solution-ul}, \eqref{eq:multipliers-dl} and \eqref{eq:multipliers-ul}. If it holds for most of the users in the network, this solution seems to be a good approximation of the optimal one, as we are about to check in the following section. As this particular study is not the main goal of this work, no further actions were taken in order to try getting an exact solution.
\vspace*{0.5cm}
\begin{algorithm}[H]\label{alg:alg1-dl}
\SetKwData{F}{f}\SetKwData{UserAllocation}{DownlinkUserAllocation}
 \KwData{$A$; $\alpha-fairness$; User-Rates $\in
  \mathbb{R}_+^{|\mathcal{U}| \times  |\mathcal{B_{\textsc{dl}}}|}$.}
 \KwResult{\textsc{BS}s' downlink multipliers of size $1\times |\mathcal{B_{\textsc{dl}}}|$; DLUserAllocation $\in \mathbb{R}_+^{|\mathcal{U}| \times  |\mathcal{B_{\textsc{dl}}}|} $.}
 \BlankLine
 $Downlink\textsc{bs}Multipliers \leftarrow 0 $\;

\For{$bs \leftarrow  1\, \KwTo\, |\mathcal{B_{\textsc{dl}}}|$}{
$\F \leftarrow$ 0\;
\For{$user \leftarrow  1\, \KwTo\, |\mathcal{U}|$}{

\If{ $user$ is associated to  $bs$ in downlink}{
$\F = \F + \UserAllocation$\tcc*{DownlinkUserAllocation = \eqref{eq:solution-dl-tractable}}
}

}

\tcc{Compute \textsc{bs} multiplier $\lambda$}
$\lambda = $ Numerically solve $\F$ for $\lambda$, e.g., using bisection\;

\For{$user \leftarrow  1\, \KwTo\, |\mathcal{U}|$}{

\eIf{ $user$ is associated to  $bs$}{
$\UserAllocation$ = \eqref{eq:solution-dl-tractable}\tcc*{Replacing $\lambda$ with the value computed before.}
}{
$\UserAllocation = 0$\;
}

}

}
 \caption{Computing optimal allocation for downlink.}
\end{algorithm}

\begin{algorithm}[H]\label{alg:alg1-ul}
\SetKwData{F}{f}\SetKwData{UserAllocation}{UplinkUserAllocation}
 \KwData{$A$; $\alpha-fairness$; User-Rates $\in
  \mathbb{R}_+^{|\mathcal{U}| \times  |\mathcal{B_{\textsc{ul}}}|}$.}
 \KwResult{\textsc{BS}s' uplink multipliers of size $1\times |\mathcal{B_{\textsc{ul}}}|$; UplinkUsersAllocation $\in \mathbb{R}_+^{|\mathcal{U}| \times  |\mathcal{B_{\textsc{ul}}}|} $.}
 \BlankLine
 $Uplink\textsc{bs}Multipliers \leftarrow 0 $\;

\For{$bs \leftarrow  1\, \KwTo\, |\mathcal{B_{\textsc{ul}}}|$}{
$\F \leftarrow$ 0\;
\For{$user \leftarrow  1\, \KwTo\, |\mathcal{U}|$}{

\If{ $user$ is associated to  $bs$ in uplink}{
$\F = \F + \UserAllocation$\tcc*{UplinkUserAllocation = \eqref{eq:solution-dl-tractable}}
}

}

\tcc{Compute \textsc{bs} multiplier $\lambda^\prime$}
$\lambda^\prime = $ Numerically solve $\F$ for $\lambda^\prime$, e.g., using bisection\;

\For{$user \leftarrow  1\, \KwTo\, |\mathcal{U}|$}{

\eIf{ $user$ is associated to  $bs$}{
$\UserAllocation$ = \eqref{eq:solution-dl-tractable}\tcc*{Replacing $\lambda$ with the value computed before.}
}{
$\UserAllocation = 0$\;
}

}

}

 \caption{Computing optimal allocation for uplink.}
\end{algorithm}

\subsection{Validating the solution}
Our goal is to assess the validity of the obtained solution as well as to show how the different values of the variables that parameterise the solution affect both the performance of the overall network and each one of the users. Namely, we are going to present several performance measurements that may help a hypothetical mobile operator to choose the appropriate values for $A$ and $\alpha$ depending on the particular network needs in every specific moment. As we will show next, on the one hand $A$ parameter controls the weight we give to the network asymmetry penalisation term. On the other hand, the $\alpha$ parameter in the proportional fair utility function (see~\ref{eq:utility_function}) will change the way we grant resources to a user depending on his instantaneous rate.

With the aim of performing a systematic comparison, we make use of the network simulator described in Chapter~\ref{chap:dude}. Simulation parameters concerning the channel model used in simulations are shown in table~\ref{parametersB}.


\begin{savenotes}
\begin{table}[!htb]
\caption{Deployment parameters}
\centering
\label{parametersB}
\begin{tabular}{cl}
\multicolumn{1}{l}{}                                                                                                                                                  &                                                                     \\ \hline
\multicolumn{1}{|c|}{\cellcolor[HTML]{EFEFEF}{\color[HTML]{343434} Area of interest}}                                                                                 & \multicolumn{1}{l|}{$1000$ m x $1000$ m}                                \\ \hline
\multicolumn{1}{|c|}{\cellcolor[HTML]{EFEFEF}{\color[HTML]{343434} }}                                                                                                 & \multicolumn{1}{l|}{$\lambda_{Macrocells} = 3$}                         \\
\multicolumn{1}{|c|}{\cellcolor[HTML]{EFEFEF}{\color[HTML]{343434} }}                                                                                                 & \multicolumn{1}{l|}{$\lambda_ {Femtocells}$ = 30} \\
\multicolumn{1}{|c|}{\multirow{-3}{*}{\cellcolor[HTML]{EFEFEF}{\color[HTML]{343434} \begin{tabular}[c]{@{}c@{}}Network deployment\\ (PPP intensities)\end{tabular}}}} & \multicolumn{1}{l|}{$\lambda_ {Users} = 200$}                         \\ \hline
\multicolumn{1}{|c|}{\cellcolor[HTML]{EFEFEF}{\color[HTML]{343434} }}                                                                                                 & \multicolumn{1}{l|}{Macrocell DL/UL = 20 MHz}                       \\
\multicolumn{1}{|c|}{\multirow{-2}{*}{\cellcolor[HTML]{EFEFEF}{\color[HTML]{343434} Channel bandwidth}}}                                                              & \multicolumn{1}{l|}{Femtocell DL/UL = 1 GHz}                        \\ \hline
\multicolumn{1}{|c|}{\cellcolor[HTML]{EFEFEF}{\color[HTML]{343434} }}                                                                                                 & \multicolumn{1}{l|}{MBS = 46 dBm}                                   \\
\multicolumn{1}{|c|}{\cellcolor[HTML]{EFEFEF}{\color[HTML]{343434} }}                                                                                                 & \multicolumn{1}{l|}{FBS = 20 dBm}                                   \\
\multicolumn{1}{|c|}{\multirow{-3}{*}{\cellcolor[HTML]{EFEFEF}{\color[HTML]{343434} Transmit power}}}                                                                 & \multicolumn{1}{l|}{Device = 20 dBm}                                \\ \hline
\multicolumn{1}{|c|}{\cellcolor[HTML]{EFEFEF}{\color[HTML]{343434} Path-loss exponent}}                                                                               & \multicolumn{1}{l|}{$4$}                                      \\ \hline
\multicolumn{1}{|c|}{\cellcolor[HTML]{EFEFEF}{\color[HTML]{343434} Propagation constant}}                                                                             & \multicolumn{1}{l|}{1}                                              \\ \hline
\multicolumn{1}{|c|}{\cellcolor[HTML]{EFEFEF}{\color[HTML]{343434} Noise level}}                                                                                      & \multicolumn{1}{l|}{$- 106$ dBm}                                      \\ \hline
\end{tabular}
\end{table}
\end{savenotes}

We should note that the simulator was initially implemented assuming uniform resource allocation for the users associated to a given base station, i.e., splitting the available bandwidth equally between the users. We will take advantage of that fact in order to compare the performance between the uniform resource allocation approach and the custom resource allocation under study.

Recall that we are assuming that the user association is fixed, i.e., once users have been associated to a base station in both uplink and downlik, they cannot change the association. To that end, we rely upon the association scheme presented in section~\ref{sec:assoc_rules} which was used to implement the simulator.

\newpage
\subsubsection{$\alpha-fairness \rightarrow 1$}
The aim of this test is to check if the suggested resource allocation solution tends to the uniform resource allocation when the parameters are set to some appropriate values. To that end, we use $A \simeq 0$ and $\alpha \rightarrow 1$. Figures~\ref{fig:uniform_uniform_user_rates} and~\ref{fig:uniform_uniform_aggregates} show the user spectral efficiencies after resource allocation and aggregate spectral efficiency of the network, respectively. Note that spectral efficiency for every user in the network for both uplink and uplink is almost identical regardless of the chosen scheme. Consequently, the sum spectral efficiency is also the same for the two of the resource allocation schemes under these conditions ($A \simeq 0$, $\alpha \simeq 1$). Notice that since $A \simeq 0$, it is not necessary to check if \eqref{eq:approximation} holds.
\begin{figure}[!htb]
  \begin{center}
    \includegraphics[scale=0.55]{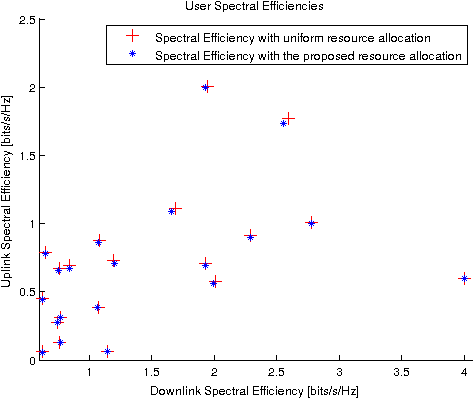}
    \caption{Users spectral efficiency after allocation.}
    \label{fig:uniform_uniform_user_rates}
  \end{center}
\end{figure}
\vspace*{-1.0cm}
\begin{figure}[!htb]
  \begin{center}
    \includegraphics[scale=0.55]{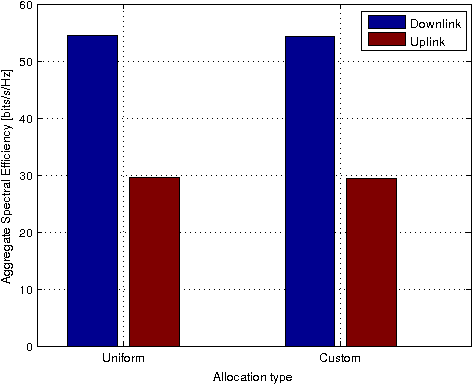}
    \caption{Aggregate spectral efficiency on the network for each allocation scheme.}
    \label{fig:uniform_uniform_aggregates}
  \end{center}
\end{figure}

\vspace*{-0.8cm}

So far, from the above results, everything seems to indicate that the scheme is working properly. We are obtaining the expected results which additionally comply with the solution produced when allocating the resources equally between the users.

\subsubsection{$\alpha-fairness \rightarrow 1$; $A = 5$}

We now give some weight to the network asymmetry term, that is, allocations where the difference between the uplink and downlink rates are smaller should be preferred over other solutions.

Obviously, in this case the fraction of resources allocated to each one of the users is not the same as for the uniform resource allocation scheme. Figure~\ref{fig:User_Spectral_Efficiencies_Uniform_UniformA5} plots the achieved spectral efficiency in uplink and downlink for each one of the users of the network, after the resource distribution. Note how either the uplink is rewarded at the expense of the downlink or the difference between uplink and downlink spectral efficiencies is smaller, for each network user.

\begin{figure}[!htb]
\centering     
\subfigure[Users spectral efficiency after allocation.]{\label{fig:User_Spectral_Efficiencies_Uniform_UniformA5}\includegraphics[scale=0.54]{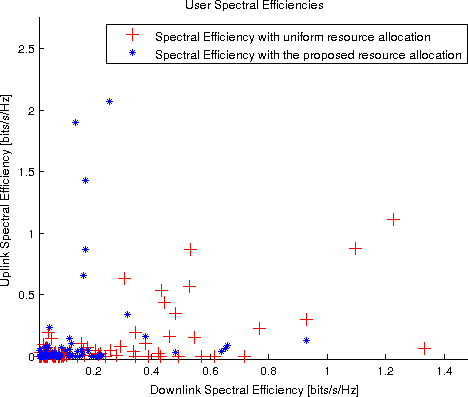}}
\subfigure[Aggregate spectral efficiency on the network for each allocation scheme.]{\label{fig:uniform_uniform_A5}\includegraphics[scale=0.54]{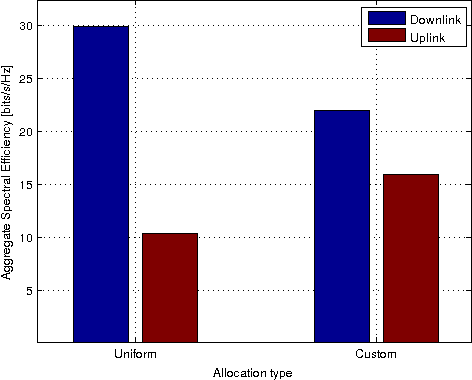}}
\caption{Simulation results for $\alpha \rightarrow 1$ and $A = 5$.}
\label{fig:uniform_uniform_A5_all}
\end{figure}

We can confirm the abovementioned facts in figures \ref{fig:donwlink_boxplot_uniformA5} and \ref{fig:uplink_boxplot_UniformA5_all}. On the one hand, we observe that the $2$nd and $3$rd quartiles are lower in the downlink spectral efficiency when using our custom resource allocation algorithm. On the other hand, uplink outliers now achieve higher spectral efficiencies (see Figure~\ref{fig:uplink_boxplot_UniformA5}). As a consequence, the gap between the aggregate rates of both links is reduced by half, as shown in Figure~\ref{fig:uniform_uniform_A5}. Furthermore, Figure~\ref{fig:uplink_boxplot_UniformA5_detail} shows a detailed view of the uplink spectral efficiency boxplot. Notice, at the right-hand side of the plot, that the mean is higher in the case of using the resource allocation scheme under study.

In summary, when we use uniform resource allocation under the simulation topology, we face a situation in which the uplink is clearly degraded in comparison with the downlink. However, when applying our custom algorithm so as to grant resources to network users, we can see the effect of the asymmetry penalisation term (\textsc{rhs} of \eqref{eq:main_v3}) which enables us to partially alleviate the uplink situation. As a consequence, the average difference between uplink and downlink rates of a user has also decreased ($\sim$ 10\% in this case). 
Finally, it is worth saying that in this particular case, $85$\% of the users satisfied equation \eqref{eq:approximation}. This value hardly ever goes below $70$\% but tests of significance might be performed anyway as future work in order to confirm this fact. Due to this, the obtained solution is only an approximation of the optimal one. Nevertheless, note that we are facing the expected and desired behaviour of the network allocation process.

\begin{figure}[!htb]
  \begin{center}
    \includegraphics[scale=0.65]{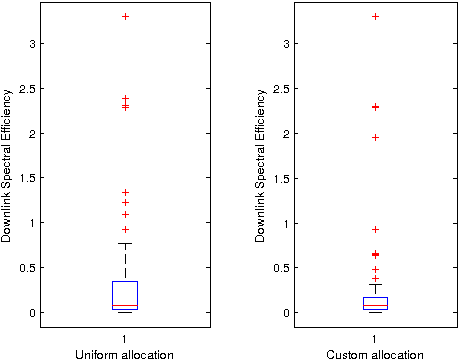}
    \caption{Downlink spectral efficiency boxplot.}
    \label{fig:donwlink_boxplot_uniformA5}
  \end{center}
\end{figure}

\begin{figure}[!htb]
\centering     
\subfigure[Entire boxplot.]{\label{fig:uplink_boxplot_UniformA5}\includegraphics[scale=0.56]{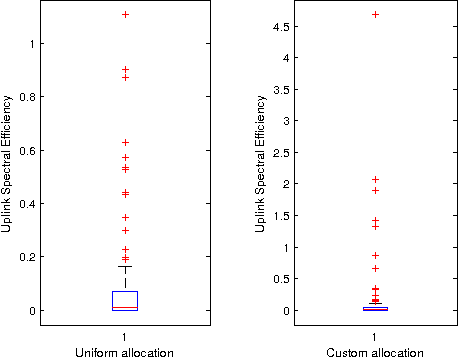}}
\subfigure[Boxplot detail.]{\label{fig:uplink_boxplot_UniformA5_detail}\includegraphics[scale=0.56]{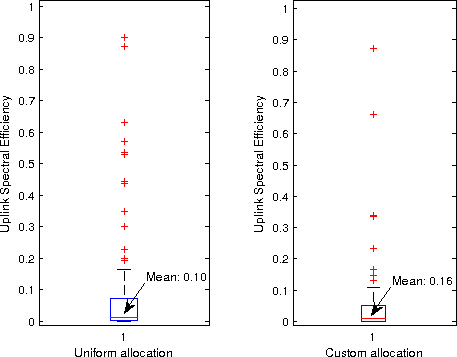}}
\caption{Uplink spectral efficiency boxplots.}
\label{fig:uplink_boxplot_UniformA5_all}
\end{figure}

To conclude with the test section, we report the simulation results of some illustrative cases concerning $\alpha$ parameter. In particular, we want to shed light on the behaviour of the algorithm when either the fairness value is too high or it tends to zero.

\newpage
\subsubsection{$\alpha-fairness \rightarrow 0$}
Now, we set the fairness parameter $\alpha$ to a value near zero ($\alpha = 0.15$) so that we can have an idea of the network behaviour when rate imbalance is not of utmost. To that end, we set parameter $A$ to zero. In figures \ref{fig:alphazero} and \ref{fig:alphazeroperformance} are shown the results of the simulations.

\begin{figure}[!htb]
\centering     
\subfigure[Downlink resource allocation.]{\label{fig:res_alloc_zero_downlink}\includegraphics[scale=0.54]{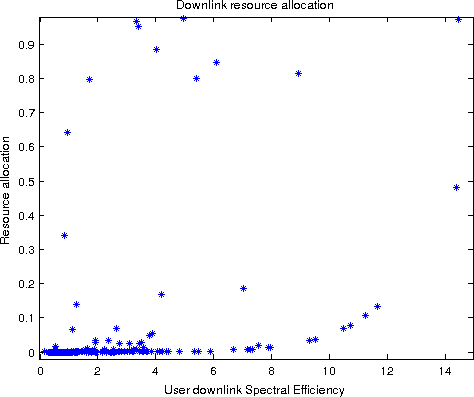}}
\subfigure[Uplink resource allocation.]{\label{fig:res_alloc_zero_uplink}\includegraphics[scale=0.54]{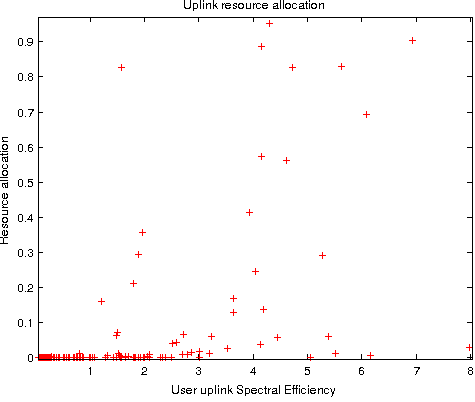}}
\caption{Resource allocation vs spectral efficiency. $\alpha \simeq 0$.}
\label{fig:alphazero}
\end{figure}

\begin{figure}[!htb]
\centering     
\subfigure[Users' spectral efficiency.]{\label{fig:uniform_zero_spectral_efficiencies}\includegraphics[scale=0.54]{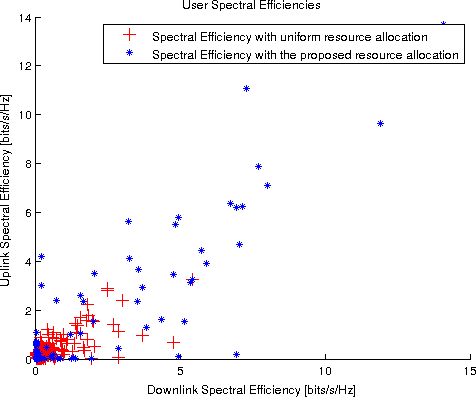}}
\subfigure[Aggregate spectral efficiency.]{\label{fig:uniform_zero_aggregates}\includegraphics[scale=0.54]{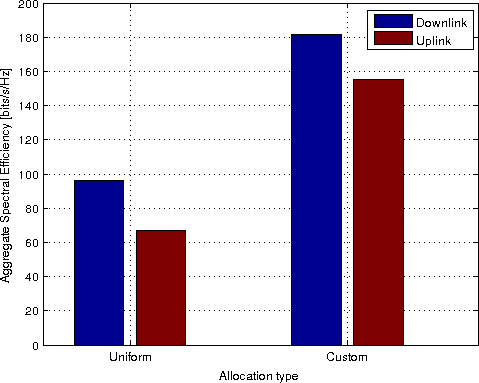}}
\caption{Individual and network performance measurements.}
\label{fig:alphazeroperformance}
\end{figure}

Notice in figures \ref{fig:res_alloc_zero_downlink} and \ref{fig:res_alloc_zero_uplink} that the higher the spectral efficiency of a given user, the more resources are granted to him. Outliers on the previously mentioned figures are explained due to the fact that they are associated to an almost idle base station. That is the reason why even they achieve low spectral efficiency, they perceive a big amount of resources. Figure \ref{fig:uniform_zero_spectral_efficiencies} shows some users are taking advantage of the purposed allocation scheme at the expense of some others who are obtaining lower uplink and downlin spectral efficiencies. As expected, these facts lead to a throughput maximisation scenario as it is confirmed in figure \ref{fig:uniform_zero_aggregates}. Note that both uplink and downlink aggregates are higher when the custom allocation algorithm is used.

\subsubsection{$\alpha-fairness > 1 $}
Our last test goal is to assess the network behaviour when parameter $\alpha$ is greater than $1$ and ultimately, what happens when it tends to infinity. To this end, we are going to test two different scenarios. In the first one, we will set ($\alpha = 4$ and $A = 0$). Finally, we set ($\alpha = 4$ and $A = 4$) so that we can check if the asymmetry term influences the solution when the fairness parameter is greater than $1$. Figure \ref{fig:alphafour} shows the simulation results for the first case:

\begin{figure}[!htb]
\centering     
\subfigure[Downlink resource allocation.]{\label{fig:res_alloc_four_downlink}\includegraphics[scale=0.54]{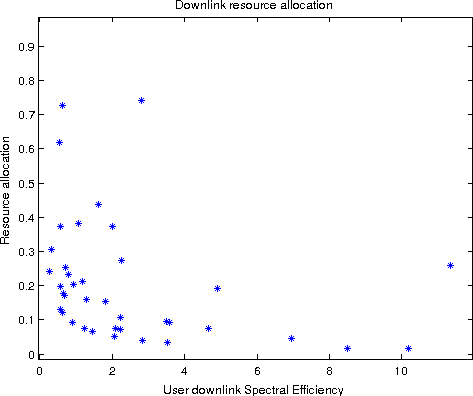}}
\subfigure[Uplink resource allocation.]{\label{fig:res_alloc_four_uplink}\includegraphics[scale=0.54]{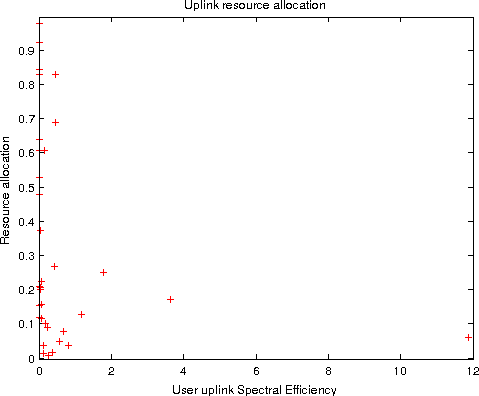}}
\caption{Resource allocation vs spectral efficiency. $\alpha \simeq 4; A = 0$.}
\label{fig:alphafour}
\end{figure}

It seems clear that now, the lower it is the spectral efficiency of a user, the bigger amount of resources are granted to him. As a consequence, both downlink and uplink aggregates are degraded but maybe this behaviour is considered to be more fair with individual users. We should also notice that the gap between downlink and uplink aggregates remains practically the same (see. Figure~\ref{fig:aggregate_four}). 

\newpage
\begin{figure}[!htb]
  \begin{center}
    \includegraphics[scale=0.65]{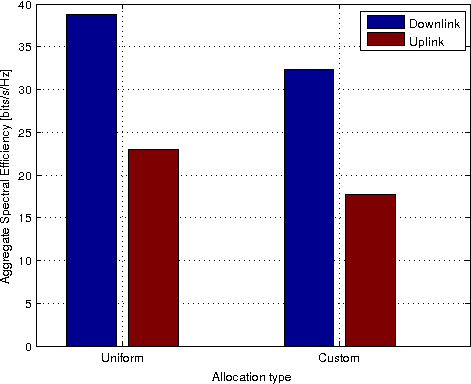}
    \caption{Network aggregates.}
    \label{fig:aggregate_four}
  \end{center}
\end{figure}

The percentage of users satisfying \eqref{eq:approximation} was $82$ \% and therefore we expect these results to be a good enough approximation to the optimal solution.

Finally, in Figure \ref{fig:aggregate_four_4} we explore the second situation where we set $\alpha = 4$ and $A = 4$.

\begin{figure}[!htb]
\centering     
\subfigure[Downlink resource allocation.]{\label{fig:res_alloc_four_4_downlink}\includegraphics[scale=0.54]{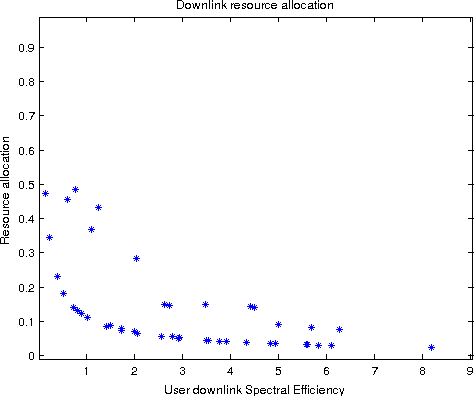}}
\subfigure[Network aggregates.]{\label{fig:aggregate_four_4}\includegraphics[scale=0.54]{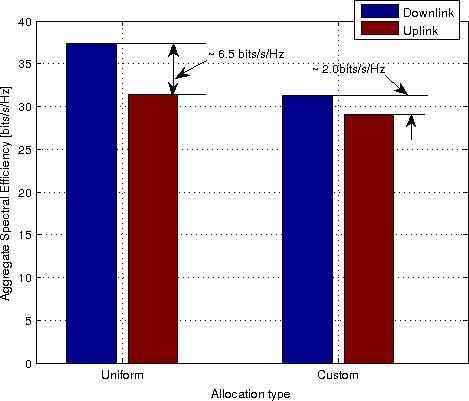}}
\caption{Resource allocation and network aggregates. $\alpha \simeq 4; A = 4$.}
\label{fig:alphafour_4}
\end{figure}

As shown in Figure~\ref{fig:res_alloc_four_4_downlink} the resource allocation criterion continues being the same. More resources are granted to those users with a lower spectral efficiency. Nevertheless, Figure~\ref{fig:aggregate_four_4} shows how the gap between uplink and downlink aggregates has become narrower. This time, the percentage of users satisfying \eqref{eq:approximation} was $85$\%. This, leads us to think that these results are a good enough approximations to the optimal ones.

\subsection{Discussion of results}

Due to the fact that we are permitting decoupled access, that is, we allow users to associate to different base stations in uplink and downlink, we may understand $A$ as a parameter which affects the behaviour of the network as a whole. Have in mind that now we are using DUDe, controling the behaviour of the network throught the $\alpha$ parameter is not really straightforward. That is the reason why the $A$ parameter seems to be useful in order to couple the control of the network's behaviour as a whole even if the uplink and downlink are decoupled for each user. 
 On the contrary, $\alpha-fairness$ parameter, has more to do with the allocations within a single base stations i.e. it controls how we grant resources to the users associated to a given base station basing on its instantaneous rates. From simulation results we may draw the following conclusions:
 
 In every performed test, parameter A helped to reduce the gap between uplink and downlik aggregates on the network as a whole. A value of the fairness parameter $\alpha$ lower than unity, leads to a throughput maximisation scenario where eventually a \textsc{bs} would only allocate resources to the user with the highest spectral efficiency associated to it (when $\alpha = 0$). On the other hand, as $\alpha$ grows above unity, more resources are granted to the users with lower spectral efficiency. To conclude, as we have already mentioned, we may not choose a single fairness and asymmetry parameters combination which is better that the rest but select dynamically the most appropriate configuration according to the needs of the network in every specific moment.
 
 \newpage
 \section{\textsc{ssa} problem with joint cell association and resource allocation problem}\label{sec:joint}
In this section, we study the tractability of the \emph{single station association} (\textsc{ssa}) maximisation problem when we no longer assume fixed association vectors, that is, we allow changes in the association of mobile users to their respective serving base stations. This fact implies that we do not employ the association scheme explained in section~\ref{sec:assoc_rules} anymore. Having in mind that the elements of feasible association vectors are binary (see. \eqref{eq:feasible-dl} and \eqref{eq:feasible-ul}) our problem becomes a combinatorial one. Any brute force solution for the complete problem stated above has complexity $\Theta(|\mathcal{B}|^{|\mathcal{U}|})$, where $|\mathcal{B}|$ denotes the total number of base stations and $|\mathcal{U}|$ is the total number of users in the network. This seems an unaffordable computational effort even when the number of users is small. Consequently, a problem relaxation appears to be the best option.

 \subsection{Integer relaxation of the problem}
 
We formulate the complete optimisation problem considering an integer relaxation for the association vectors. To that end, define two association restriction vectors $\mathbf{x}_1 = (x_{ub}, u \in \mathcal{U}, b \in \mathcal{B}_{\textsc{dl}})$
and $\mathbf{x}_2 = (x^\prime_{ub}, u \in \mathcal{U}, b \in
\mathcal{B}_{\textsc{ul}})$, one for each link. The elements of vectors $\mathbf{x}_1$ and
$\mathbf{x}_2$ are now real numbers. We denote the problem $f_{\alpha}^{\textsc{rs,o}}$(\textsc{rs,o} stands for Relaxed \textsc{ssa} problem with optimal allocation). Under these conditions the sets $\mathcal{X_{\textsc{dl}}}$ and
$\mathcal{X_{\textsc{ul}}}$ of feasible associations are

\begin{align}
  \label{eq:feasible-dl-relaxed}
  \mathcal{X_{\textsc{dl}}} &= \lbrace \mathbf{x}_1: \mathbf{x}_1 \in
  \mathbb{R}_+^{|\mathcal{U}| \times |\mathcal{B}_{\textsc{dl}}|}, \sum_{b \in
    \mathcal{B}_\text{DL}} x_{ub} = 1, \forall \, u \in \mathcal{U} \rbrace \\
  \label{eq:feasible-ul-relaxed}
  \mathcal{X_{\textsc{ul}}} &= \lbrace \mathbf{x}_2: \mathbf{x}_2 \in
  \mathbb{R}_+^{|\mathcal{U}| \times |\mathcal{B}_{\textsc{ul}}|}, \sum_{b \in
  \mathcal{B}_\text{UL}} x_{ub} = 1, \forall \, u \in \mathcal{U} \rbrace.
\end{align}

Below, we study the convexity of the relaxed problem.

 \subsubsection{Non - convexity of $f_{\alpha}^{\textsc{rs,o}}$}
 
For general $\alpha$, we can express $f_{\alpha}^{\textsc{rs,o}}$ as follows
 
 \begin{equation}\label{eq:relaxed-formulation}
 f_{\alpha}^{\textsc{rs,o}} \equiv \max_{\substack{\mathbf{y}_1, \mathbf{y}_2\\ \mathbf{x}_1, \mathbf{x}_2}} \sum_{ub} \bigl[ \frac{(r_{ub} y_{ub})^{1-\alpha}}{1-\alpha} x_{ub}\quad + \quad \frac{(r^\prime_{ub} y^\prime_{ub})^{1-\alpha}}{1-\alpha} x^\prime_{ub}  \bigr] - A\sum_{ub} \bigl[|r_{ub} y_{ub} x_{ub} - r^\prime_{ub} y^\prime_{ub} x^\prime_{ub} |\bigr] 
 \end{equation}
 
 On the one hand, each term $R_u(\mathbf{y}, \mathbf{x}) = r_{ub} y_{ub} x_{ub}$ is of the form $g(x,y) = x\cdot y$, which is not a convex set in $x,y\,\in[0, 1]$. The Hessian matrix of $g(x, y)$ is
 
 \begin{equation}\label{eq:xy-hessian}
\nabla^2 g(x,y) = \begin{bmatrix} %
0 & 1 \\
1 & 0 
\end{bmatrix},
 \end{equation}

whose eigenvalues are $\lbrace\lambda_1, \lambda_2\rbrace = {1, -1}$ and thus, \eqref{eq:xy-hessian} is not positive (semi) definite over $x,y\,\in[0, 1]$. Consequently, $g(x, y)$ is not convex. Figure \ref{fig:xy} shows that  $g(x, y)$ is indeed a hyperbolic paraboloid, which is formed by two parabolas of opposite curvatures (one convex and the other concave, curved in opposite directions).

On the other hand, each summand of the \textsc{lhs} of \eqref{eq:relaxed-formulation} is of the form $f(x,y) = x^{1-\alpha}\cdot y$. The Hessian matrix of this scalar-valued function is

 \begin{equation}\label{eq:xy-hessian-f}
H = \nabla^2 f(x,y) = \begin{bmatrix} %
-\alpha(1-\alpha)yx^{-\alpha -1} & (1-\alpha)x^{-\alpha} \\
\frac{\partial (\partial f/\partial x)}{\partial y} = \frac{\partial (\partial f/\partial y)}{\partial x} & 0 
\end{bmatrix},
 \end{equation}
 
 which has eigenvalues

 \begin{equation}
\lbrace\lambda_1, \lambda_2\rbrace = \frac{1}{2}\bigl(-\alpha(1-\alpha)yx^{-\alpha-1} \pm \sqrt{(\alpha - \alpha^2)^2y^2x^{-2\alpha -2} + 4 (1-\alpha)^2x^{-2\alpha}}\bigr).
 \end{equation}

We may rewrite eigenvalues as follows

\begin{equation}
\lbrace\lambda_1, \lambda_2\rbrace = -\frac{1}{2}(1-\alpha)x^{-\alpha-1}\bigl[\alpha y \pm \sqrt{\alpha^2 y^2 + 4x^2} \bigr].
 \end{equation}

Analysing their ratio, we obtain

\begin{equation}
\frac{\lambda_1}{\lambda_2} = \frac{1 + \frac{\sqrt{\alpha^2 y^2+4x^2}}{\alpha y}}{1 - \frac{\sqrt{\alpha^2 y^2+4x^2}}{\alpha y}} = \frac{1 + \sqrt{\frac{\alpha^2 y^2+4x^2}{\alpha^2 y^2}}}{1 - \sqrt{\frac{\alpha^2 y^2+4x^2}{\alpha^2 y^2}}}
\end{equation}

and it is clear that the numerator is positive. Conversely, the denominator is always negative since

\begin{equation}
1 - \sqrt{\frac{\alpha^2 y^2+4x^2}{\alpha^2 y^2}} = 1 - \sqrt{1 + \frac{4x^2}{\alpha^2 y^ 2}} < 0
\end{equation}

Therefore, the eigenvalues are of mixed signs and the Hessian matrix is not postive (semi) definite. Consequently, $f(x,y)$ is not convex. Figure \ref{fig:xalphay} shows the region of interest. As a result, due to the fact that $f_{\alpha}^{\textsc{rs,o}}$ is a weighted sum of non-convex functions as the ones studied above, we can state that $f_{\alpha}^{\textsc{rs,o}}$ is, in general, a non-convex optimisation problem.

\begin{figure}[!htb]
\centering     
\subfigure[$g(x,y)$.]{\label{fig:xy}\includegraphics[scale=0.54]{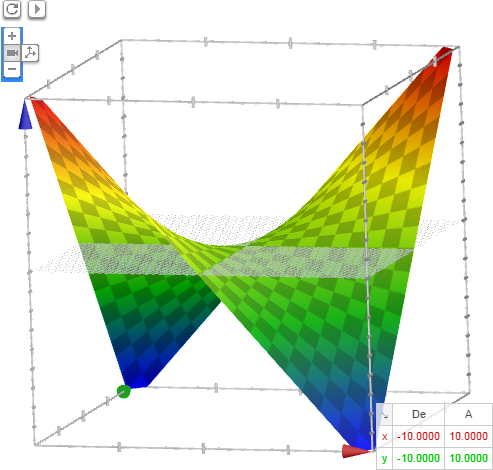}}
\subfigure[$f(x,y)$.]{\label{fig:xalphay}\includegraphics[scale=0.54]{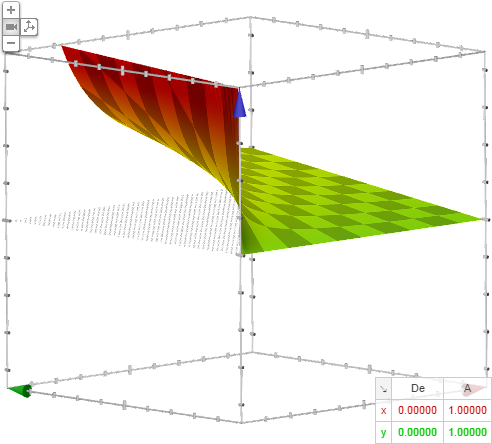}}
\caption{Regions under study.}
\label{fig:regions}
\end{figure}

\subsection{Problem reformulation}

Despite the results from the last section establish that we are facing a non-convex optimisation problem, we study a problem reformulation which may be convex. To that, end we attach additional constraints to the problem so as to capture the \textsc{ssa} constraint in a different way

\begin{equation}
  f_\alpha^{\textsc{ssa}} \equiv \max_{\mathbf{y}_1, \mathbf{y}_2} \sum_{u,b} \bigl( U_{\alpha}(r_{ub} y_{ub}) + U_{\alpha}(r^\prime_{ub} y^\prime_{ub}) \bigr) - A
  |\sum_{u,b} r_{ub} y_{ub}  - \sum_{u,b} r^\prime_{ub} y^\prime_{ub}| 
\end{equation}

such that

\begin{subequations}
  \begin{equation}
    \label{eq:constraint-mult-dl}
     y_{ua}y_{ub} = 0, \forall\, u\, \in \mathcal{U}, a\neq b \in
    \mathcal{B}_{\textsc{dl}}
  \end{equation}
  \begin{equation}
    \label{eq:constraint-mult-ul}
 y_{uc}y^\prime_{ub} = 0, \forall\, u\, \in \mathcal{U}, c\neq b \in
    \mathcal{B}_{\textsc{ul}},
  \end{equation}
\end{subequations}

along with all the remaining constraints from the original problem.

Note that the new constraints \eqref{eq:constraint-mult-dl} and \eqref{eq:constraint-mult-ul} are of the form $g(x,y) = x\cdot y$. We have already proven that $g(x,y)$ is non-convex at the previous section (see. eq~\eqref{eq:xy-hessian}). Nevertheless, we can tackle the non-convexity of the constraints by reworking them in a more suitable way.  

\begin{equation}\label{eq:ssa-epsilon}
  f_\alpha^{\textsc{ssa}}(\epsilon) \equiv \min_{\mathbf{y}_1, \mathbf{y}_2} - \sum_{u,b} \bigl( U_{\alpha}(r_{ub} y_{ub}) + U_{\alpha}(r^\prime_{ub} y^\prime_{ub}) \bigr) + A
  |\sum_{u,b} r_{ub} y_{ub} - r^\prime_{ub} y^\prime_{ub}| 
\end{equation}

such that

\begin{subequations}
  \begin{equation}
    \label{eq:constraint-mult-dl-rework}
     y_{ua}y_{ub} \leq \epsilon , \forall\, u\, \in \mathcal{U}, a\neq b \in
    \mathcal{B}_{\textsc{dl}}
  \end{equation}
  \begin{equation}
    \label{eq:constraint-mult-ul-rework}
 y_{uc}y^\prime_{ub} \leq \epsilon , \forall\, u\, \in \mathcal{U}, c\neq b \in
    \mathcal{B}_{\textsc{ul}},
  \end{equation}
\end{subequations}

where $\epsilon > 0$. It is clear that we capture \textsc{ssa} constraint if we choose $\epsilon \rightarrow 0$. Unfortunately, $f_\alpha^{\textsc{ssa}}$ problem as stated above is not a geometric program as we initially expected. A geometric program (\textsc{gp}) is an optimisation problem where the objective function and inequality restrictions are posymonials and equality restrictions are monomials. A posynomial is a polinomial of the form

\begin{equation}
g(x) = \sum_i c_i \prod_j x_j^{a_{ij}}\, ,\, c_i > 0, \, a_{ij} \in \mathbb{R}.
\end{equation}

Although most of the summations in \eqref{eq:ssa-epsilon} contain terms which are posynomials, i.e. they may be expressed as polynomials with postitive coefficients for all terms when $\alpha > 1$

\begin{equation}
\min \frac{1}{\alpha - 1} \sum_{ub} (r_{ub} y_{ub})^{1-\alpha},
\end{equation} 

there will always be a term with a polynomial with a negative coefficient coming from the rate asymmetry term. Consequently, the problem turns into a signomial geometric programming (\textsc{sgp}). These kind of problems are much more difficult to solve than geometric programs because, unlike posynomials, signomials are not guaranteed to be globally convex. Recent work on that field {\cite{signomial}} shows that there exist new global optimization algorithms which are based on transformation of variables and linearization techniques. Considering the degree of difficulty and the need of implementing more sophisticated algorithms in order to solve this signomial problem, we will try to solve joint cell association and resource allocation problem using a different approach in the next Chapter.

\chapter{Decentralised algorithms for utility maximisation}
\label{chap:decentralised}
In this chapter, we will address a joint problem. On the one hand, we want to optimally associate users to their respective base stations in uplink and downlink. On the other hand, we expect to allocate resources in each one of the base stations according to the chosen utility function. Since the \textsc{ssa} policy gives rise to a combinatorial problem which is \textsc{np}-hard, we seek now a solution via relaxation of the allocation variables and proper decomposition so as to derive fast and scalable \emph{decentralised} optimal solutions of the relaxed problem. These will naturally be more complex to implement compared to those under the \textsc{ssa} policy, but still suitable for deployment in large-scale 5\textsc{g} networks.

\section{The Multi-Station Association (\textsc{msa}) problem}

In section \ref{sec:joint}, we proved that this problem is not readily tractable following the \emph{single station association} (\textsc{ssa}) approach. In order to overcome this issue we relax that assumption and suppose that each user can be associated to more than one base station per link at the same time. Under this assumption, we no longer need the association restriction vectors $\mathbf{z}_1 = (z_{ub}, u \in \mathcal{U}, b \in \mathcal{B}_{\textsc{dl}})$ and $\mathbf{z}_2 = (z^\prime_{ub}, u \in \mathcal{U}, b \in
\mathcal{B}_{\textsc{ul}})$, which were limiting the feasible associations. Instead, we may use the resource allocation variables to indicate if a user is associated to a given base station, that is, a user $u$ is associated to \textsc{bs} $b$ in downlink if $y_ub{} > 0$ and it is not otherwise. Before addressing the possible solutions of the problem we shall reformulate it taking into account the above considerations.

Let $y_{ub}$ be the fraction of resources that \textsc{bs} $u$ grants to
\textsc{mu} $u$. This resource allocation fraction may represent a certain
amount of time to transmit depending on the multiplexing scheme the
\textsc{bs} is using.  Then, the sum-rates from the downlink and the uplink
for user $u$ are, respectively,
\begin{align}
  R_{u} (\mathbf{y}_1) &= \sum_{b \in \mathcal{B}_\text{DL}} r_{ub} y_{ub} \\
  R^\prime_{u} (\mathbf{y}_2) &= \sum_{b \in \mathcal{B}_\text{UL}}
                                     r^\prime_{ub} y^\prime_{ub}
\end{align}
where
$\mathbf{y}_1 = (y_{ub}, u \in \mathcal{U}, b \in \mathcal{B}_{\textsc{dl}})$
and
$\mathbf{y}_2 = (y_{ub}, u \in \mathcal{U}, b \in
\mathcal{B}_{\textsc{ul}})$
are the $|\mathcal{U}| \times |\mathcal{B}_{\textsc{dl}}|$ and
$|\mathcal{U}| \times |\mathcal{B}_{\textsc{ul}}|$ resource allocation
matrices for the downlink and for the uplink, respectively. In the following,
scalar or vector symbols with prime superindices will denote quantities for
the uplink channels, and vectors are denoted with boldface symbols. The first
constraint we need to define concerns the set of resources a \textsc{bs} can
offer. The maximum amount of resources that a \textsc{bs} can allocate is
normalised and set to unity. Therefore, the set of feasible allocations for
each link are
\begin{align*}
  \mathcal{Y_{\textsc{dl}}} &= \lbrace \mathbf{y}_1 : \mathbf{y}_1 \in
  \mathbb{R}_+^{|\mathcal{U}| \times  |\mathcal{B}_{\textsc{dl}}|},
  \sum_{u}y_{ub} = 1, \forall \, b \in \mathcal{B}_{\textsc{dl}} \rbrace \\
  \mathcal{Y_{\textsc{ul}}} &= \lbrace \mathbf{y}_2 : \mathbf{y}_2
  \in \mathbb{R}_+^{|\mathcal{U}| \times |\mathcal{B}_{\textsc{ul}}|}, \sum_u
  y^\prime_{ub} = 1, \forall \, b \in \mathcal{B}_{\textsc{ul}} \rbrace 
\end{align*}

Our goal again is to maximise the network utility function, which is defined as the
sum of the individual users' utility function plus an additional term concerning the symmetry between uplink and downlik. For the individual utility functions, we will use the same class of $\alpha$-proportional
fair utility functions we used before, defined as follows:
\begin{equation}
  U_{\alpha}(R) = \begin{cases}
    \frac{R^{1-\alpha}}{1- \alpha}, & \quad \alpha \geq 0,\, \alpha \neq 1 \\
    \log(R), & \quad \alpha = 1.
  \end{cases}
\end{equation}

Here, $R$ denotes the rate the user is perceiving from the network either in
uplink or downlink.

Accordingly, we can formulate the canonical optimisation problem under \textsc{msa} policy as:

\begin{equation}\label{eq:main_msa}
  f_\alpha^{\textsc{msa}} \equiv \underset{\substack{\mathbf{y}_1,
      \mathbf{y}_2}}{\max} \sum_u
    \bigl[U_{\alpha}(\sum_{b_{\textsc{dl}}} r_{ub}y_{ub}) + U_{\alpha}(\sum_{b_{\textsc{ul}}}r^\prime_{ub} y^\prime_{ub})\bigr]
\end{equation}
such that
\begin{subequations}
\begin{equation}
    \label{eq:constraint-msa-absA}
\sum_{b_{\textsc{dl}}}r_{ub}y_{ub}  - \sum_{b_{\textsc{ul}}}r^\prime_{ub} y^\prime_{ub} \leq \epsilon_u, \forall\, u \in \mathcal{U}
 \end{equation}
 \begin{equation}
    \label{eq:constraint-msa-absB}
\sum_{b_{\textsc{ul}}}r^\prime_{ub} y^\prime_{ub} - \sum_{b_{\textsc{dl}}}r_{ub}y_{ub}  \leq \epsilon_u, \forall\, u \in \mathcal{U}
 \end{equation}
  \begin{equation}
    \label{eq:constraint-msa-dl}
    \sum_{u \in \mathcal{U}} y_{ub} \leq 1, \forall\, b \in
    \mathcal{B}_{\textsc{dl}}, \mathbf{y}_1 \in \mathcal{Y}_\textsc{dl}
  \end{equation}
  \begin{equation}
    \label{eq:constraint-msa-ul}
    \sum_{u \in \mathcal{U}} y^\prime_{ub} \leq 1, \forall\, b \in
    \mathcal{B}_{\textsc{ul}}, \mathbf{y}_2 \in \mathcal{Y}_\textsc{ul}
  \end{equation}
\end{subequations}

The constraints \eqref{eq:constraint-msa-absA} and \eqref{eq:constraint-msa-absB} establish a maximum amount of rate asymmetry \emph{per user}, given by the non-negative values of $\epsilon_u$ (a parameter); constraints \eqref{eq:constraint-msa-dl} and \eqref{eq:constraint-msa-ul} are the normalisation constraints enforcing each base station not to allocate more than its total resources.

\begin{remark}
There is a major difference between problem \eqref{eq:main_msa} and the \textsc{ssa} problem considered in the previous chapter: note that the penalty or regularisation term has been modet to the constraints \eqref{eq:constraint-msa-absA} - \eqref{eq:constraint-msa-absB}. Expressing that concept as a constraint instead of as a term of the objective function remarkably simplifies the decomposition and it has two immediate consequences for the sake of mathematical tractability:
\begin{itemize}
\item The objective function is crearly separable in the variables $\mathbf{y}_1$, $\mathbf{y}_2$.
\item The new constraints are linear in the variables. 
\end{itemize}

\end{remark}
 
\begin{remark}
The second difference is that the constraints \eqref{eq:constraint-msa-dl} and \eqref{eq:constraint-msa-ul} are now \emph{inequality constraints}. However, it is clear that the strict equality can be relaxed without affecting the optimal point and the optimal value of the problem. Besides, note that strict equality in the resource utilisation might be in conflict with the bounded asymmetry in \eqref{eq:constraint-msa-absA} - \eqref{eq:constraint-msa-absB}, if some the bounds $\lbrace\epsilon_u \rbrace$ were too tight. In that case, the feasible region could be empty.
\end{remark}

\subsection{Convexity}
\begin{theorem}
  \label{thm:msa-convexity}
  Choose feasible allocation schemes $\mathbf{y}_1$ and $\mathbf{y}_2$ for
  the downlink and uplink. Then, problem
  \eqref{eq:main_msa} is convex.
\end{theorem}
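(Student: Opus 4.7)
The plan is to verify the two standard conditions that characterise a convex optimisation problem: concavity of the objective (since we are maximising) and convexity of the feasible set. First I would rewrite $f_\alpha^{\textsc{msa}}$ equivalently as the minimisation of $-f_\alpha^{\textsc{msa}}$, so that proving concavity of the original objective is the same as proving convexity of the minimisation objective, matching the framework of convex programming.

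For the objective, I would argue term by term. Each inner sum $\sum_{b_{\textsc{dl}}} r_{ub} y_{ub}$ and $\sum_{b_{\textsc{ul}}} r^\prime_{ub} y^\prime_{ub}$ is an affine function of $(\mathbf{y}_1, \mathbf{y}_2)$, with nonnegative coefficients $r_{ub}, r^\prime_{ub}$. The scalar $\alpha$-proportional fair utility $U_\alpha$ defined in~\eqref{eq:utility_function} is concave and nondecreasing on $\mathbb{R}_+$ for every $\alpha \geq 0$: for $\alpha = 1$ this is the well-known concavity of $\log$, and for $\alpha \neq 1$ the second derivative $U_\alpha''(R) = -\alpha R^{-\alpha-1}$ is nonpositive on $R > 0$. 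Since the composition of a concave function with an affine map is concave, each summand in the objective is concave in $(\mathbf{y}_1, \mathbf{y}_2)$, and a nonnegative (here, unit) sum of concave functions is concave. Hence the objective is concave over the domain where the rates are positive, which is guaranteed by the nonnegativity of $\mathbf{y}_1, \mathbf{y}_2$ together with the strictly positive rate coefficients.

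For the feasible set I would check each constraint separately. The asymmetry bounds~\eqref{eq:constraint-msa-absA}--\eqref{eq:constraint-msa-absB} are linear inequalities in $(\mathbf{y}_1, \mathbf{y}_2)$ (this is exactly the standard trick of expressing $|a-b| \leq \epsilon$ as a pair of affine inequalities, which is precisely the reformulation highlighted in the remark preceding the theorem). The resource normalisation constraints~\eqref{eq:constraint-msa-dl}--\eqref{eq:constraint-msa-ul} are likewise affine, and the implicit nonnegativity $\mathbf{y}_1, \mathbf{y}_2 \succeq 0$ from $\mathcal{Y}_\textsc{dl}, \mathcal{Y}_\textsc{ul}$ cuts out a closed convex cone. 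The feasible region is thus an intersection of finitely many closed half-spaces, i.e.\ a (possibly empty) convex polyhedron.

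Combining the two parts, the problem is the maximisation of a concave function over a convex set, which is a convex optimisation problem in the standard sense, and the claim follows. I do not anticipate any real obstacle here: the key structural difference with respect to the \textsc{ssa} formulation of the previous chapter is that the penalty term has been pushed into the constraints, removing the bilinear coupling $y_{ub} z_{ub}$ and leaving only affine combinations inside each $U_\alpha(\cdot)$; once this is noted, the argument reduces to the textbook rules ``affine preserves concavity under composition with concave functions'' and ``intersections of half-spaces are convex.'' The only mildly delicate point is separating the $\alpha=1$ case from $\alpha \neq 1$ in order to justify concavity of $U_\alpha$ uniformly for $\alpha \geq 0$, which I would dispose of with the two-line computation of $U_\alpha''$ above.
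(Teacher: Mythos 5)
Your argument is correct and follows essentially the same route as the paper's own proof: concavity of the objective as a sum of concave-composed-with-affine terms, plus convexity of the feasible set as an intersection of affine constraints and the convex sets $\mathcal{Y}_\textsc{dl}$, $\mathcal{Y}_\textsc{ul}$. The only addition is your explicit computation of $U_\alpha''$, which the paper takes for granted.
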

\begin{proof}
  If $\mathbf{y}_1$ and $\mathbf{y}_2$ are feasible allocation schemes, the
  objective function is a sum of a composition of a concave function with affine functions of $\mathbf{y}_1$, $\mathbf{y}_2$.  Therefore, the objective
  function is concave itself. The constraints ~\eqref{eq:constraint-msa-absA}, \eqref{eq:constraint-msa-absB} are linear, and the same reasoning holds true with respect to ~\eqref{eq:constraint-msa-dl} and~\eqref{eq:constraint-msa-ul}. In addition, the feasible allocation sets
  $\mathcal{Y}_\textsc{dl}$ and $\mathcal{Y}_\textsc{ul}$ are easily seen to
  be convex. Thus, the feasible set is convex.
\end{proof}

\begin{remark}
Note, however, that the utility of the aggregated rate used by a given user in the downlink or in the uplink is not strictly (or strongly) concave, since an equation of the form $\underset{b}{\sum} r_{ub}y_{ub} = C$ may have multiple solutions on $y_{ub}$.
\end{remark}

\section{Full Dual Decomposition}
Notice that since we are facing a convex optimisation problem, a local optimum of the problem is also globally optimal. In addition, duality gap is zero under the problem constraints, so the Karush-Kuhn-Tucker (KKT) conditions are necessary and sufficient for primal-dual optimality. Below, we perform a dual decomposition of the problem with the aim of finding decomposable structures that now remain unseen. Lagrange duality theory connects the original maximisation problem \eqref{eq:main_msa} with the dual maximisation problem \eqref{eq:msa_lagrange} by relaxing the former, transfering the constraints to the objective funtion via Lagrange multipliers. Therefore, the Lagrangian of the problem is defined as

\begin{equation}\label{eq:msa_lagrange}
\begin{aligned}
L(\mathbf{y}_1, \mathbf{y}_2, \boldsymbol\lambda, \boldsymbol\lambda^\prime, \boldsymbol\nu, \boldsymbol\nu^\prime) &= \sum_u
    \bigl[U_{\alpha}(\sum_{b_{\textsc{dl}}} r_{ub}y_{ub}) + U_{\alpha}( \sum_{b_{\textsc{ul}}}r^\prime_{ub} y^\prime_{ub})\bigr] - \\ 
    &- \sum_u \lambda_u\bigl(\sum_{b_{\textsc{dl}}}r_{ub}y_{ub}  - \sum_{b_{\textsc{ul}}}r^\prime_{ub} y^\prime_{ub} - \epsilon_u \bigr) - \\
    &-\sum_u \lambda^\prime_u\bigl(\sum_{b_{\textsc{ul}}}r^\prime_{ub} y^\prime_{ub} - \sum_{b_{\textsc{dl}}}r_{ub}y_{ub} - \epsilon_u \bigr) - \\ &-\sum_{b_{\textsc{dl}}} \nu_b(\sum_{u\in\mathcal{U}(b)} y_{ub}-1) - \sum_{b_{\textsc{ul}}} \nu^\prime_b(\sum_{u\in\mathcal{U}(b)} y^\prime_{ub}-1).
\end{aligned}
\end{equation}

Grouping all the summations over $u$, we have

\begin{equation}
\begin{aligned}
L(\mathbf{y}_1, \mathbf{y}_2, \boldsymbol\lambda, \boldsymbol\lambda^\prime, \boldsymbol\nu, \boldsymbol\nu^\prime) &= \sum_u
    \bigl[U_{\alpha}(\sum_{b_{\textsc{dl}}}r_{ub}y_{ub}) + U_{\alpha}(\sum_{b_{\textsc{ul}}}r^\prime_{ub} y^\prime_{ub}) - \lambda_u\bigl(\sum_{b_{\textsc{dl}}}r_{ub}y_{ub}\\
    & - \sum_{b_{\textsc{ul}}}r^\prime_{ub} y^\prime_{ub} - \epsilon_u \bigr) - \lambda^\prime_u\bigl(\sum_{b_{\textsc{ul}}}r^\prime_{ub} y^\prime_{ub} - \sum_{b_{\textsc{dl}}}r_{ub}y_{ub} - \epsilon_u \bigr) \bigr] + \\     
&+\sum_{b_{\textsc{dl}}} \nu_b(1 - \sum_{u/u\in\mathcal{U}(b)} y_{ub}) + \sum_{b_{\textsc{ul}}} \nu^\prime_b(1 - \sum_{u/u\in\mathcal{U}(b)} y^\prime_{ub}).
\end{aligned}
\end{equation}

We may rework the above expression so as to reach an appropriate Lagrangian form which allows us to decompose the problem. To that end, combine the summations over base stations (both uplink and downlink) and expand the last two terms of the expression

\begin{equation}
\begin{aligned}
L(\mathbf{y}_1, \mathbf{y}_2, \boldsymbol\lambda, \boldsymbol\lambda^\prime, \boldsymbol\nu, \boldsymbol\nu^\prime) &= \sum_u
    \bigl[U_{\alpha}(\sum_{b_{\textsc{dl}}}r_{ub}y_{ub}) + \sum_{b_{\textsc{dl}}} \bigl(- \lambda_u r_{ub}y_{ub} + \lambda^\prime_u r_{ub}y_{ub} \bigr) +\\
    &+ U_{\alpha}(\sum_{b_{\textsc{ul}}}r^\prime_{ub} y^\prime_{ub}) + \sum_{b_{\textsc{ul}}} \bigl(\lambda_u r^\prime_{ub} y^\prime_{ub} - \lambda^\prime_u r^\prime_{ub} y^\prime_{ub}\bigr) + \epsilon_u (\lambda_u + \lambda^\prime_u) \bigr] +\\     
&+\sum_{b_{\textsc{dl}}} \nu_b + \sum_{b_{\textsc{ul}}} \nu^\prime_b - \sum_{b_{\textsc{dl}}}\nu_b (\sum_{u\in\mathcal{U}(b)} y_{ub}) - \sum_{b_{\textsc{ul}}}\nu^\prime_b (\sum_{u\in\mathcal{U}(b)} y^\prime_{ub}).
\end{aligned}
\end{equation}

Now, exchanging the order of summations over base stations on the last two terms yields

\begin{equation}
\begin{aligned}
L(\mathbf{y}_1, \mathbf{y}_2, \boldsymbol\lambda, \boldsymbol\lambda^\prime, \boldsymbol\nu, \boldsymbol\nu^\prime) &= \sum_u
    \bigl[U_{\alpha}(\sum_{b_{\textsc{dl}}}r_{ub}y_{ub}) + \sum_{b_{\textsc{dl}}} r_{ub}y_{ub} (\lambda^\prime_u-\lambda_u)\, + \\
    &+ U_{\alpha}(\sum_{b_{\textsc{ul}}}r^\prime_{ub} y^\prime_{ub}) +\sum_{b_{\textsc{ul}}} r^\prime_{ub} y^\prime_{ub} (\lambda_u - \lambda^\prime_u) + \epsilon_u (\lambda_u + \lambda^\prime_u)\bigr] + \\     
&+\sum_{b_{\textsc{dl}}} \nu_b + \sum_{b_{\textsc{ul}}} \nu^\prime_b - \sum_u(\sum_{b\in \mathcal{B}_{\textsc{dl}}(u)} \nu_b y_{ub}) - 
\sum_u(\sum_{b\in \mathcal{B}_{\textsc{ul}}(u)} \nu^\prime_b y^\prime_{ub}).
\end{aligned}
\end{equation}

Finally, we combine the new summations over $u$ with the ones we had already computed. The final form of the Lagrangian is given by:

\begin{equation}
\begin{aligned}
L(\mathbf{y}_1, \mathbf{y}_2, \boldsymbol\lambda, \boldsymbol\lambda^\prime, \boldsymbol\nu, \boldsymbol\nu^\prime) &= \sum_u
    \bigl[U_{\alpha}(\sum_{b_{\textsc{dl}}}r_{ub}y_{ub}) + \sum_{b_{\textsc{dl}}} \bigl( r_{ub}y_{ub} (\lambda^\prime_u-\lambda_u) - \nu_b y_{ub} \bigr) + \\
    &+ U_{\alpha}(\sum_{b_{\textsc{ul}}}r^\prime_{ub} y^\prime_{ub}) + \sum_{b_{\textsc{ul}}} \bigl(r^\prime_{ub} y^\prime_{ub} (\lambda_u - \lambda^\prime_u) - \nu^\prime_b y^\prime_{ub}\bigr) + \epsilon_u (\lambda_u + \lambda^\prime_u)\bigr] + \\     
&+\sum_{b_{\textsc{dl}}} \nu_b + \sum_{b_{\textsc{ul}}} \nu^\prime_b.
\end{aligned}
\end{equation}

Clearly, the optimisation now separates into two different levels. At the lower level of the problem, we have subproblems for each user in the network. That is, the dual decomposition results in each mobile user $u$ solving the $u$-th Lagrangian $L_u( y_{ub}, y^\prime_{ub}, \lambda_u, \lambda^\prime_u, \boldsymbol\nu, \boldsymbol\nu^\prime)$, for the given multipliers $\boldsymbol\nu, \boldsymbol\nu^\prime$

\begin{equation}\label{eq:decentralised_lower}
\begin{aligned}
\operatorname{arg} \max_{y_{ub}, y^\prime_{ub} \geq 0} \quad \bigl[&U_{\alpha}(\sum_{b_{\textsc{dl}}}r_{ub}y_{ub}) + \sum_{b_{\textsc{dl}}} \bigl(r_{ub}y_{ub} (\lambda^\prime_u-\lambda_u) - \nu_b y_{ub} \bigr) +\\
+\, &U_{\alpha}(\sum_{b_{\textsc{ul}}}r^\prime_{ub} y^\prime_{ub}) + \sum_{b_{\textsc{ul}}} \bigl(r^\prime_{ub} y^\prime_{ub} (\lambda_u - \lambda^\prime_u) - \nu^\prime_b y^\prime_{ub}\bigr) +\\
+ &\epsilon_u (\lambda_u + \lambda^\prime_u)  \bigr]\quad \forall u,
\end{aligned}
\end{equation}

where each user knows its own multipliers $\lambda_u$, $\lambda^\prime_u$, which actually set a price to the chosen uplink and downlink allocations so as to give preference to symmetrical solutions. Since uplink and downlink terms are independent of each other, each user has to solve \emph{two independent subproblems} (albeit identical)

\begin{subequations}\label{eq:decentralised_lower_subproblems}
\begin{equation}\label{eq:decentralised_lower_subproblem-dl}
\operatorname{arg} \max_{y_{ub} \geq 0} \quad U_{\alpha}(\sum_{b_{\textsc{dl}}}r_{ub}y_{ub}) + \sum_{b_{\textsc{dl}}} \bigl(r_{ub}y_{ub} (\lambda^\prime_u-\lambda_u) - \nu_b y_{ub} \bigr)
\end{equation}

\begin{equation}\label{eq:decentralised_lower_subproblem-ul}
\operatorname{arg} \max_{y^\prime_{ub} \geq 0}\quad U_{\alpha}(\sum_{b_{\textsc{ul}}}r^\prime_{ub} y^\prime_{ub}) + \sum_{b_{\textsc{ul}}} \bigl(r^\prime_{ub} y^\prime_{ub} (\lambda_u - \lambda^\prime_u) - \nu^\prime_b y^\prime_{ub}\bigr).
\end{equation}
\end{subequations}

Note that the solution to \eqref{eq:decentralised_lower} is $\lbrace y_{ub}^\star(\lambda_u, \lambda^\prime_u, \boldsymbol\nu, \boldsymbol\nu^\prime)$, $y_{ub}^\prime{}^\star(\lambda_u, \lambda^\prime_u, \boldsymbol\nu, \boldsymbol\nu^\prime)\rbrace$ and we will explore how to compute it in the next section. The \emph{master dual problem} is therefore

\begin{equation}\label{eq:decentralised_master}
\begin{aligned}
&\underset{\boldsymbol\lambda, \boldsymbol\lambda^\prime, \boldsymbol\nu, \boldsymbol\nu^\prime}{\operatorname{minimise}} \quad g(\boldsymbol\lambda, \boldsymbol\lambda^\prime, \boldsymbol\nu, \boldsymbol\nu^\prime) = \sum_u g_u(\lambda_u, \lambda^\prime_u, \boldsymbol\nu, \boldsymbol\nu^\prime) + \boldsymbol\nu^T\boldsymbol 1 + \boldsymbol\nu^\prime{}^T \boldsymbol 1 \\
&\text{subject to } \boldsymbol\lambda, \boldsymbol\lambda^\prime, \boldsymbol\nu, \boldsymbol\nu^\prime \geq 0,
\end{aligned}
\end{equation}

where $g_u(\lambda_u, \lambda^\prime_u, \boldsymbol\nu, \boldsymbol\nu^\prime) = L_u(y_{ub}^\star, y_{ub}^\prime{}^\star, \lambda_u, \lambda^\prime_u, \boldsymbol\nu, \boldsymbol\nu^\prime)$, that is, the Lagrangian for user $u$ evaluated at the optimal point. We know, by Theorem~\ref{thm:msa-convexity}, that the problem stated in \eqref{eq:main_msa} is convex. One consequence is that the dual function $g(\boldsymbol\lambda, \boldsymbol\lambda^\prime, \boldsymbol\nu, \boldsymbol\nu^\prime)$ is differentiable in its domain. Therefore, we may use the \emph{gradient projection method} in order to solve~\eqref{eq:decentralised_master}. Direct calculation gives the partial derivatives of the dual function as

\begin{subequations}
\begin{equation}\label{eq:msa-partialBS-downlink}
\frac{\partial\, g(\boldsymbol\lambda, \boldsymbol\lambda^\prime, \boldsymbol\nu, \boldsymbol\nu^\prime)}{\partial \boldsymbol\nu} = \sum_u\sum_{b_{\textsc{dl}}/b\in \mathcal{B}_{\textsc{dl}}(u)}(-y_{ub}) + \sum_{b_{\textsc{dl}}} 1 = \sum_{b_{\textsc{dl}}} \sum_{u/u\in\mathcal{U}(b)} (-y_{ub}) + \sum_{b_{\textsc{dl}}} 1
\end{equation} 
\begin{equation}\label{eq:msa-partialBS-uplink}
\frac{\partial\, g(\boldsymbol\lambda, \boldsymbol\lambda^\prime, \boldsymbol\nu, \boldsymbol\nu^\prime)}{\partial \boldsymbol\nu^\prime} = \sum_u\sum_{b_{\textsc{ul}}/b\in \mathcal{B}_{\textsc{ul}}(u)}(-y^\prime_{ub}) + \sum_{b_{\textsc{ul}}} 1 = \sum_{b_{\textsc{ul}}} \sum_{u/u\in\mathcal{U}(b)} (-y^\prime_{ub}) + \sum_{b_{\textsc{ul}}} 1
\end{equation}
\begin{equation}\label{eq:msa-partialUser-downlink}
\frac{\partial\, g(\boldsymbol\lambda, \boldsymbol\lambda^\prime, \boldsymbol\nu, \boldsymbol\nu^\prime)}{\partial \boldsymbol\lambda} = \sum_u\bigl( \sum_{b_{\textsc{ul}}}r^\prime_{ub} y^\prime_{ub} - \sum_{b_{\textsc{dl}}} r_{ub}y_{ub} + \epsilon_u \bigr)
\end{equation}
\begin{equation}\label{eq:msa-partialUser-uplink}
\frac{\partial\, g(\boldsymbol\lambda, \boldsymbol\lambda^\prime, \boldsymbol\nu, \boldsymbol\nu^\prime)}{\partial \boldsymbol\lambda^\prime} = \sum_u\bigl(\sum_{b_{\textsc{dl}}} r_{ub}y_{ub} - \sum_{b_{\textsc{ul}}}r^\prime_{ub} y^\prime_{ub}  + \epsilon_u \bigr).
\end{equation}
\end{subequations}

Using~\eqref{eq:msa-partialBS-downlink} and~\eqref{eq:msa-partialBS-uplink}, we update dual variables $\nu_b$ and $\nu^\prime_b$ as follows

\begin{subequations}
\begin{equation}\label{eq:update-bs-multiplier-dl}
\nu_b(t+1) = \bigl[\nu_b(t) - \gamma\,(1 - \sum_{u\in\mathcal{U}(b)} y_{ub}) \bigr],\, \forall b \in \mathcal{B}_{\textsc{dl}}
\end{equation}
\begin{equation}\label{eq:update-bs-multiplier-ul}
\nu^\prime_b(t+1) = \bigl[\nu^\prime_b(t) - \gamma\,(1 - \sum_{u\in\mathcal{U}(b)} y^\prime_{ub}) \bigr],\, \forall b \in \mathcal{B}_{\textsc{ul}},
\end{equation}
\end{subequations}

where $\gamma$ is a sufficiently small positive step size and $t$ denotes the iteration index. Likewise, using~\eqref{eq:msa-partialUser-downlink} and~\eqref{eq:msa-partialUser-uplink} we get

\begin{subequations}
\begin{equation}\label{eq:update-user-multiplier-dl}
\lambda_u(t+1) = \bigl[\lambda_u(t) - \gamma\,\bigl(\sum_{b_{\textsc{ul}}}r^\prime_{ub} y^\prime_{ub} - \sum_{b_{\textsc{dl}}} r_{ub}y_{ub} + \epsilon_u \bigr) \bigr],\, \forall u \in \mathcal{U}
\end{equation}
\begin{equation}\label{eq:update-user-multiplier-ul}
\lambda^ \prime_u(t+1) = \bigl[\lambda^\prime_u(t) - \gamma\, \bigl(\sum_{b_{\textsc{dl}}} r_{ub}y_{ub} - \sum_{b_{\textsc{ul}}}r^\prime_{ub} y^\prime_{ub}  + \epsilon_u \bigr) \bigr],\, \forall u \in \mathcal{U}.
\end{equation}
\end{subequations}

The dual variables $\boldsymbol\lambda, \boldsymbol\lambda^\prime, \boldsymbol\nu, \boldsymbol\nu^\prime$ will converge to the optimal value after a high enough number of iterations and, since the duality gap for this problem has been proved to be zero, the primal variables $y_{ub}^\star(\lambda_u, \lambda^\prime_u, \boldsymbol\nu, \boldsymbol\nu^\prime)$, $y_{ub}^\prime{}^\star(\lambda_u, \lambda^\prime_u, \boldsymbol\nu, \boldsymbol\nu^\prime)$ will also converge to the optimal value.

\subsection{Solution to the subproblems}
The users' subproblems have the general form

\begin{equation}
\max_{x \geq 0} U(\boldsymbol c^T\boldsymbol x) - \boldsymbol d^T\boldsymbol x
\end{equation}

for suitable vectors $\boldsymbol c$ and $\boldsymbol d$, which is clearly a standard convex problem over $\boldsymbol x \in \mathbb{R}^n_+$. Specifically, each user has to solve the following problem for the downlink:

\begin{equation}
\max_{y_{ub} \geq 0} \quad U_{\alpha}(\sum_{b_{\textsc{dl}}}r_{ub}y_{ub}) + \sum_{b_{\textsc{dl}}} \bigl(r_{ub}y_{ub} (\lambda^\prime_u-\lambda_u) - \nu_b y_{ub} \bigr)
\end{equation}

such that

\begin{equation}
y_{ub} \geq 0,\,\forall b \in \mathcal{B_{\textsc{dl}}}
\end{equation}

Its optimal solution may be characterised by computing the Karush-Kuhn-Tucker (KKT) conditions
for the Lagrangian which, since the problem is convex, are sufficient and necessary for optimality. The Lagrangian is

\begin{equation}
L(\mathbf{y}_1, \boldsymbol\mu) = \frac{(\sum_{b_{\textsc{dl}}} r_{ub}y_{ub})^{1-\alpha}}{1-\alpha} + \sum_{b_{\textsc{dl}}} \bigl(r_{ub}y_{ub} (\lambda^\prime_u-\lambda_u) - \nu_b y_{ub} \bigr) - \sum_{b_{\textsc{dl}}} \mu_b(-y_{ub}).
\end{equation}

From this, the first-order optimality conditions of the problem are

  \begin{equation} \label{eq:subproblem-first-order}
    \begin{cases}
     \frac{\partial L(\mathbf{y}_1, \boldsymbol\mu)}{y_{ub}} = (\sum_{b_{\textsc{dl}}} r_{ub}y_{ub}^ \star)^{-\alpha} r_{ub} + r_{ub}(\lambda^\prime_u-\lambda_u) - \nu_b + \mu_b = 0 \\ 
      \mu_b \geq 0
    \end{cases}, \,\forall b \in  \mathcal{B_{\textsc{dl}}},
  \end{equation}

where $\mu_b = 0$ when $y_{ub} \geq 0$ hold with equality. Therefore, we may rewrite \eqref{eq:subproblem-first-order} as

\begin{equation}\label{eq:final-subproblem-first-order}
\frac{1}{(\sum_{b_{\textsc{dl}}} r_{ub}y_{ub}^ \star)^\alpha} \leq \frac{\nu_b - r_{ub}(\lambda^\prime_u-\lambda_u)}{r_{ub}}, \,\forall b \in  \mathcal{B_{\textsc{dl}}}.
\end{equation}

Unfortunately, the above inequality may have multiple solutions, so it is not immediately clear how to solve algorithmically the user's subproblem. However, note that the \textsc{lhs} of \eqref{eq:final-subproblem-first-order} is common for all $b$. Thus, choosing $(\sum_{b_{\textsc{dl}}} r_{ub}y_{ub}^ \star)^{-\alpha} = \underset{b}{\min} \frac{\nu_b - r_{ub}(\lambda^\prime_u-\lambda_u)}{r_{ub}}$  satisfies all of the \eqref{eq:final-subproblem-first-order} KKT conditions. To that end, let $b_i$ be the base station a user is going to associate to in downlink such that minimises $\frac{\nu_b - r_{ub}(\lambda^\prime_u-\lambda_u)}{r_{ub}}$ . Note that $b_i$ is well defined due to the strict concavity of the utility function. Once $b_i$ is known, the remaining problem is to solve

\begin{equation}
\max_{y_{ub_i} > 0} \quad U_{\alpha}(r_{ub_i} y_{ub_i}) + \bigl(r_{ub_i} y_{ub_i} (\lambda^\prime_u-\lambda_u) - \nu_{b_i} y_{ub_i} \bigr)
\end{equation}

such that $y_{ub_i} > 0$, where $r_{ub_i}$ and $\nu_{b_i}$ are the measured \textsc{sinr} to \textsc{bs} $b_i$ and the base station multiplier (i.e., hidden price), respectively. The first-order optimality conditions become

\begin{subequations}
\begin{equation}
\frac{\partial L(y_{ub_i}, \mu)}{y_{ub_i}} = (r_{ub_i} y_{ub_i}^\star)^{-\alpha} r_{ub_i} + r_{ub_i}(\lambda^\prime_u-\lambda_u) - \nu_{b_i} + \mu = 0 
\end{equation}

\begin{equation} \label{eq:subproblem-primal-feasibility}
-y_{ub_i} \leq 0
\end{equation}

\begin{equation}\label{eq:subproblem-dual-fesibility}
\mu \geq 0
\end{equation}

\begin{equation}\label{eq:subproblem-complementary-slackness}
\mu(-y_{ub_i}) = 0
\end{equation}
\end{subequations}

Assuming, $y_{ub_i}^\star > 0$ partially satisfies primal feasibility \eqref{eq:subproblem-primal-feasibility}. In addition, setting $\mu = 0$ satisfies dual feasibility \eqref{eq:subproblem-dual-fesibility} and complementary slackness \eqref{eq:subproblem-complementary-slackness}. The stationary condition yields

\begin{equation}\label{eq:MSA_User_Alloc_DL}
y_{ub_i}^\star = \bigl( \frac{r_{ub_i}^{1-\alpha}}{\nu_{b_i} - r_{ub_i}(\lambda^\prime_u-\lambda_u)} \bigr)^{1/\alpha}.
\end{equation}

Following the same reasoning for the uplink, we get

\begin{equation}\label{eq:MSA_User_Alloc_UL}
y_{ub_j}^\prime{}^\star = \bigl( \frac{r_{ub_j}^\prime{}^{1-\alpha}}{\nu^\prime_{b_j} - r_{ub_j}^\prime(\lambda_u - \lambda^\prime_u)} \bigr)^{1/\alpha},
\end{equation}
where $b_j$ is the base station which minimises $\frac{\nu^\prime_b - r^\prime_{ub}(\lambda_u - \lambda^\prime_u)}{r^\prime_{ub}}$.

\subsection{Computational complexity}
In view of individual subproblems \eqref{eq:decentralised_lower_subproblem-dl} and \eqref{eq:decentralised_lower_subproblem-ul}, the amount of information a given user $u$ needs in order to find its optimal share of resources for the downlink is:

\begin{enumerate}
\item Its own two \emph{user multipliers} (prices) $\lambda_u$ and $\lambda^\prime_u$ for the downlink and the uplink channels, respectively.
\item The vector $\boldsymbol \nu$ with all the prices of every base station which is available to serve user $u$ in downlink.
\end{enumerate}

Similarly, the information that user $u$ must know so as to calculate its optimal share of uplink resources is:

\begin{enumerate}
\item Its own two \emph{user multipliers} (prices) $\lambda_u$ and $\lambda^\prime_u$ for the downlink and the uplink channels, respectively.
\item The vector $\boldsymbol \nu^\prime$ with all the prices of every base station which is available to serve user $u$ in uplink.
\end{enumerate}

Furthermore, the maximum transmission rates $r_{ub}$ and $r^\prime_{ub}$ can be directly computed by the user using local measurements only, e.g., by estimating the \textsc{sinr} in the uplink and downlink channels at the current time and averaging over an appropriate timescale so as to filter out fast fading. Finally, user $u$ needs to update its own multipliers $\lambda_u$ and $\lambda^\prime_u$ after each iteration, using \eqref{eq:update-user-multiplier-dl} and \eqref{eq:update-user-multiplier-ul} \emph{gradient projection methods}.

Regarding the base stations, they must update both their downlink and uplink prices (multipliers) making use of \eqref{eq:update-bs-multiplier-dl} and \eqref{eq:update-bs-multiplier-ul} after each allocation round.

\section{Numerical results}

In this section, we provide the numerical results which support the validity and the performance of the algorithm explained in the last section. We begin by drawing the simulation test bed in which we performed all the simulations. Then, we present how the algorithm behaves in different scenarios, highlighting the main features of the implemented solution and identifying additional aspects which should be taken into account in real deployment.

\subsection{Test scenario}
Having in mind the high number of feasible combinations while associating users to base stations and resources to users even in small-sized networks, the optimal configuration for each one of the elements of the system is not readily recognizable. With the aim of presenting the characteristics and strengths of the algorithm in a more suitable and friendly way, we test the proposed approach in a custom deployment with a few base stations and users. For the simulation, we model the locations of the base stations and users to be fixed so as to control the signal-to-noise ratio (\textsc{sinr}) that every user is achieving from each base station. This enables us to easily validate the behaviour of the solution since all the system's parameters are deterministic and do not depend on a random deployment. In addition, we assume that each base station is capable of serving users in both uplink and downlink.

\subsection{Test cases}
Firstly, we shall enumerate the tests which have been performed under the new association and resource allocation algorithm. They are listed briefly below, while subsequent sections go into depth on each case.

\begin{enumerate}
\item Test 1 - Oscillations in the optimal solution.
\item Test 2 - Tightness of inequalities.
\item Test 3 - $\alpha$ - fairness value.
\item Test 4 - Uplink - Downlink decoupling (\textsc{dud}e).
\item Test 5 - Load balance.
\item Test 6 - Addition of new base stations.
\end{enumerate}

The numerical results of the above-mentioned tests are accompanied by some graphical results, showing the evolutions of the \textsc{bs}' multipliers. The chosen step for the \emph{gradient projection method} of both \textsc{bs}s and users' multipliers was $\gamma = 0.004$. In addition, $8000$ iterations have been proven to be enough iterations for the algorithm to converge in all test cases. Note that these analyses have been accomplished in a single computer despite the fact that this is a distributed nature algorithm\footnote{Even though the computer where the tests have been performed has a multi-core processor, \textsc{matlab} uses only one core by default.}. Namely,  this number of iterations takes less than 2 seconds to finish, since calculations are rather simple. Nevertheless, for more details on the speed and convergence of the algorithm we refer the reader to section \ref{sec:speed}. However, note that this can be done un a decentralised way since users' subproblems are independent.

\subsection{Performance evaluation}
\subsubsection*{Test 1 - Oscillations in the optimal solution}
In this example, we show the oscillation of the global optimal solution whenever the rates perceived by a user from different base stations are similar. Let $\text{Rates}_{\textsc{dl}}{}_{|\mathcal{U}| \times |\mathcal{B}_{\textsc{dl}}|}$ be a matrix containing the maximum rate values at which each user can transmit to each one of the base stations in downlink (in $bits/s/Hz$) and let $\text{Rates}_{\textsc{ul}}{}_{|\mathcal{U}| \times |\mathcal{B}_{\textsc{ul}}|}$ be the matrix containing the values for the uplink. For this simulation, we assume $4$ users and $3$ base stations. Focusing on the uplink, note that user $\#1$ (first row of \ref{eq:Test1_ratesUL}) may achieve similar performance from any of the three base stations. In addition, users $2$ and $4$ will presumably be associated to base station $\#2$.

\begin{equation}\label{eq:Test1_ratesUL}
 \text{Rates}_{\textsc{ul}} = \begin{pmatrix}
    28 & 30 & 28 \\
    0.5 & 15 & 1 \\
    30 & 1 & 5.2 \\
    0.3 & 32 & 0.5
  \end{pmatrix}
\end{equation}

Figure \ref{fig:test1_ULMult} shows the evolution of the uplink multipliers. We observe that the uplink multiplier for \textsc{bs} $\#2$ has converged to an stable value. Conversely, \textsc{bs}'s uplink multiplier has not converged for \textsc{bs}s $\#1$ and $\#3$.

\begin{figure}[!htb]
  \begin{center}
    \includegraphics[scale=0.7]{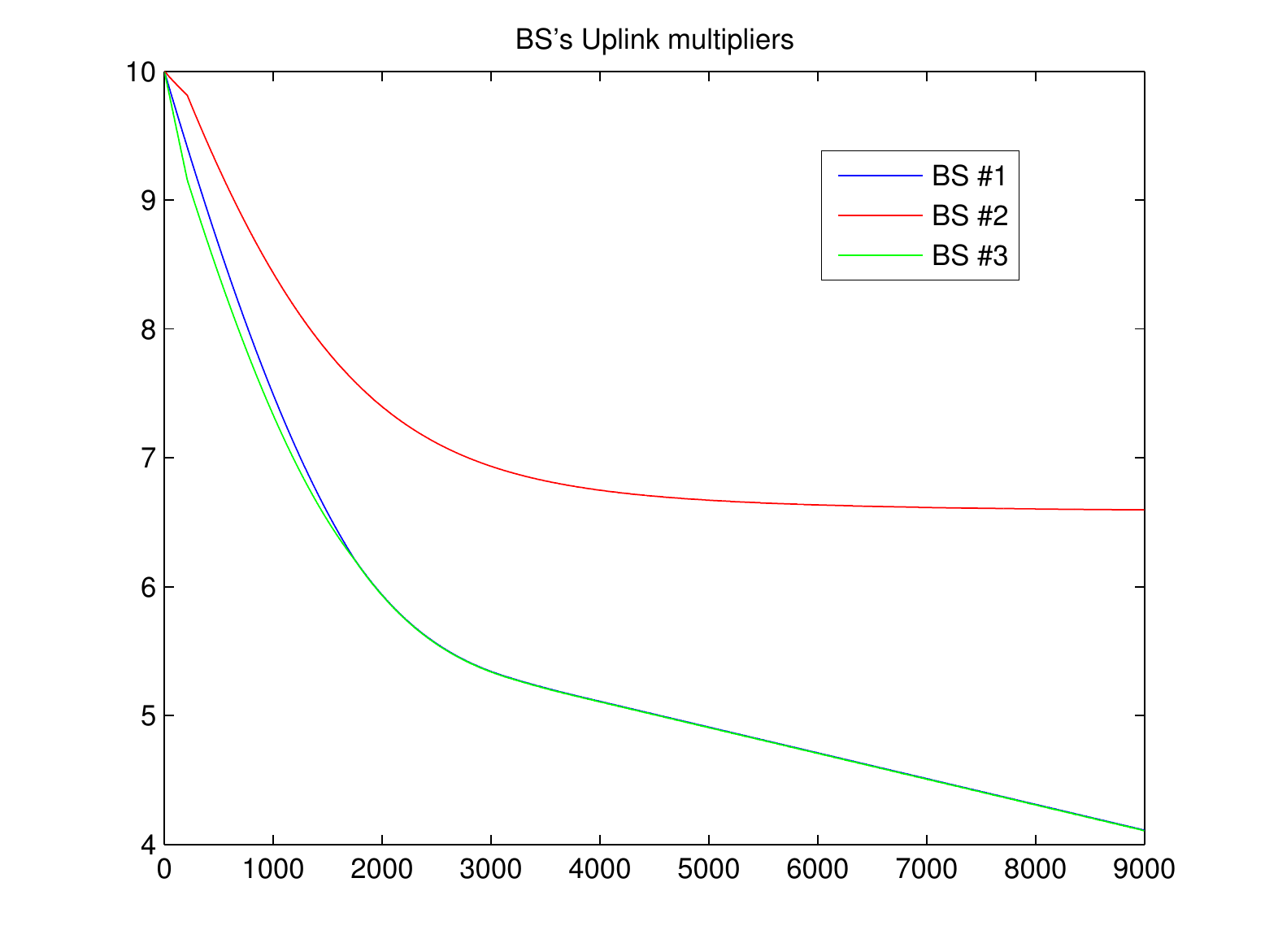}
    \caption{Base stations' uplink multipliers.}
    \label{fig:test1_ULMult}
  \end{center}
\end{figure}

This is due to the fact that User $\#1$ is constantly jumping between base station $\#1$ and $\#3$. User $\#1$ associates \textsc{bs} $\#1$ which contributes to rise the price of that base station. In the meantime, the multiplier of base station $\#3$ is reducing its value. Hence, User $\#1$ decided to switch to base station $\#3$ and the process starts over again. We can observe this in figure \ref{fig:test1_ULMult_detail}. There exist several solutions to this undesirable behaviour such as establishing a hysteresis model for changing associations, i.e, a user does not associate to a different base station if the gain does not exceed a given threshold. Another option would be setting a guard time during which a user does not consider an association change. The best choice depends, however, on many factors and the decision should be left to the operator. Despite this, note that this scenario might not be typical. We assume that in a real deployment, \textsc{bs}s will be far enough from each other and the ripple is more likely to appear at the cell edges, which is a minor part of the deployment space.

\begin{figure}[!htb]
  \begin{center}
    \includegraphics[scale=0.7]{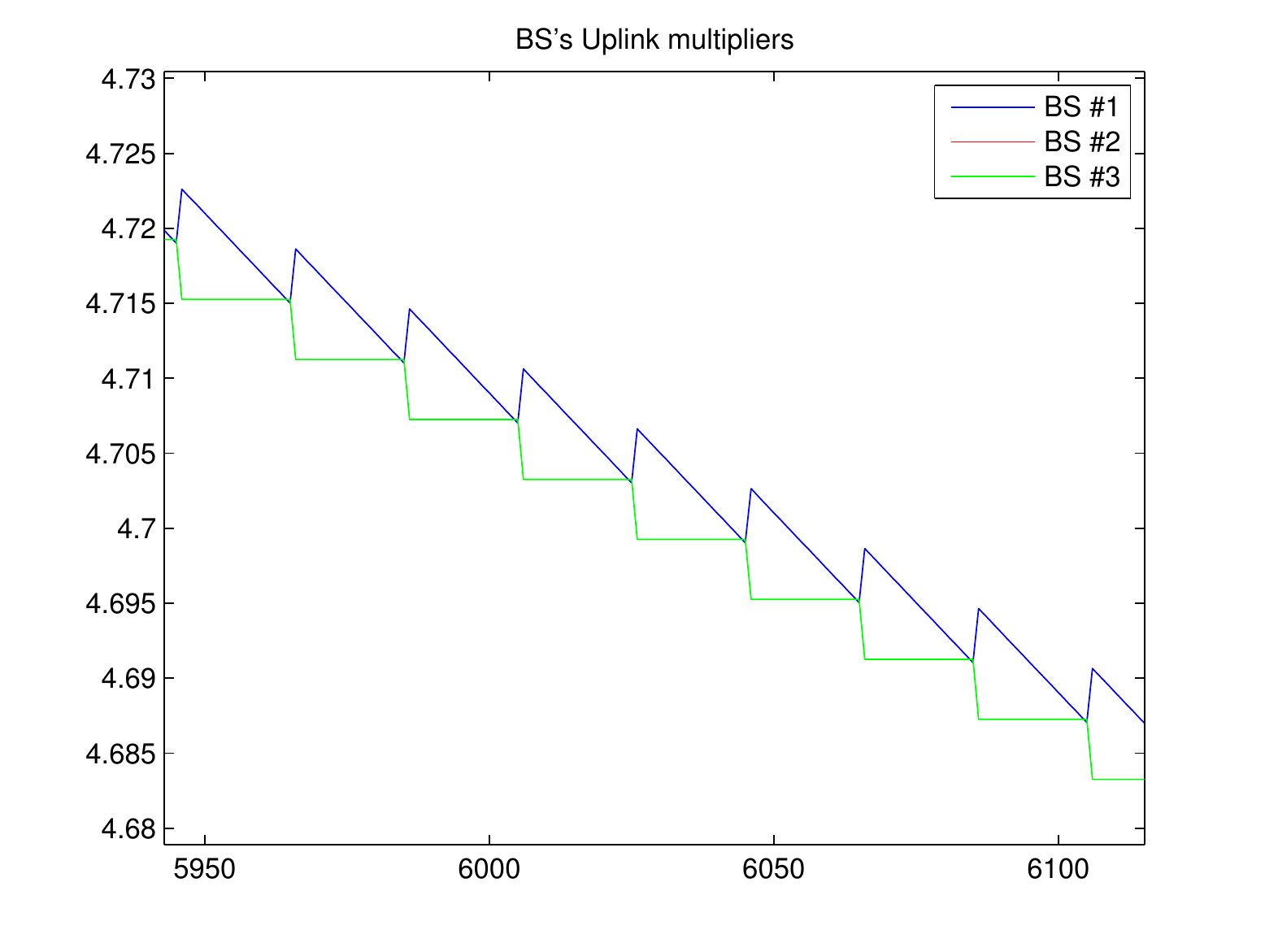}
    \caption{Base stations' uplink multipliers oscillation.}
    \label{fig:test1_ULMult_detail}
  \end{center}
\end{figure}

\subsubsection*{Test 2 - Tightness of inequalities}
The aim of this test is to check how strict the \eqref{eq:constraint-msa-absA} - \eqref{eq:constraint-msa-absB} and \eqref{eq:constraint-msa-dl} - \eqref{eq:constraint-msa-ul} constraints are. After a little thought, we realise that if the non-negative value $\epsilon_u$ is too small, the set of equations which establish a maximum amount of rate asymmetry \emph{per user} holds with equality, preventing the base stations serving these users from granting all their resources. For the simulations, we set the fairness parameter $\alpha = 0.5$ and the per-user asymmetry parameter $\epsilon_u = 2$ (we use the same value for each one of the four users). In addition, the rate matrices for uplink and downlink are shown below:

\begin{equation}
\text{Rates}_{\textsc{dl}} = \begin{pmatrix}
    8 & 1 & 29 \\
    0.5 & 15 & 1 \\
    25 & 2 & 2 \\
    8 & 28 & 0.9
  \end{pmatrix};\quad
\text{Rates}_{\textsc{ul}} = \begin{pmatrix}
    2 & 1 & 25 \\
    0.5 & 15 & 1 \\
    30 & 1 & 5.2 \\
    0.3 & 32 & 0.5
  \end{pmatrix}
\end{equation}

After running the simulation, we get the following results, depicted in Figs.~\ref{fig:test2_multipliers} and~\ref{fig:test2_allocations}

\begin{figure}[!htb]
\centering     
\subfigure[Downlink multipliers.]{\label{fig:test2_DLMult}\includegraphics[scale=0.46]{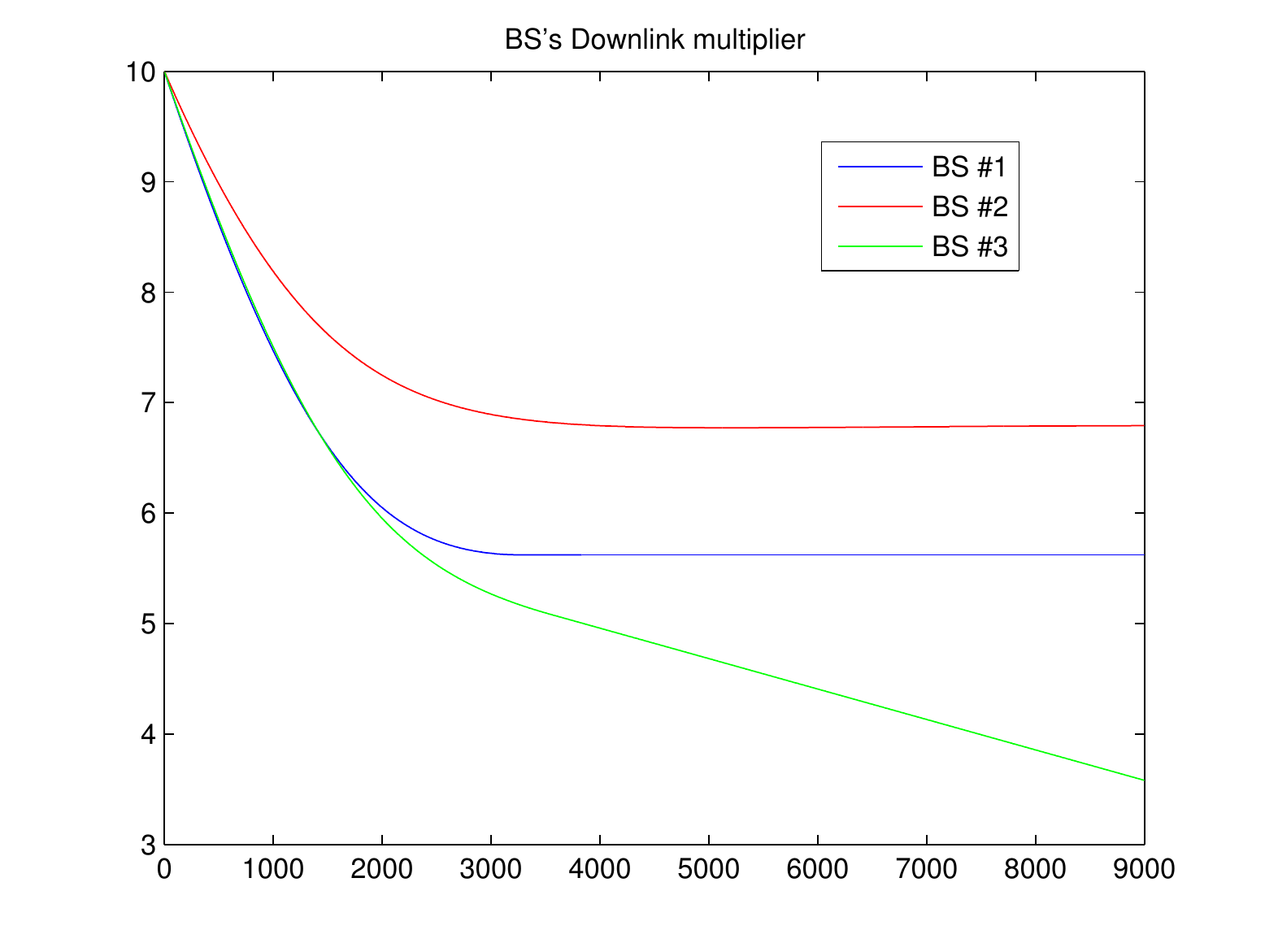}}
\subfigure[Uplink multipliers.]{\label{fig:test2_ULMult}\includegraphics[scale=0.46]{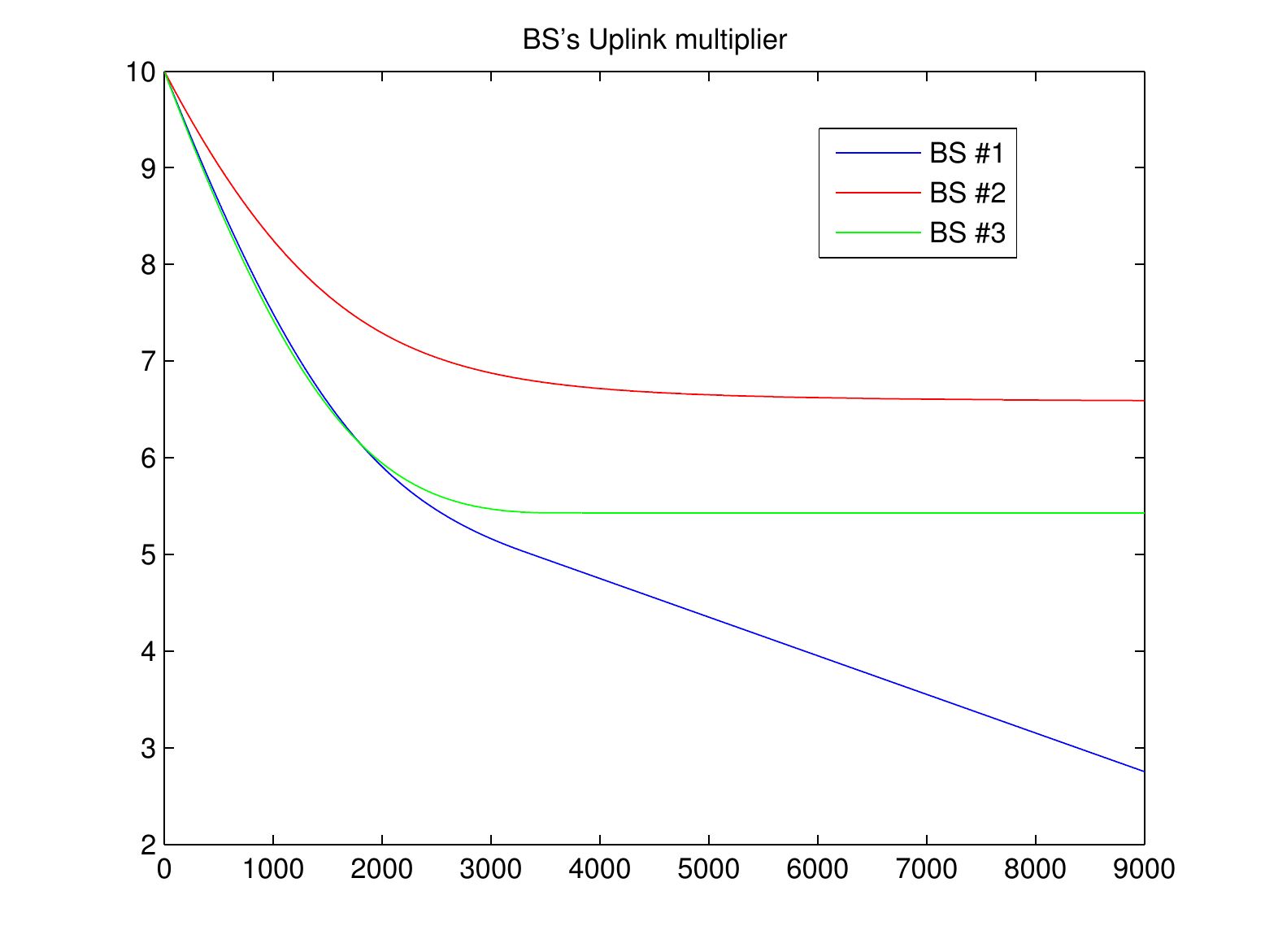}}
\caption{BSs' multipliers.}
\label{fig:test2_multipliers}
\end{figure}

\begin{figure}[!htb]
\centering     
\subfigure[Downlink allocations.]{\label{fig:test2_DLAlloc}\includegraphics[scale=0.46]{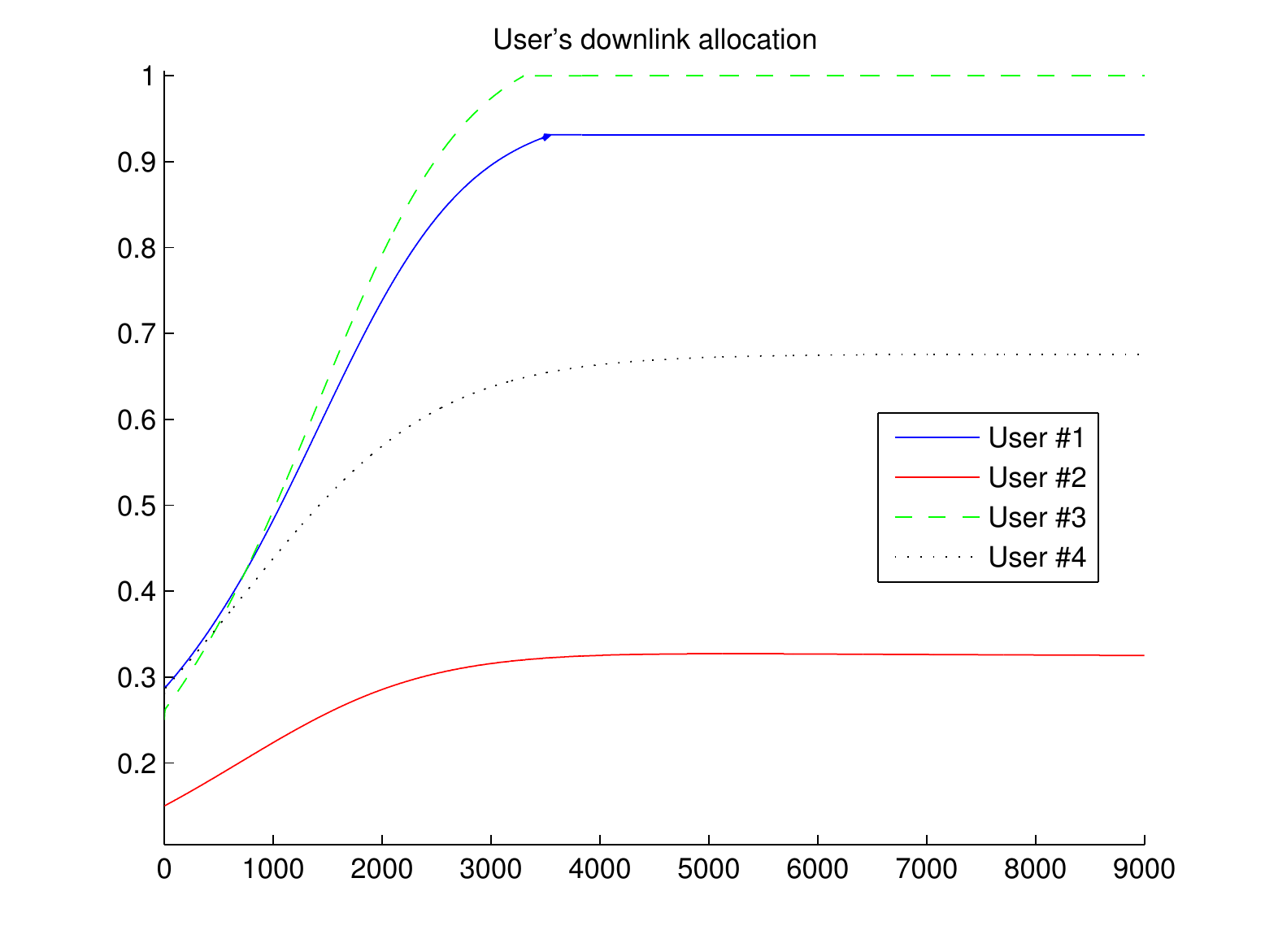}}
\subfigure[Uplink allocations.]{\label{fig:test2_ULAlloc}\includegraphics[scale=0.46]{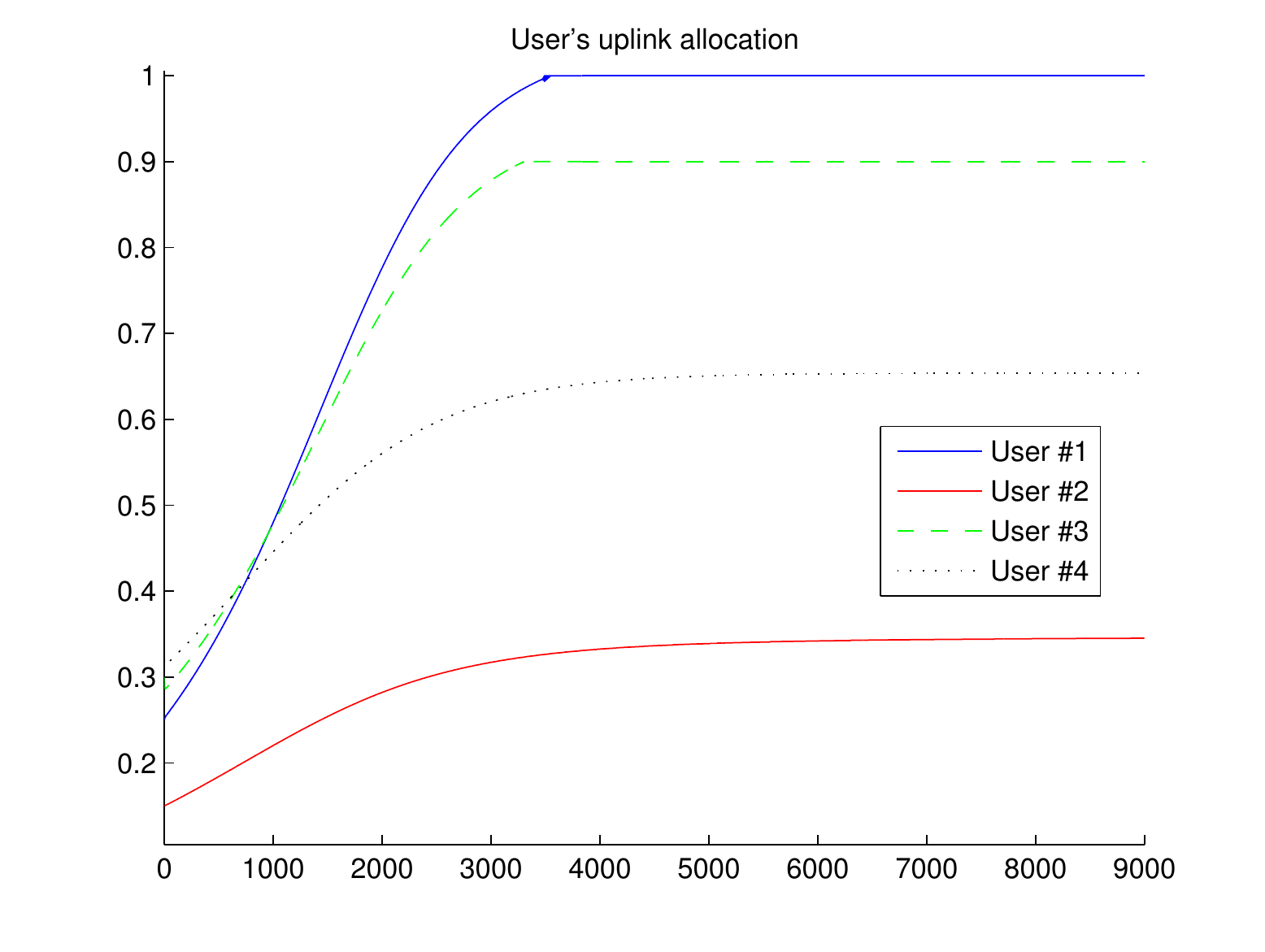}}
\caption{Users' allocations.}
\label{fig:test2_allocations}
\end{figure}

Besides, final allocation values are

\begin{equation}\label{eq:test2_allocations}
\text{Allocation}_{\textsc{dl}} = \begin{pmatrix}
    0 & 0 & 0.9311 \\
    0 & 0.3252 & 0 \\
    1 & 0 & 0 \\
    0 & 0.6757 & 0
  \end{pmatrix};
\text{Allocation}_{\textsc{ul}} = \begin{pmatrix}
    0 & 0 & 1 \\
    0 & 0.3452 & 0 \\
    0.9001 & 0 & 0 \\
    0 & 0.6537 & 0
  \end{pmatrix}
\end{equation}

As shown in figure~\ref{fig:test2_allocations}, downlink allocation for User$\#1$ does not converge to one even though it is the only user in \textsc{bs}$\#3$, regarding downlink allocation matrix in \eqref{eq:test2_allocations}. The same idea holds true for User$\#3$ in the uplink at \textsc{bs}$\#1$. This behaviour is explained by the value of $\epsilon_u$ which hinders the access to the whole pool of resources. If we examine the numerical results carefully, we may clarify where these convergence points come from. Understanding the effective downlink rate for User$\#1$  as the maximum rate multiplied by the allocation, i.e.,  $\text{User\#1}_\text{DLrate} = \text{Rates}_{\textsc{dl}}(1,\, Chosen\, \textsc{bs}) \cdot \text{Allocation}_{\textsc{dl}}(1,\, Chosen\, \textsc{bs})$, the effective downlink and uplink rates for User $\#1$ are:

\begin{subequations}
\begin{equation}
\textbf{\text{User\#1}}_\text{DLrate} = 29\, [bits/s/Hz]\, \cdot\, 0.9311 = 27.0019 \, [bits/s/Hz].
\end{equation}
\begin{equation}
\textbf{\text{User\#1}}_\text{ULrate} = 25\, [bits/s/Hz]\, \cdot\, 1 = 25 \, [bits/s/Hz].
\end{equation}
\end{subequations}

As the reader may have already noticed, asymmetry for User $\#1$ is $27.0019 - 25 \simeq 2 = \epsilon_u$. Following an analogous procedure for User$\#3$ yields: $|25 - 27.003| \simeq 2 = \epsilon_u$. This is why some base stations are not distributing all their resources. Despite the fact that this behaviour matches the mathematical model, it might not be desirable in a real deployment. As a consequence, the performance of the overall system is degraded. To overcome this issue, we shall make slight modifications to the original algorithm. Namely, we are going to focus on equations \eqref{eq:MSA_User_Alloc_DL} - \eqref{eq:MSA_User_Alloc_UL}. Recall that optimal downlink allocation was

\begin{equation}\label{eq:up}
y_{ub_i}^\star = \bigl( \frac{r_{ub_i}^{1-\alpha}}{\nu_{b_i} - r_{ub_i}(\lambda^\prime_u-\lambda_u)} \bigr)^{1/\alpha}
\end{equation}

As we have illustrated, in some cases the algorithm is not able to grant all the available resources due to the asymmetry constraint. This is caused by the effect of the denominator in \eqref{eq:up}. Continuing the example for User$\#1$, as the excess rate causing the asymmetry points towards the downlink, $\lambda_u$ will be greater than $\lambda^\prime_u$ for this user. Therefore, $r_{ub_i}(\lambda^\prime_u-\lambda_u)$ is going to be more and more negative each iteration, as the base station grants more resources. On the other hand, $\nu_{b_i}$ will keep getting smaller, in order to allocate more resources to that user. Ultimately, the \textsc{lhs} and the \textsc{rhs} of the denominator reach a point of equillibrium in which each one will compensate any change on the other so as to enforce the asymmetry constraint. To avoid these problems, we confine the effect of the user's multipliers to the decision of choosing a base station since they have no redeeming features on the resource allocation process. Hence, focusing on the downlink, the decision of which \textsc{bs} to associate to remains the same, i.e., a user will associate to the \textsc{bs} in downlink which minimises

\begin{equation*}
\frac{\nu_b - r_{ub}(\lambda^\prime_u-\lambda_u)}{r_{ub}}.
\end{equation*}

Conversely, the optimal allocation for a user in downlink becomes

\begin{equation}
y_{ub_i}^\star = \bigl( \frac{r_{ub_i}^{1-\alpha}}{\nu_{b_i}} \bigr)^{1/\alpha}
\end{equation}

This way, we can assure that all the resources will be used and we take into account rate asymmetry while choosing a base station. The same reasoning applies to the uplink.

\subsubsection*{Test 3 - $\alpha$ - fairness value}
We now focus on exploring the effect of the fairness parameter on the system performance, especially in those cases where a base station is serving more than one user a the same time. Let $\epsilon_u = 2$ be the asymmetry parameter which is the same for all users. Also, uplink and downlink rate matrices are

\begin{equation}
\text{Rates}_{\textsc{dl}} = \begin{pmatrix}
    8 & 1 & 29 \\
    0.5 & 15 & 1 \\
    25 & 2 & 2 \\
    8 & 28 & 0.9
  \end{pmatrix};\quad
\text{Rates}_{\textsc{ul}} = \begin{pmatrix}
    8 & 1 & 25 \\
    0.5 & 15 & 1 \\
    30 & 1 & 5.2 \\
    0.3 & 32 & 0.5
  \end{pmatrix},
\end{equation}

which are pretty similar to those of Test 2 and will remain unchanged for the three subtests with $\alpha = 0.5$, $\alpha = 1$ and $\alpha = 2$. Firstly, we will observe the resource allocation when the $\alpha$ - fairness parameter is set to $0.5$. Figures~\ref{fig:test3_multipliers_half} and~\ref{fig:test3_allocations_half} show the evolution of the multipliers and the amount of resources granted to each user, respectively.

\begin{figure}[!htb]
\centering     
\subfigure[Downlink multipliers.]{\label{fig:test3_DLMult_half}\includegraphics[scale=0.46]{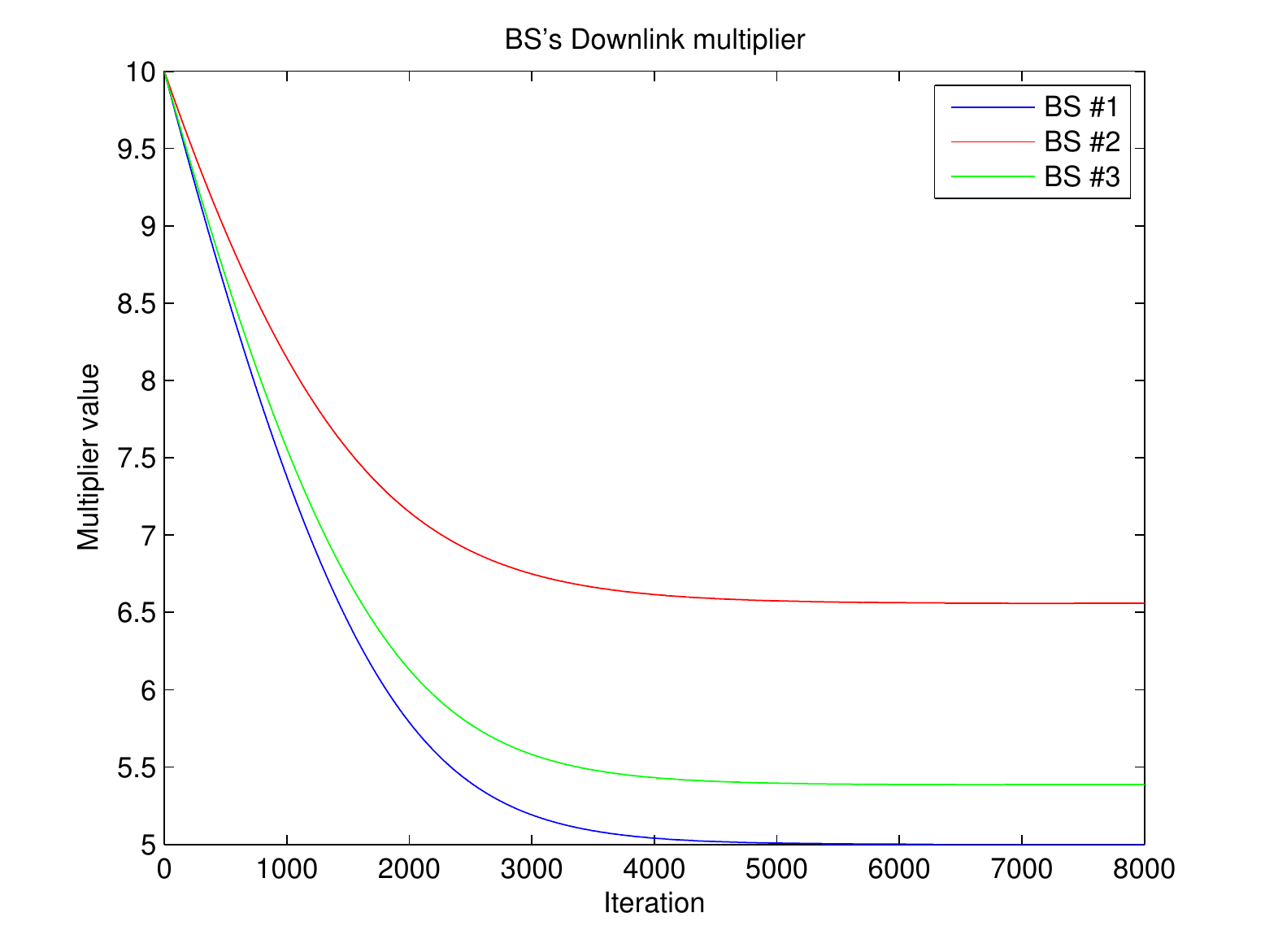}}
\subfigure[Uplink multipliers.]{\label{fig:test3_ULMult_half}\includegraphics[scale=0.46]{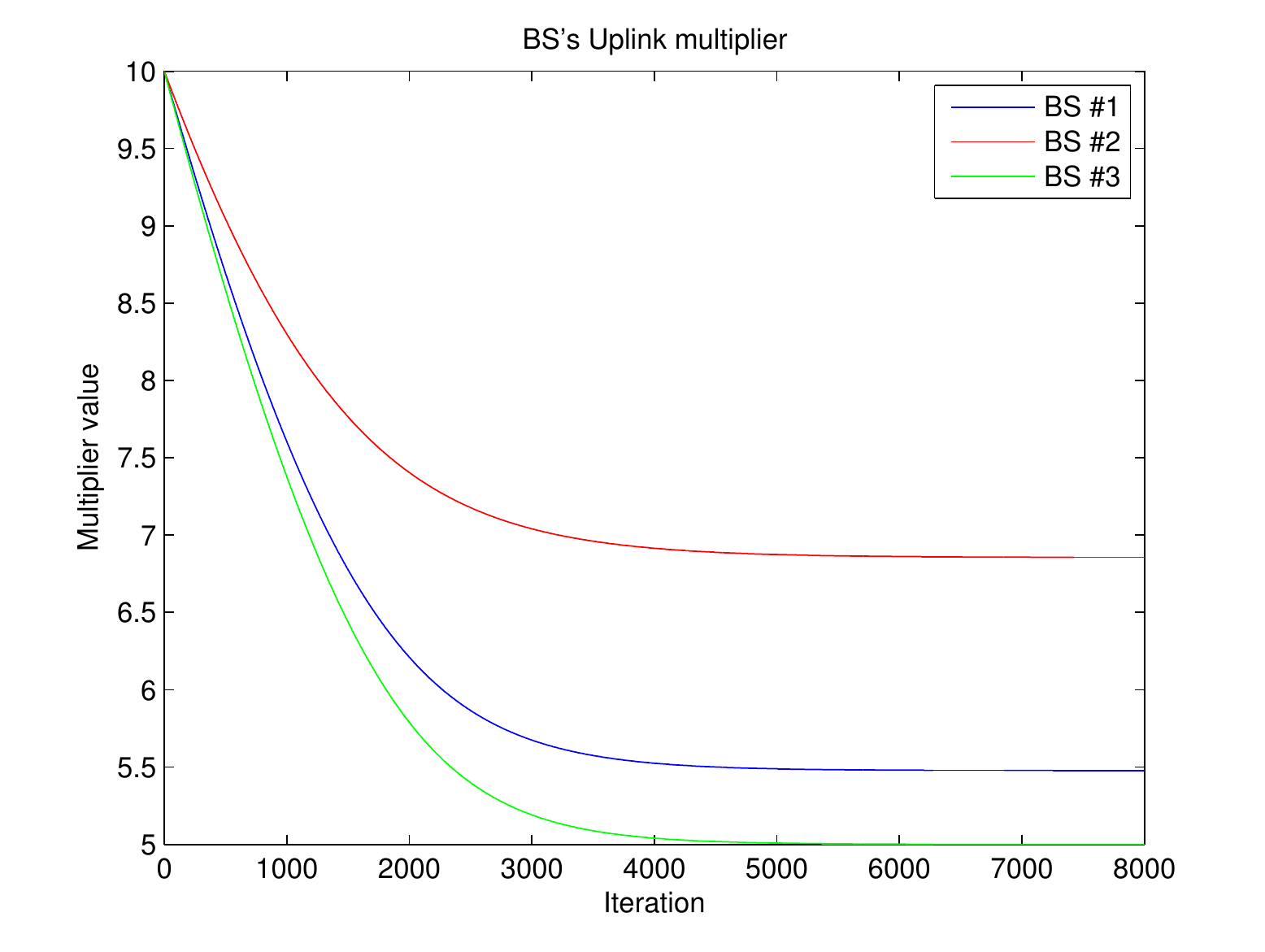}}
\caption{BSs' multipliers.}
\label{fig:test3_multipliers_half}
\end{figure}

\begin{figure}[!htb]
\centering     
\subfigure[Downlink allocations.]{\label{fig:test3_DLAlloc_half}\includegraphics[scale=0.46]{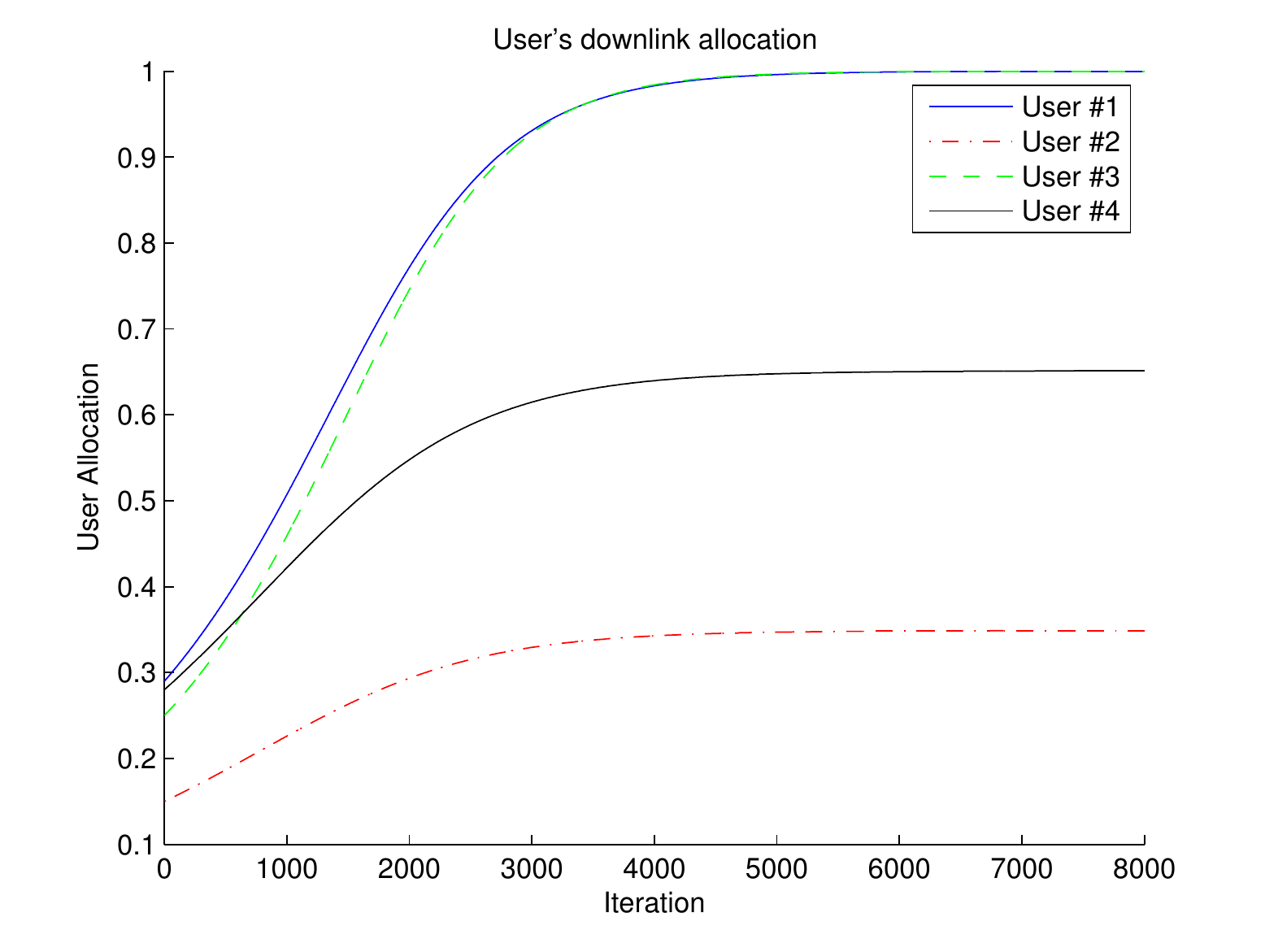}}
\subfigure[Uplink allocations.]{\label{fig:test3_ULAlloc_half}\includegraphics[scale=0.46]{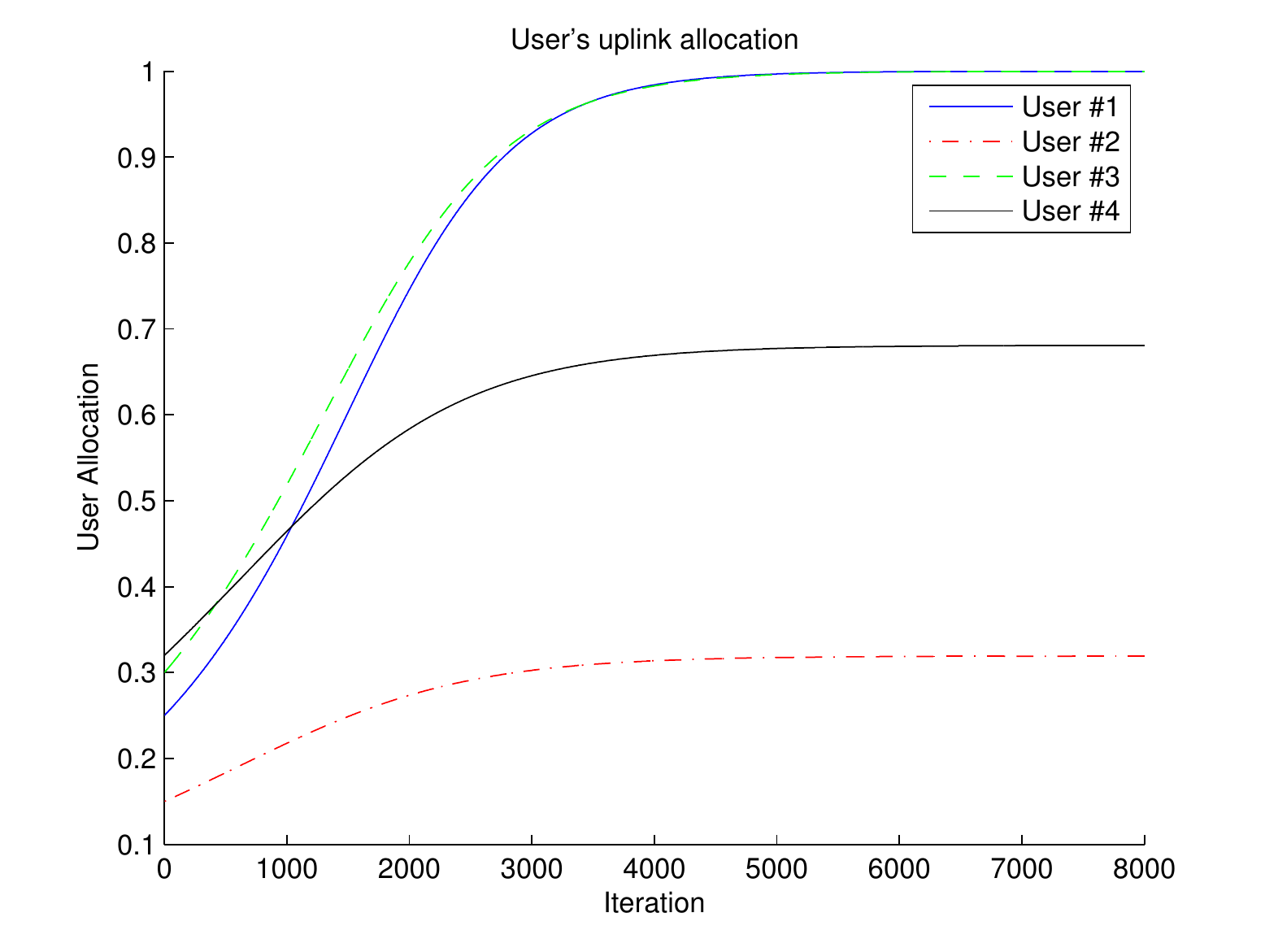}}
\caption{Users' allocations.}
\label{fig:test3_allocations_half}
\end{figure}

Note that \textsc{bs}s' multipliers have converged to a stable value for both links. In addition, it is worth mentioning that now, all the resources are being used without affecting the association decision. Regarding the resouce allocation, we shall inspect the allocation matrices.

\vspace*{-0.5cm}
\begin{equation}\label{eq:test3_allocations}
\text{Allocation}_{\textsc{dl}} = \begin{pmatrix}
    0 & 0 & 1 \\
    0 & 0.3488 & 0 \\
    1 & 0 & 0 \\
    0 & 0.6511 & 0
  \end{pmatrix};\quad
\text{Allocation}_{\textsc{ul}} = \begin{pmatrix}
    0 & 0 & 1 \\
    0 & 0.3191 & 0 \\
    1 & 0 & 0 \\
    0 & 0.6807 & 0
  \end{pmatrix}.
\end{equation}

Observe that \textsc{bs}$\#2$ is serving two of the four users in the system. As we expected since $\alpha < 1$,  we are facing a throughtput maximisation scenario where those users which are perceiving a greater spectral efficiency (user$\#4$) receive a larger amount of resources.

Now, we study how resources are allocated when the $\alpha$ - fairness parameter is equal to $1$. Recall that the rate matrices remain unchanged. As shown in figure~\ref{fig:test3_multipliers_one}, both uplink and downlink multipliers converge to a stable value. Note, in figure~\ref{fig:test3_allocations_one}, that users $\#2$ and $\#4$ are receiving the same amount of resources as a result of equally dividing the available resources of \textsc{bs}$\#2$ among the associated users. This can be easily checked by examining the final allocation matrices.

\begin{figure}[!htb]
\centering     
\subfigure[Downlink multipliers.]{\label{fig:test3_DLMult_one}\includegraphics[scale=0.46]{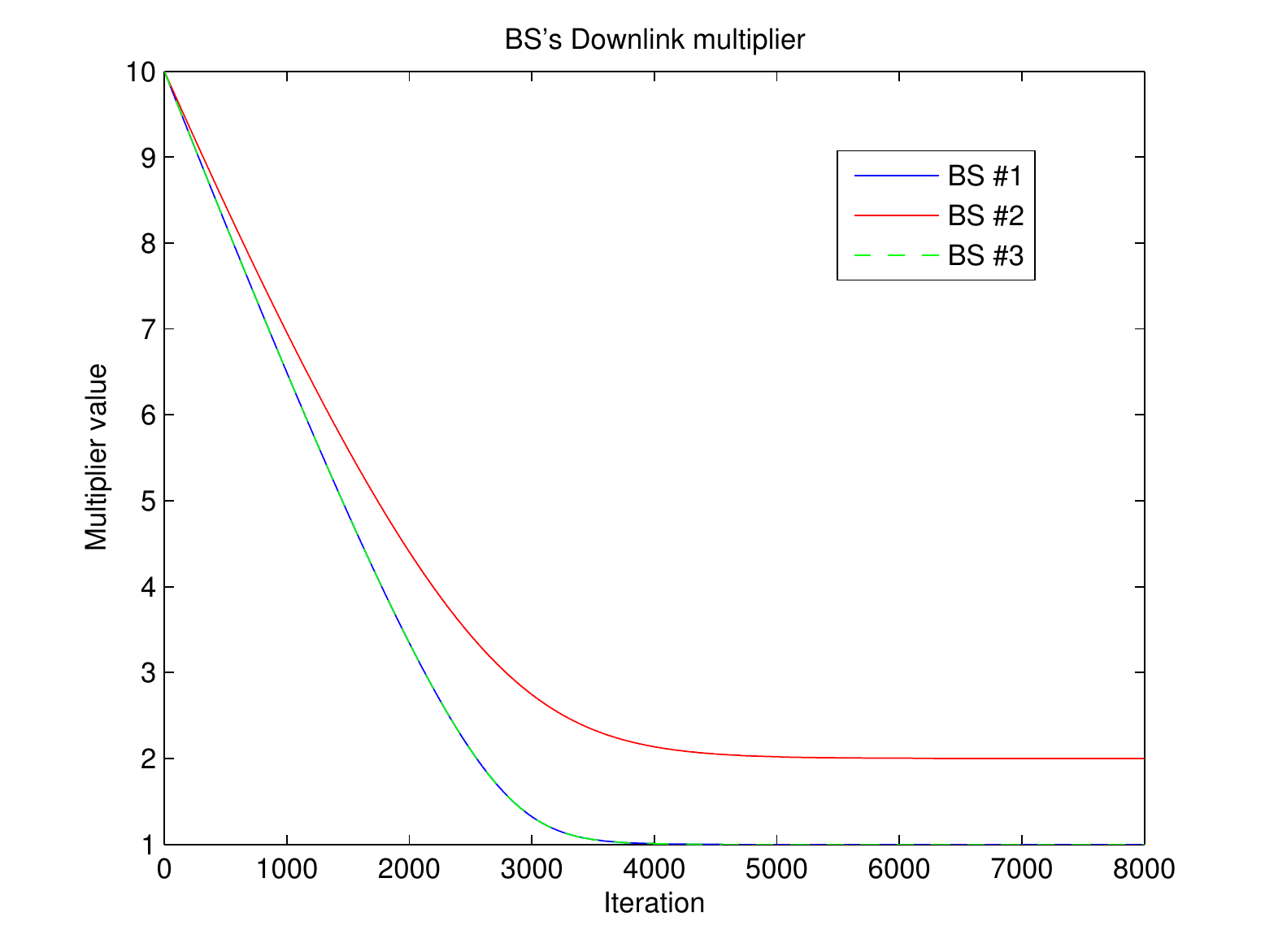}}
\subfigure[Uplink multipliers.]{\label{fig:test3_ULMult_one}\includegraphics[scale=0.46]{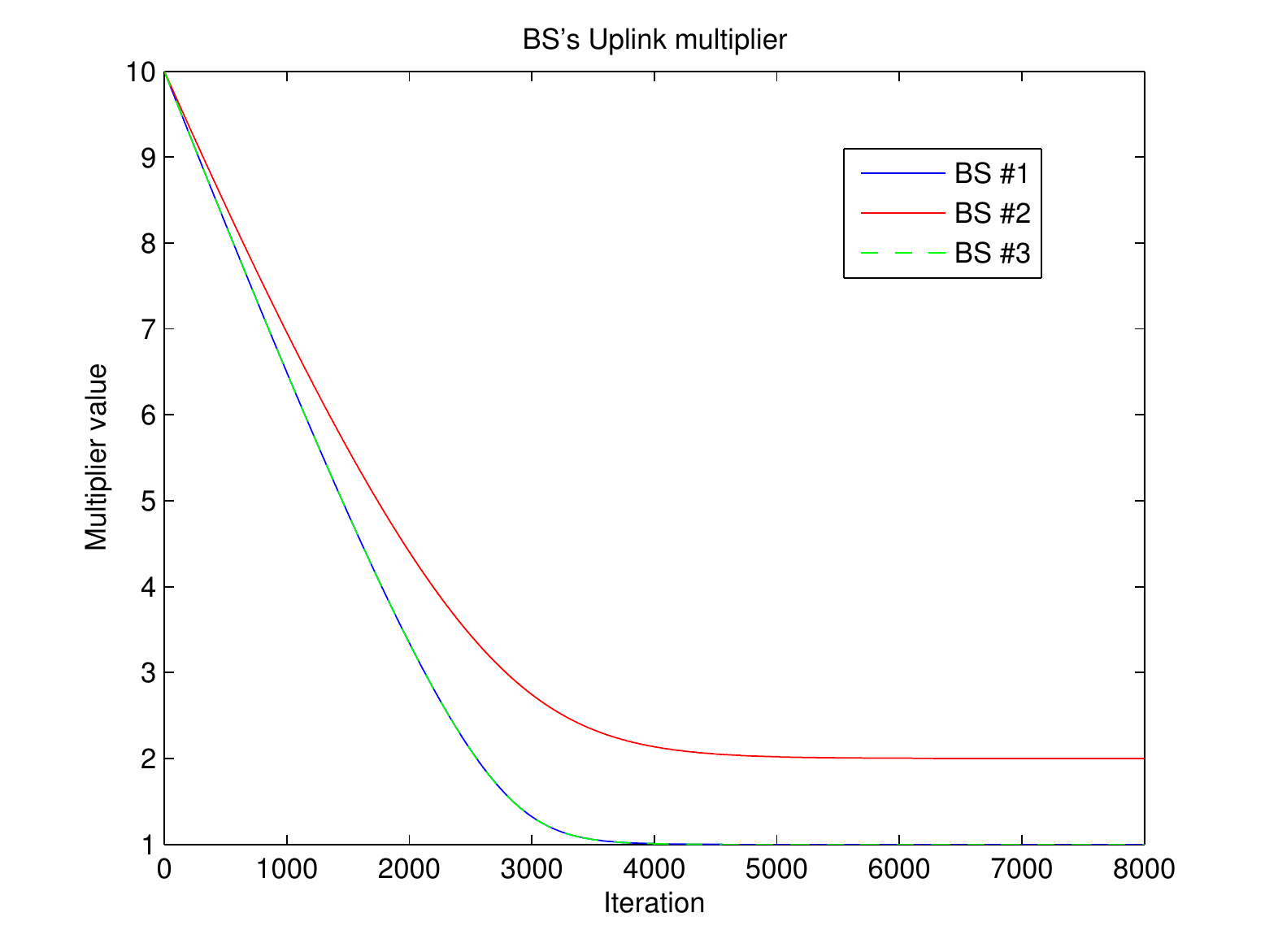}}
\caption{BSs' multipliers.}
\label{fig:test3_multipliers_one}
\end{figure}

\begin{figure}[!htb]
\centering     
\subfigure[Downlink allocations.]{\label{fig:test3_DLAlloc_one}\includegraphics[scale=0.46]{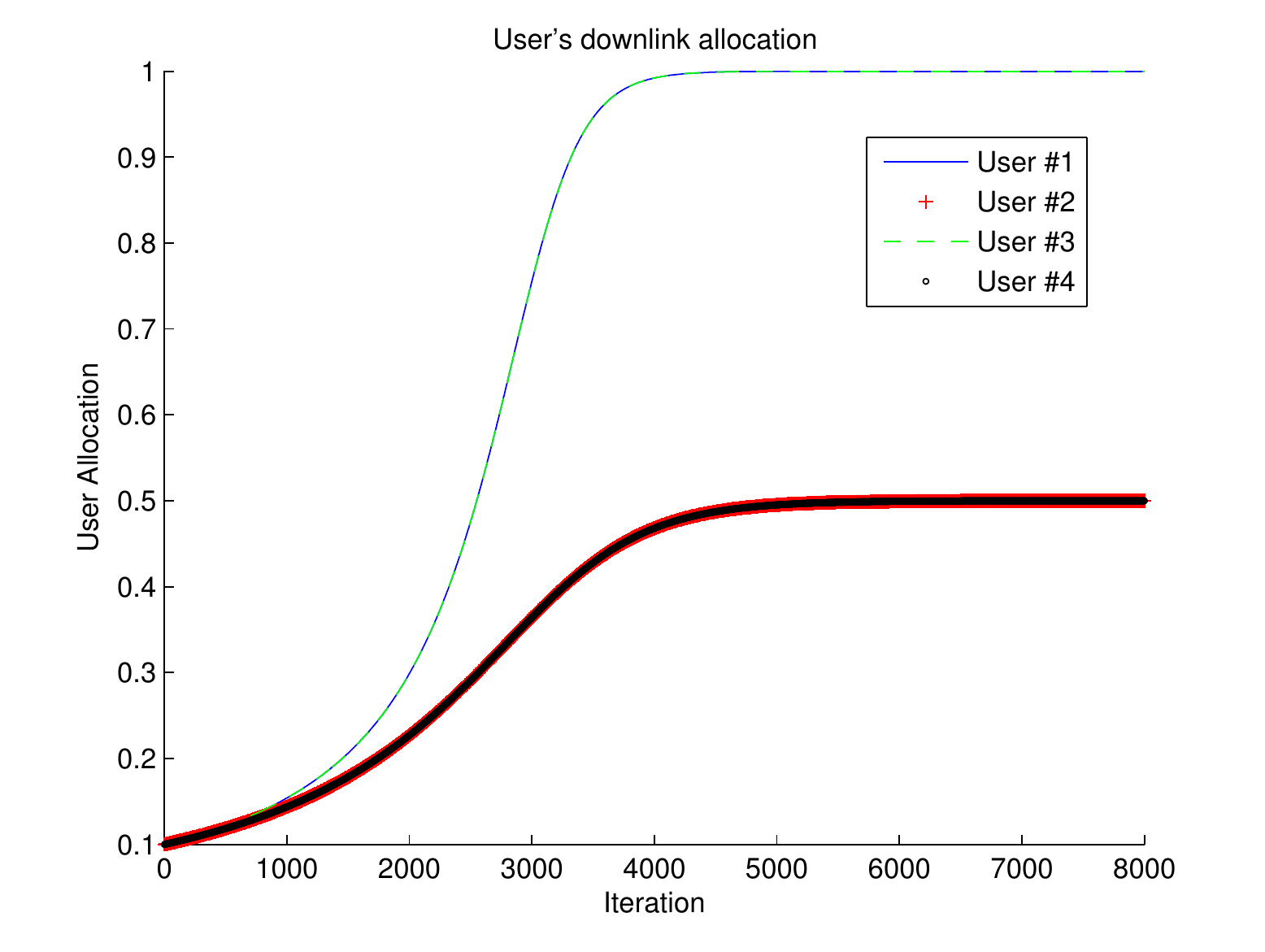}}
\subfigure[Uplink allocations.]{\label{fig:test3_ULAlloc_one}\includegraphics[scale=0.46]{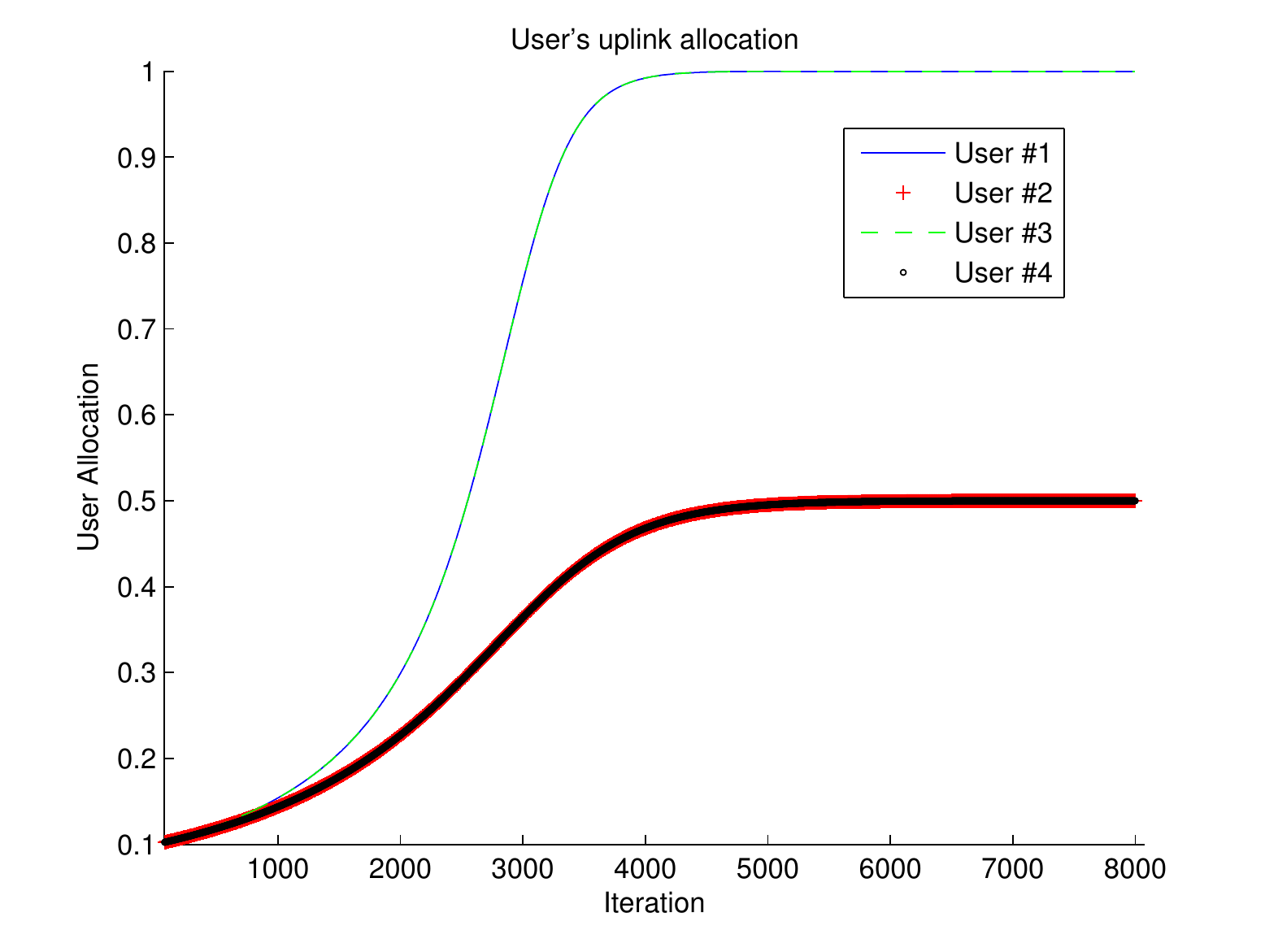}}
\caption{Users' allocations.}
\label{fig:test3_allocations_one}
\end{figure}

\begin{equation}\label{eq:test3_allocations_one}
\text{Allocation}_{\textsc{dl}} = \begin{pmatrix}
    0 & 0 & 1 \\
    0 & 0.5 & 0 \\
    1 & 0 & 0 \\
    0 & 0.5 & 0
  \end{pmatrix};\quad
\text{Allocation}_{\textsc{ul}} = \begin{pmatrix}
    0 & 0 & 1 \\
    0 & 0.5 & 0 \\
    1 & 0 & 0 \\
    0 & 0.5 & 0
  \end{pmatrix}.
\end{equation}

Finally, we evaluate the behaviour of the system when $\alpha > 1$. To that end, we set $\alpha = 2$ and repeat the same experiment. Again, figures~\ref{fig:test3_multipliers_two} and~\ref{fig:test3_allocations_two} show the evolution of the multipliers and allocations through the iterative process, respectively.

\begin{figure}[!htb]
\centering     
\subfigure[Downlink multipliers.]{\label{fig:test3_DLMult_two}\includegraphics[scale=0.46]{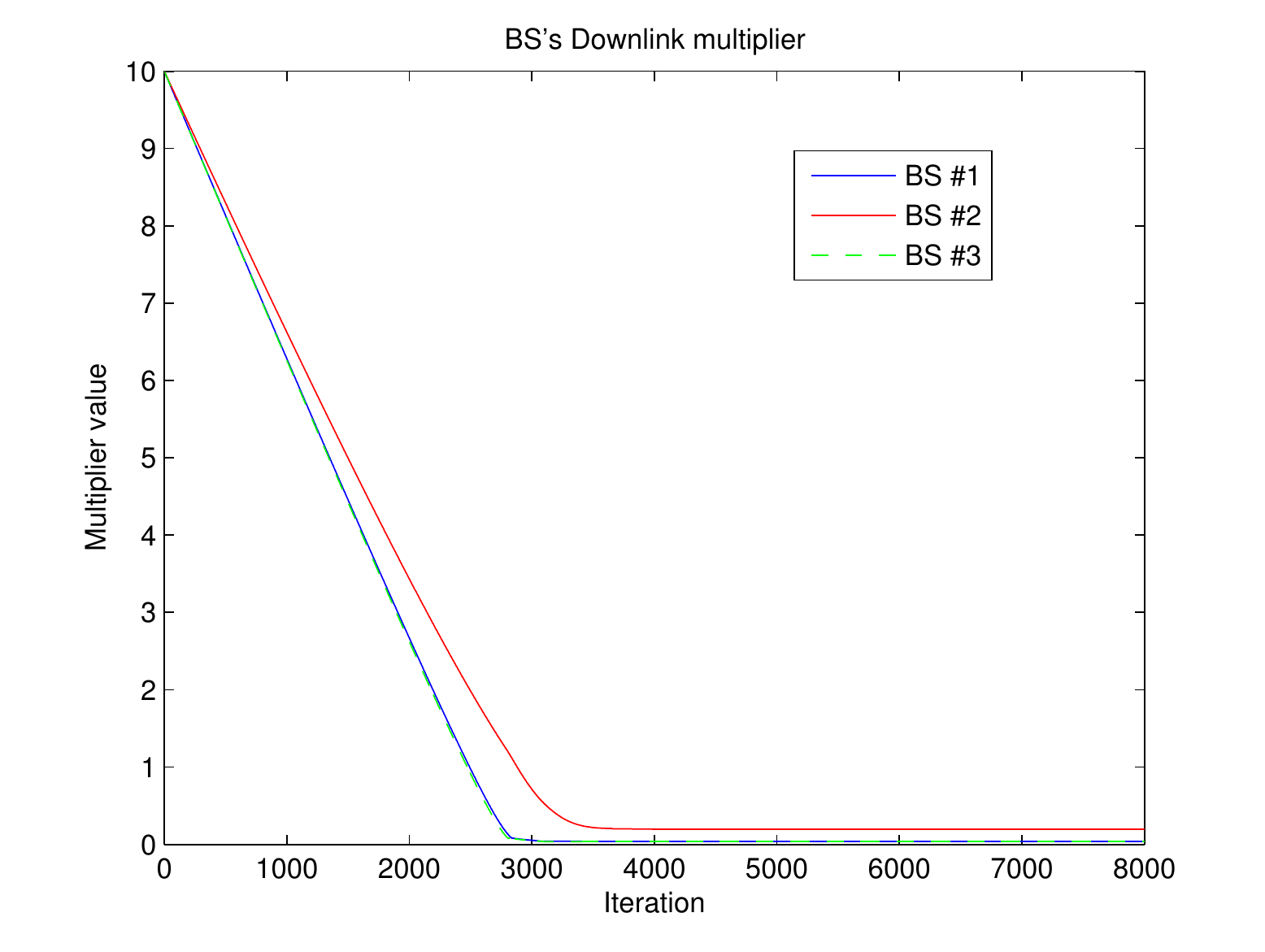}}
\subfigure[Uplink multipliers.]{\label{fig:test3_ULMult_two}\includegraphics[scale=0.46]{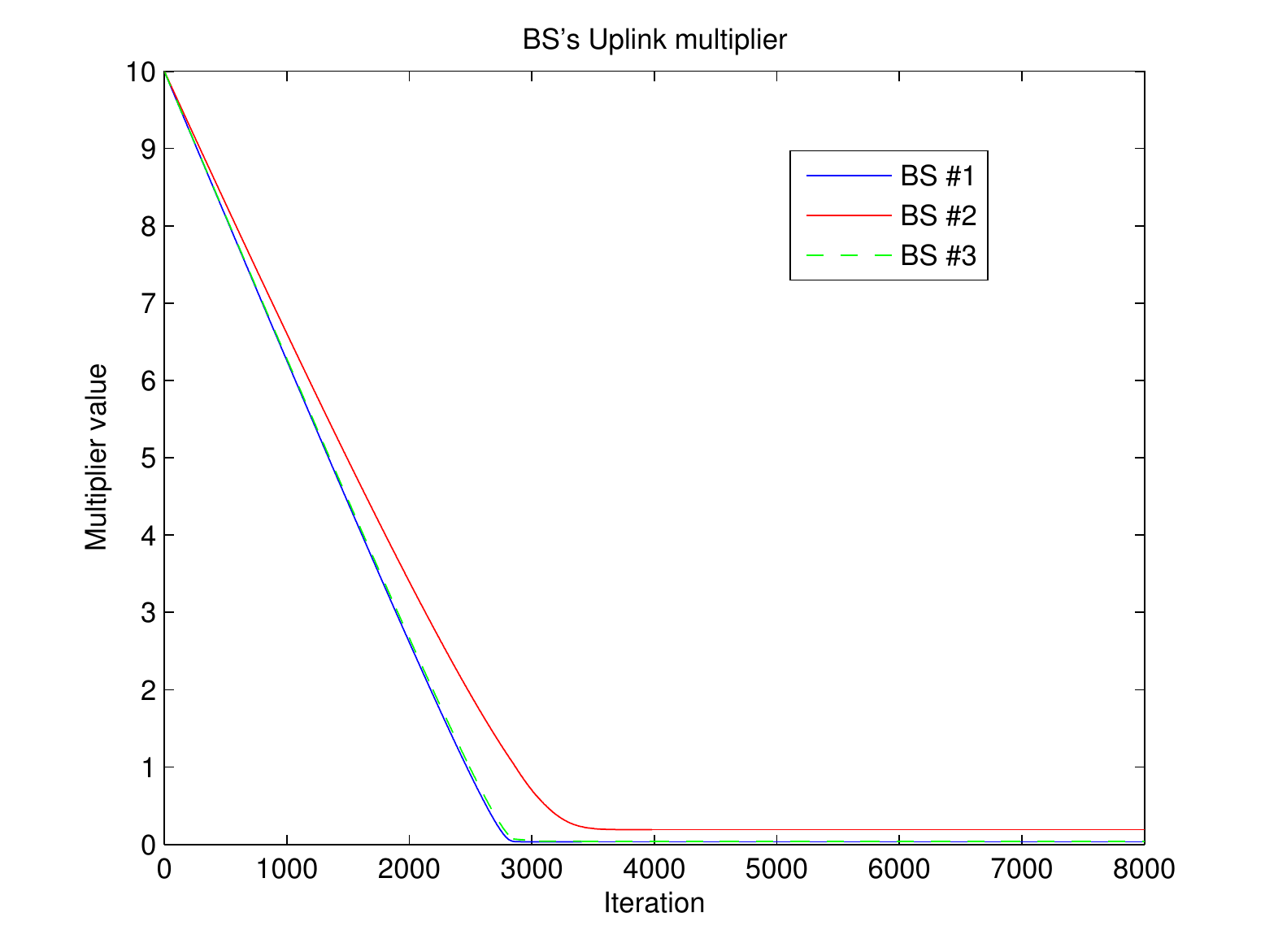}}
\caption{BSs' multipliers.}
\label{fig:test3_multipliers_two}
\end{figure}

\begin{figure}[!htb]
\centering     
\subfigure[Downlink allocations.]{\label{fig:test3_DLAlloc_two}\includegraphics[scale=0.46]{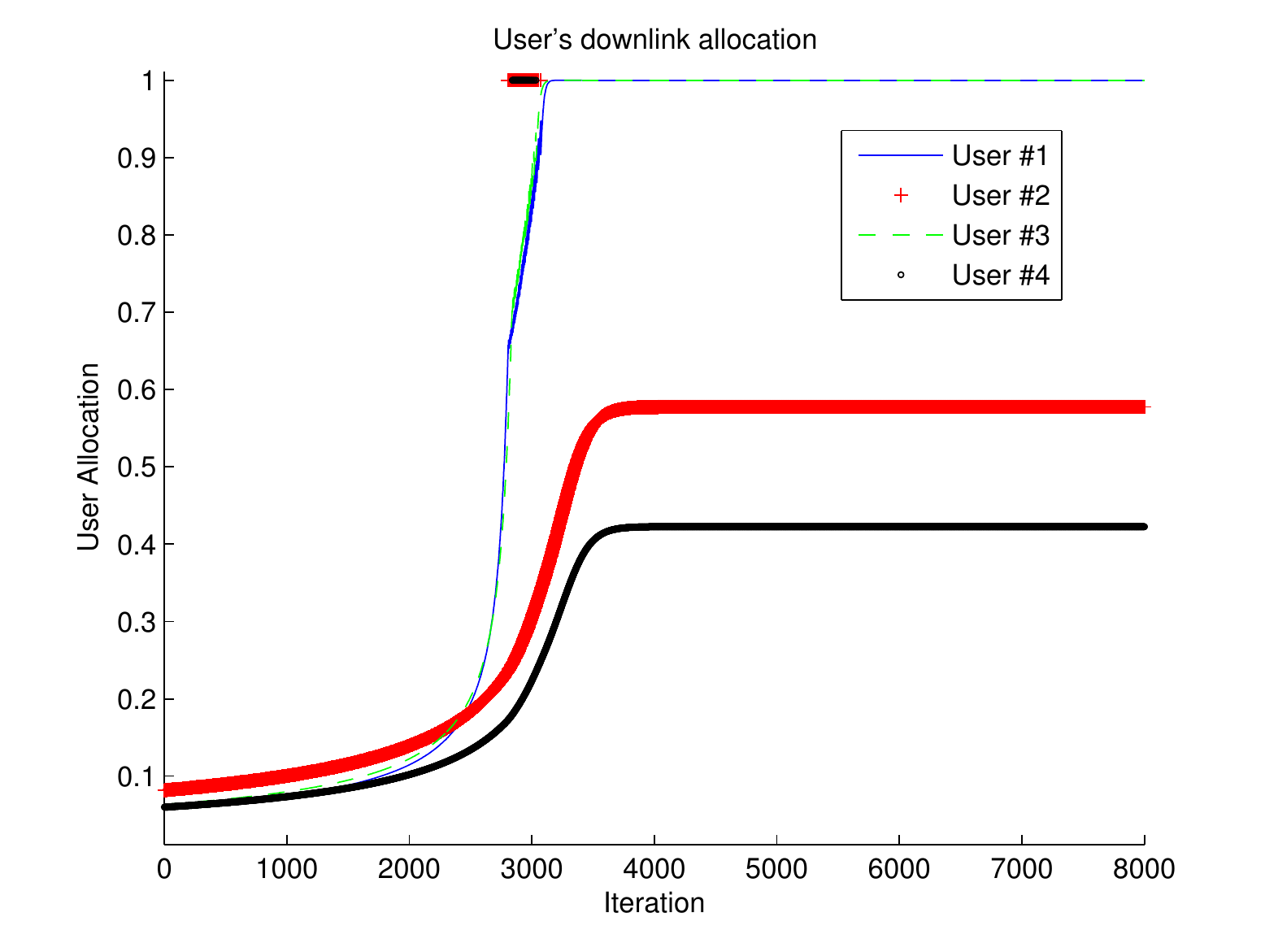}}
\subfigure[Uplink allocations.]{\label{fig:test3_ULAlloc_two}\includegraphics[scale=0.46]{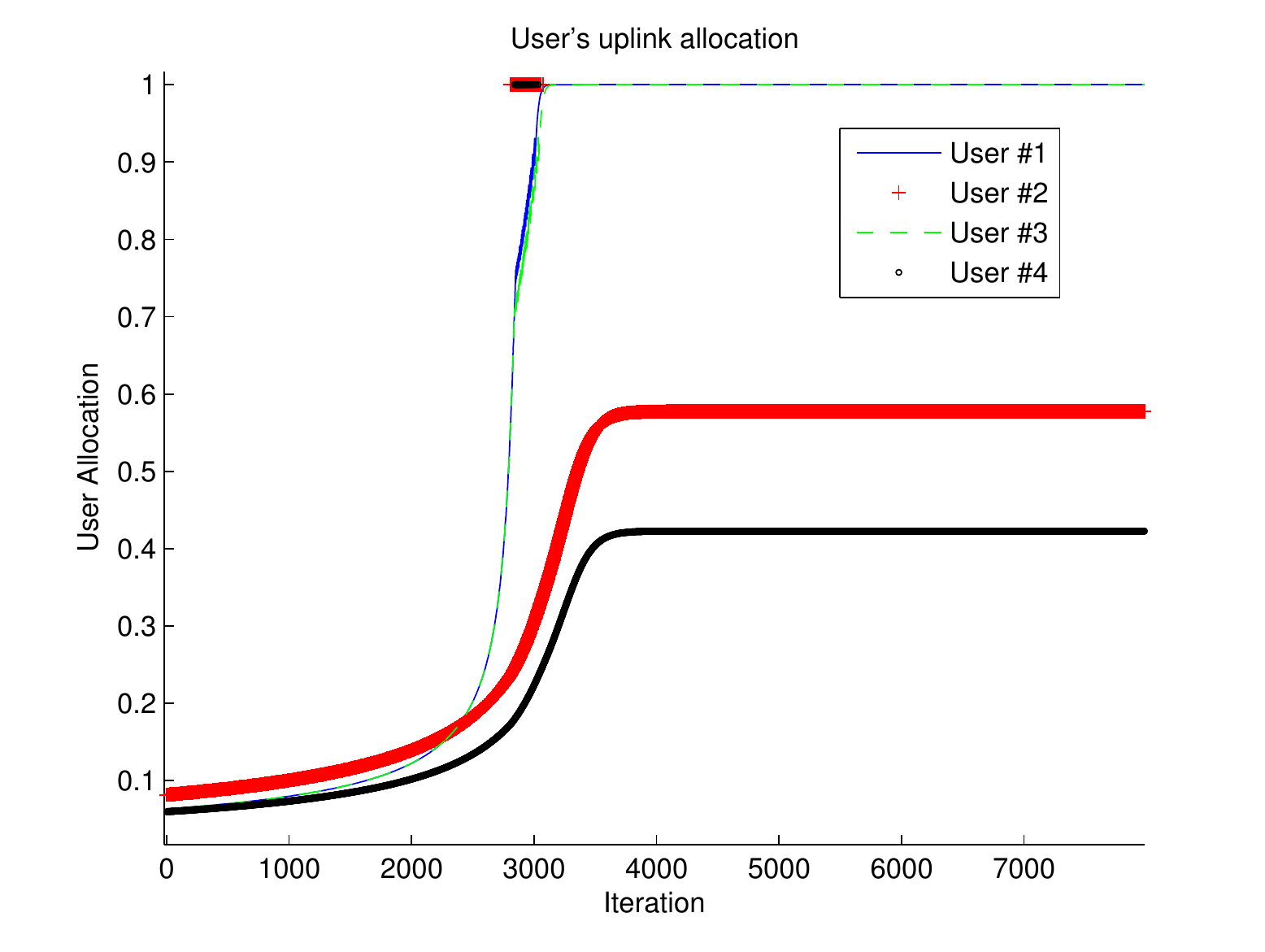}}
\caption{Users' allocations.}
\label{fig:test3_allocations_two}
\end{figure}

Now, as one would expect when the fairness parameter is greater than one, the lower the spectral efficiency of a given user, the more resources are granted to him. As an example, notice in \eqref{eq:test3_allocations_two} that user$\#2$ is receiving more resources than user$\#4$,  with whom he is sharing the base station.

\begin{equation}\label{eq:test3_allocations_two}
\text{Allocation}_{\textsc{dl}} = \begin{pmatrix}
    0 & 0 & 1 \\
    0 & 0.5774 & 0 \\
    1 & 0 & 0 \\
    0 & 0.4226 & 0
  \end{pmatrix};\quad
\text{Allocation}_{\textsc{ul}} = \begin{pmatrix}
    0 & 0 & 1 \\
    0 & 0.5936 & 0 \\
    1 & 0 & 0 \\
    0 & 0.4064 & 0
  \end{pmatrix}.
\end{equation}

\newpage
All the tests we have performed so far suggest that the implementation is working properly, since we are obtaining reasonable results. From here, we are going to test some more complex scenarios.

\subsubsection*{Test 4 - Uplink - Downlink decoupling (\textsc{dud}e)}
Untill now, we reported scenarios where the best option for the users was associating to a single base station in both downlink and uplink but we have not seen any example of decoupled access yet. In order to reveal this underlying feature, we adjust the rate matrices so as to encourage some users to embrace this paradigm. Besides, $\alpha$ is set to $0.5$. As with the previous test scenarios presented in this section, we reproduce the rate matrices below

 \begin{equation}
\text{Rates}_{\textsc{dl}} = \begin{pmatrix}
    8 & 1 & 29 \\
    0.5 & 15 & 1 \\
    25 & 2 & 2 \\
    8 & 28 & 0.9
  \end{pmatrix};\quad
\text{Rates}_{\textsc{ul}} = \begin{pmatrix}
    25 & 1 & 0.5 \\
    0.5 & 15 & 1 \\
    30 & 1 & 0.1 \\
    0.3 & 32 & 0.5
  \end{pmatrix}.
\end{equation}

Figure~\ref{fig:test4_multipliers} shows the change in \textsc{bs}s' multipliers during the simulation. Additionally, the amount of resources granted to each user can be checked in figure~\ref{fig:test4_allocations}.

\begin{figure}[!htb]
\centering     
\subfigure[Downlink multipliers.]{\label{fig:test4_DLMult}\includegraphics[scale=0.46]{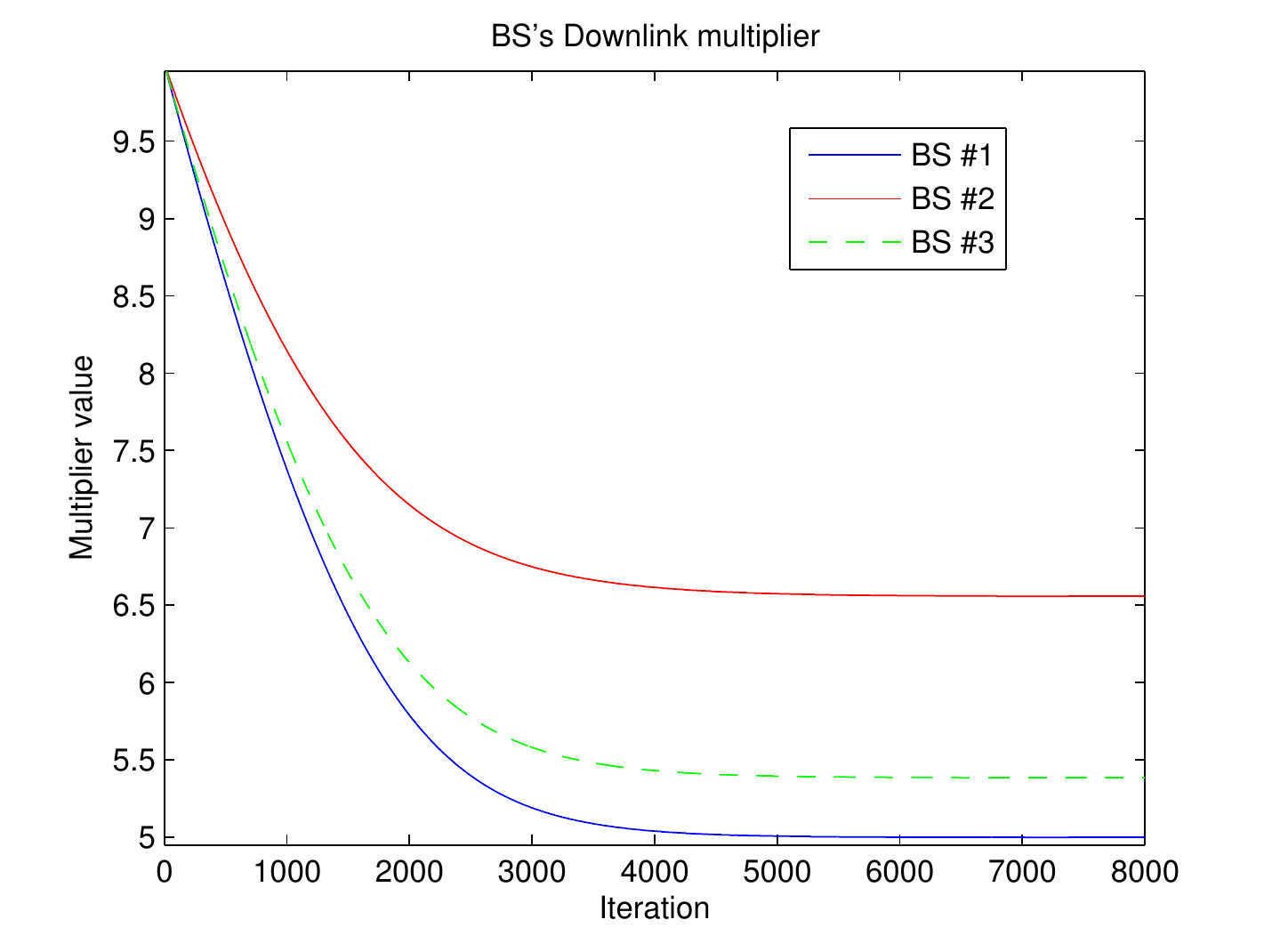}}
\subfigure[Uplink multipliers.]{\label{fig:test4_ULMult}\includegraphics[scale=0.46]{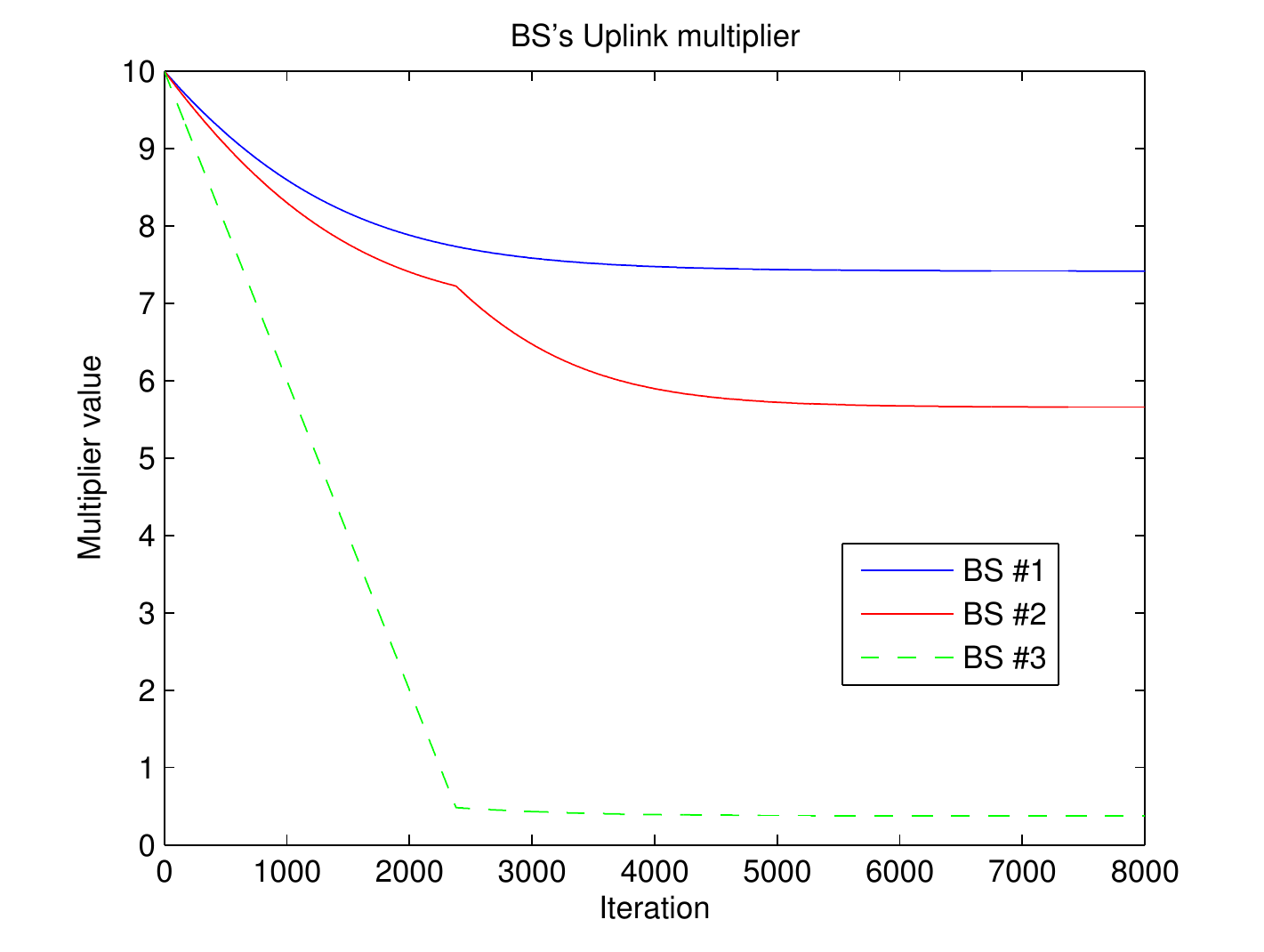}}
\caption{BSs' multipliers.}
\label{fig:test4_multipliers}
\end{figure}

\begin{figure}[!htb]
\centering     
\subfigure[Downlink allocations.]{\label{fig:test4_DLAlloc}\includegraphics[scale=0.46]{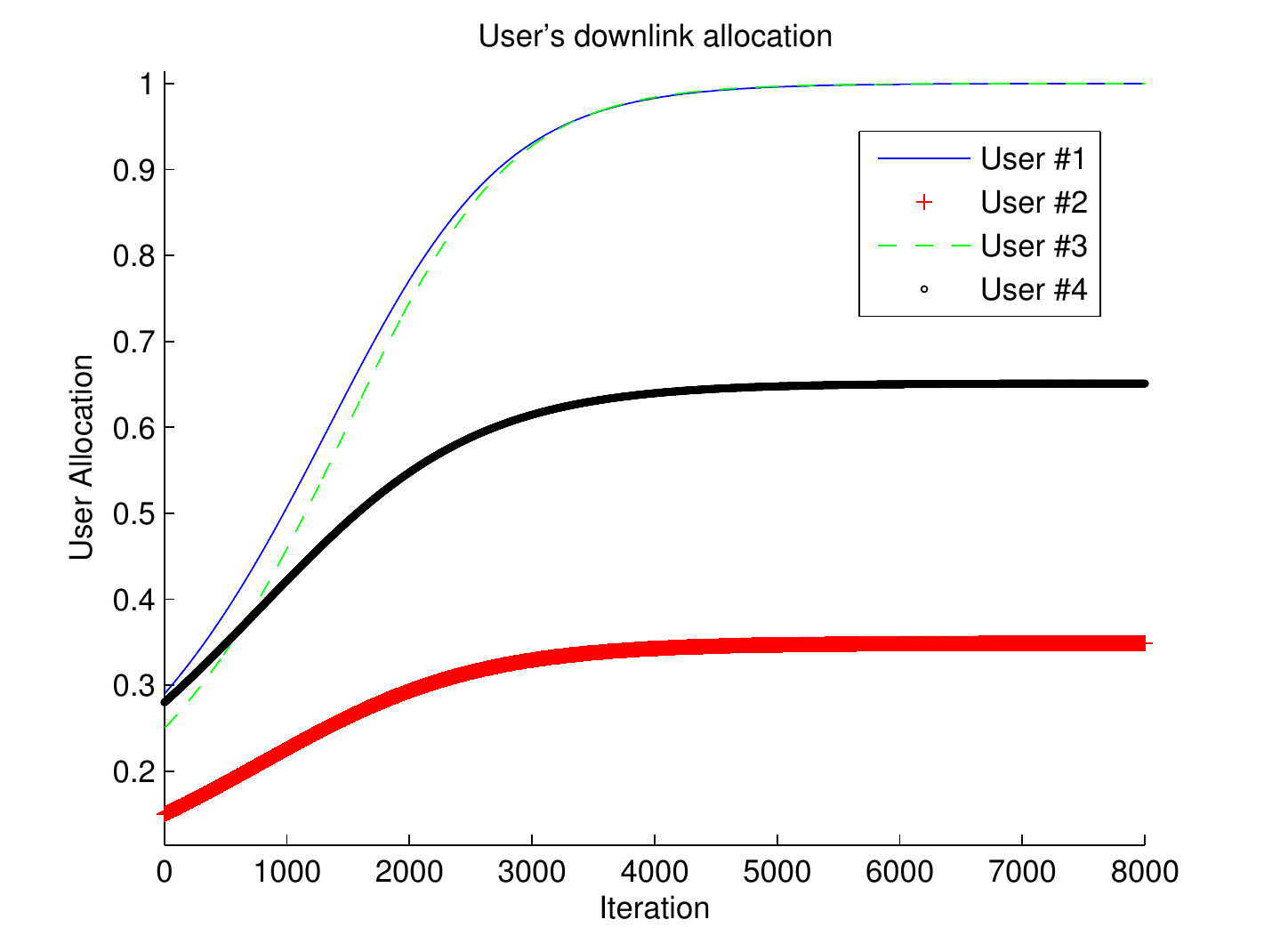}}
\subfigure[Uplink allocations.]{\label{fig:test4_ULAlloc}\includegraphics[scale=0.46]{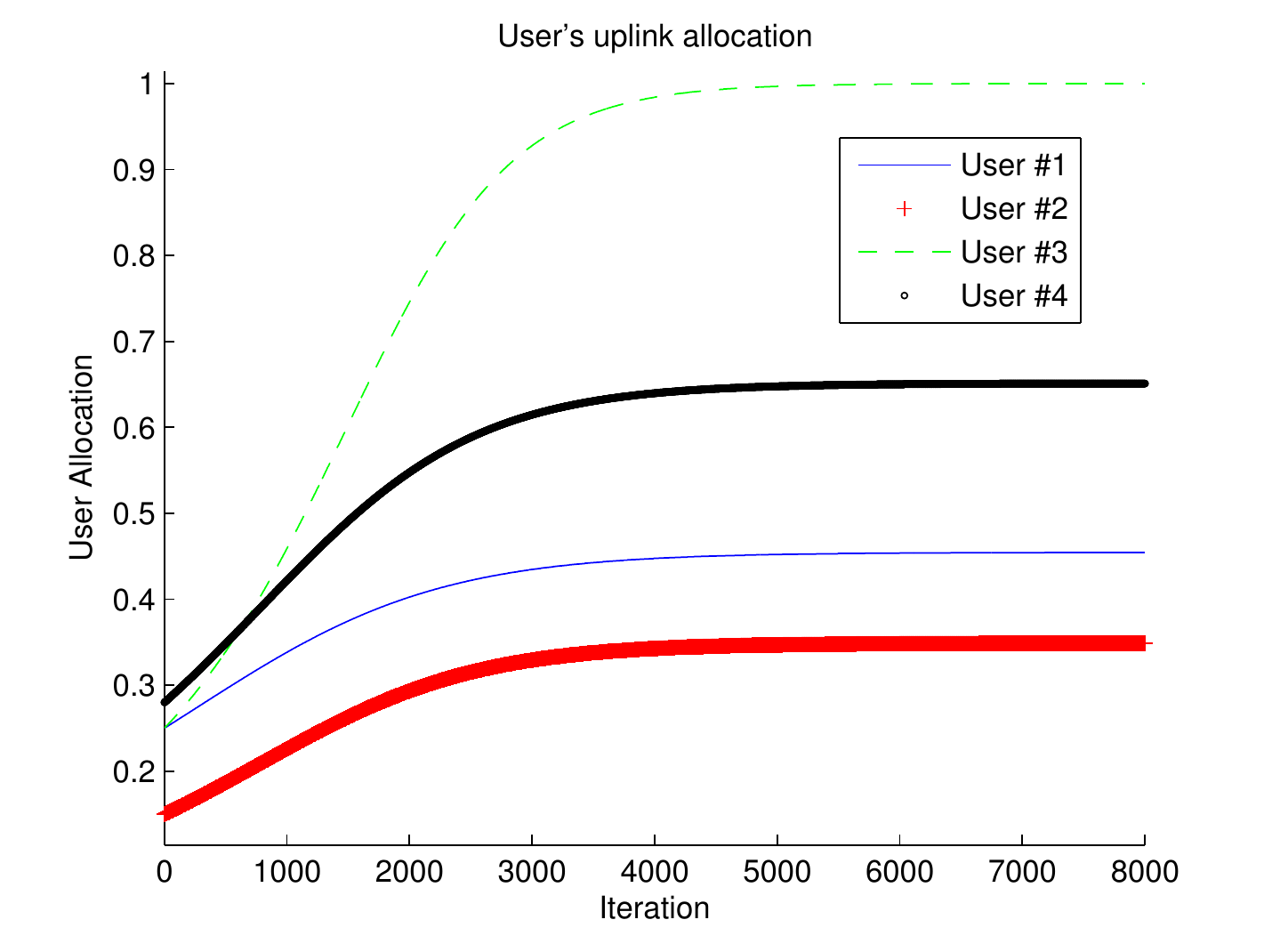}}
\caption{Users' allocations.}
\label{fig:test4_allocations}
\end{figure}

It is worth noting that some of the multipliers have converged to a higher value than others. Closer inspection reveals that base stations with a higher multiplier value are those which are serving more users. This can be verified effortlessly by checking the final allocation matrices below

\begin{equation}\label{eq:test4_allocations}
\text{Allocation}_{\textsc{dl}} = \begin{pmatrix}
    0 & 0 & 1 \\
    0 & 0.3488 & 0 \\
    1 & 0 & 0 \\
    0 & 0.6511 & 0
  \end{pmatrix};\quad
\text{Allocation}_{\textsc{ul}} = \begin{pmatrix}
    0.4544 & 0 & 0 \\
    0 & 0 & 1 \\
    0.5453 & 0 & 0 \\
    0 & 0.9997 & 0
  \end{pmatrix}.
\end{equation}

We may understand the value of the multipliers as the price of a given base station. They are used by the base stations in order to communicate the load status to all the users.

With regard to the uplink-downlink decoupling, the behaviour of user$\#1$ must be stressed. Notice that user$\#1$ associates bases station$\#3$ in downlink. On the contrary, associating to \textsc{bs}$\#1$ is preferred in the uplink. In addition, this change in User$\#1$ 's association, triggers another adjustment for User$\#2$, who still associates \textsc{bs}$\#2$ in downlink but now chooses \textsc{bs}$\#3$ in uplink. Although at first sight, this change might seem a great loss for User$\#2$, the overall performance of the system is better. Let us compare this scenario to the one explained in \emph{Test 3 ; $\alpha = 0.5$}, since the downlink allocation matrix is exactly the same (see \eqref{eq:test3_allocations}). In the latter case, User$\#2$ was granted (in uplink) a fraction of the maximum rate at \textsc{bs}$\#2$ which was equal to

\begin{equation}
\textbf{\text{User\#2}}_\text{ULrate} = 15\, [bits/s/Hz]\, \cdot\, 0.3191 = 4.7865 \, [bits/s/Hz].
\end{equation}

Conversely, now User$\#2$ perceives

\begin{equation}
\textbf{\text{User\#2}}_\text{ULrate}^\prime = 1\, [bit/s/Hz]\, \cdot\, 1 = 1 \, [bit/s/Hz].
\end{equation}

Note that User$\#2$ now strives to obtain a fraction of a maximum rate which is $15$ times lower. Nevertheless, we may consider the gain for other users. In Test 3, User$\#4$ received

\begin{equation}
\textbf{\text{User\#4}}_\text{ULrate} = 32\, [bit/s/Hz]\, \cdot\, 0.6807 = 21.7824 \, [bit/s/Hz].
\end{equation}

On the contrary, now

\begin{equation}
\textbf{\text{User\#4}}_\text{ULrate}^\prime = 32\, [bit/s/Hz]\, \cdot\, 1 = 32 \, [bit/s/Hz].
\end{equation}

As you may notice, this new onfiguration leads to a better use of the network. Therefore, the system performance is maximised as a whole, following the rules we established at the definition of the problem.

\subsubsection*{Test 5 - Load balance}
We observe now a simple example of load balancing when a new base station appears on the system. To do this, we include a fourth base station and fix the fairness parameter to $0.5$, seeking to meet a throughput maximisation. The number of users continues to be the same and so is the user asymmetry constraint. That being said, the new rate matrices are

 \begin{equation}
\text{Rates}_{\textsc{dl}} = \begin{pmatrix}
    8 & 1 & 29 & 1\\
    0.5 & 15 & 1 & 1\\
    25 & 2 & 2 & 1\\
    8 & 28 & 0.9 & 1
  \end{pmatrix};\quad
\text{Rates}_{\textsc{ul}} = \begin{pmatrix}
    8 & 1 & 25 & 1 \\
    0.5 & 15 & 1 & 1\\
    30 & 1 & 5.2 & 1 \\
    0.3 & 32 & 0.5 & 1
  \end{pmatrix}.
\end{equation}

Figures~\ref{fig:test5_allocations} and \ref{fig:test5_multipliers} show the results of the experiment. Additionally, final allocations can be checked in equation~\eqref{eq:test5_allocations}.

\begin{figure}[!htb]
\centering     
\subfigure[Downlink multipliers.]{\label{fig:test5_DLMult}\includegraphics[scale=0.46]{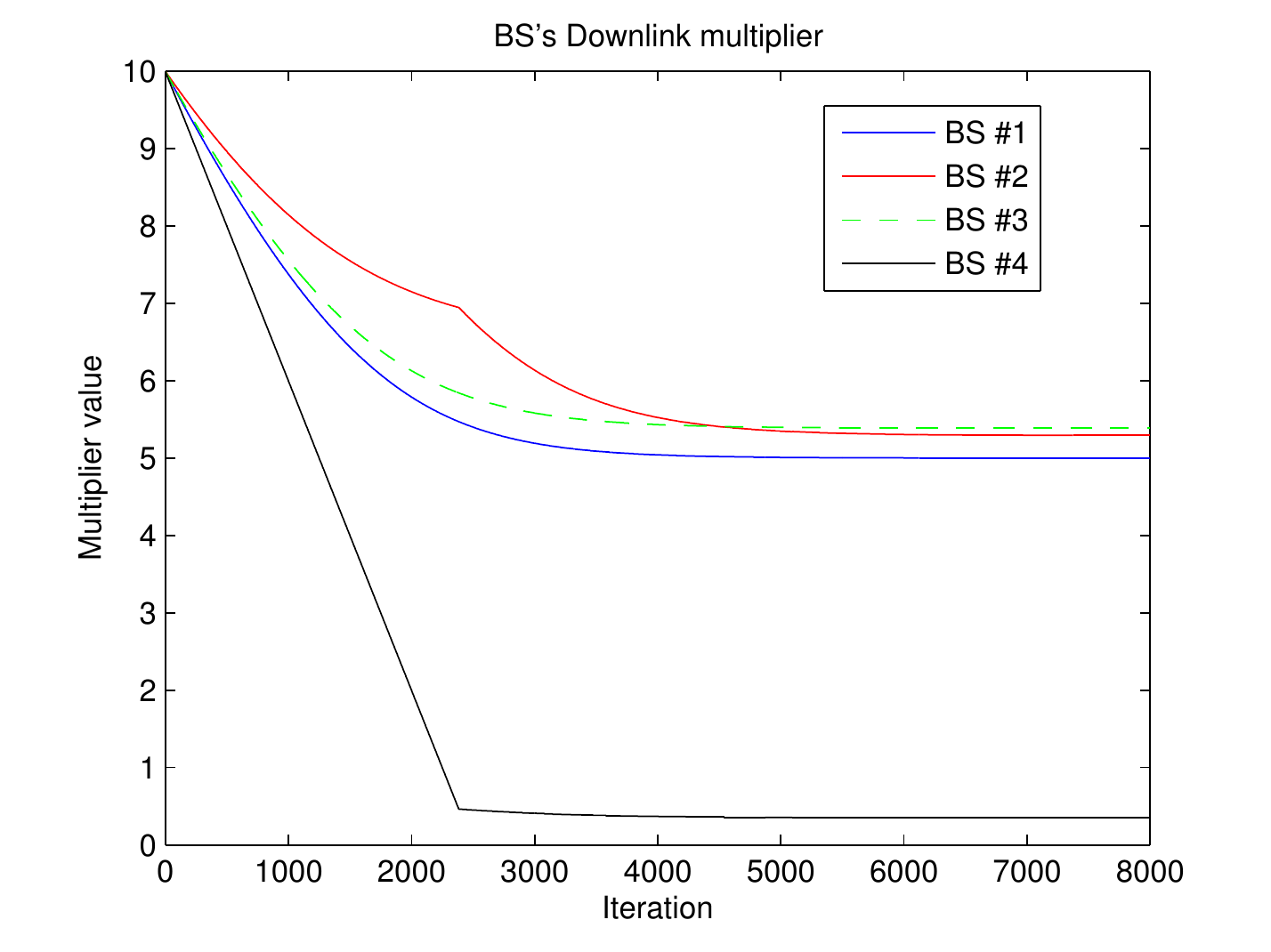}}
\subfigure[Uplink multipliers.]{\label{fig:test5_ULMult}\includegraphics[scale=0.46]{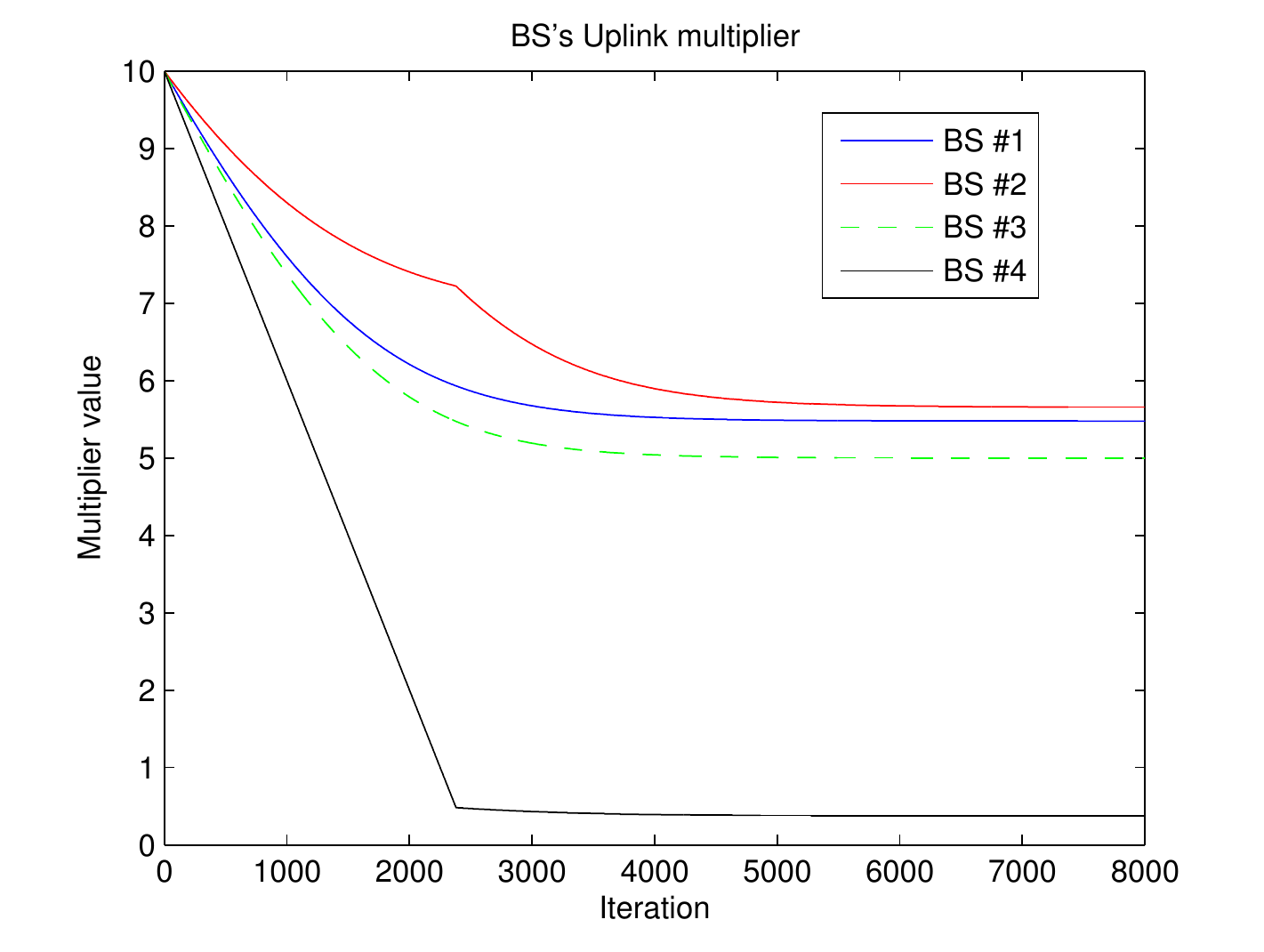}}
\caption{BSs' multipliers.}
\label{fig:test5_multipliers}
\end{figure}

\begin{figure}[!htb]
\centering     
\subfigure[Downlink allocations.]{\label{fig:test5_DLAlloc}\includegraphics[scale=0.46]{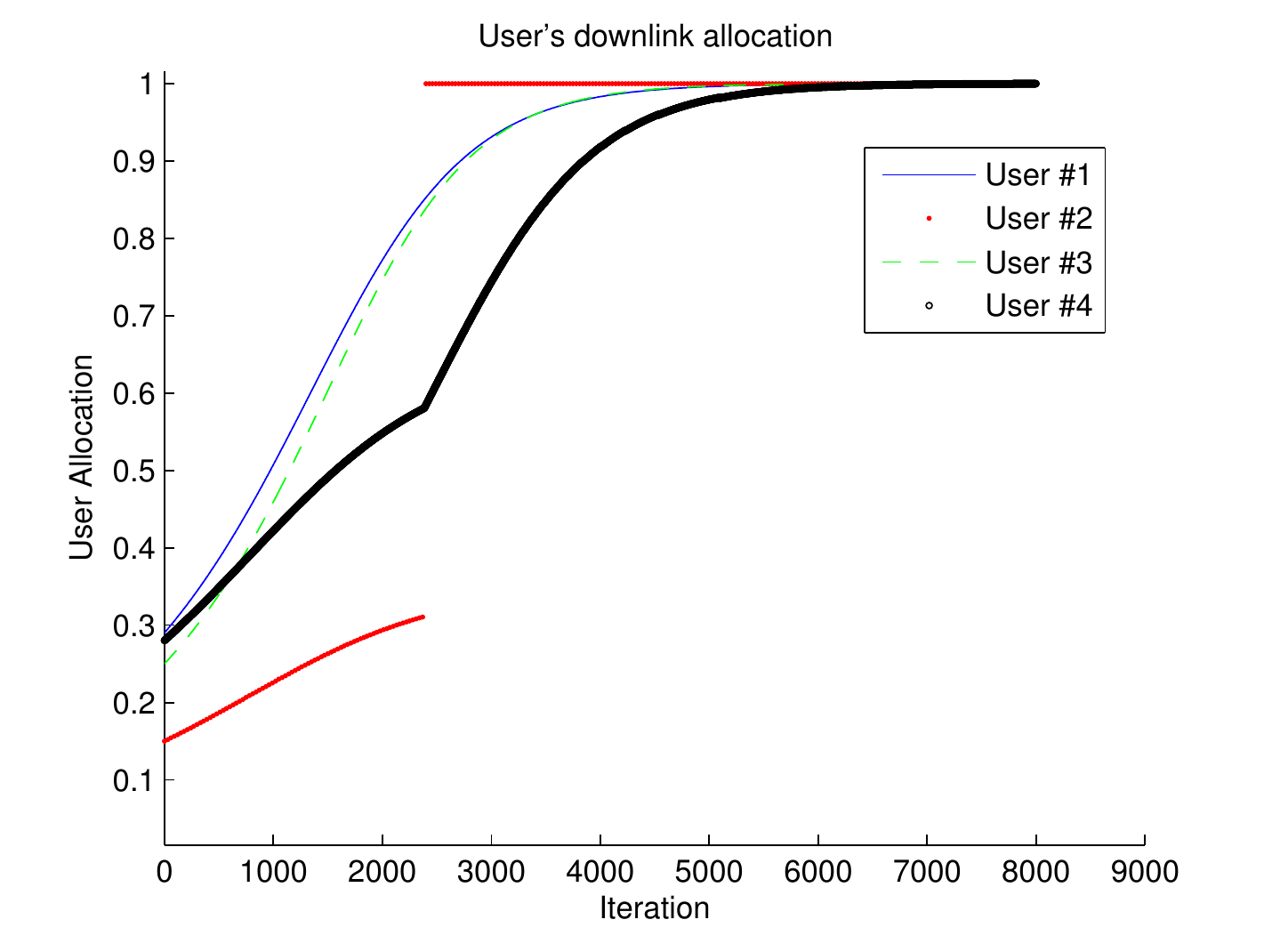}}
\subfigure[Uplink allocations.]{\label{fig:test5_ULAlloc}\includegraphics[scale=0.46]{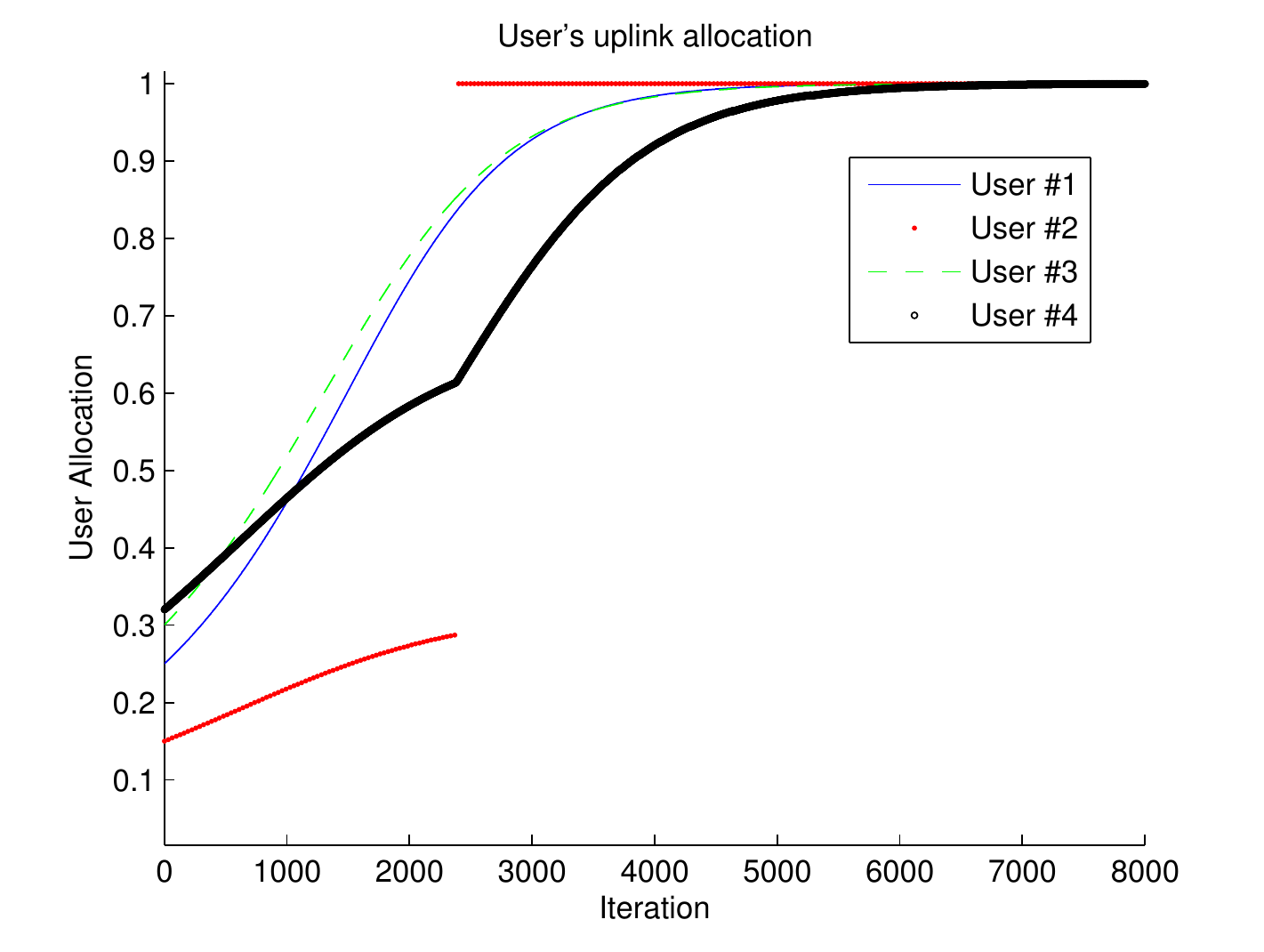}}
\caption{Users' allocations.}
\label{fig:test5_allocations}
\end{figure}

\begin{equation}\label{eq:test5_allocations}
\text{Allocation}_{\textsc{dl}} = \begin{pmatrix}
    0 & 0 & 1 & 0\\
    0 & 0 & 0 & 1 \\
    1 & 0 & 0 & 0\\
    0 & 1 & 0 & 0
  \end{pmatrix};\quad
\text{Allocation}_{\textsc{ul}} = \begin{pmatrix}
    0 & 0 & 1 & 0\\
    0 & 0 & 0 & 1 \\
    1 & 0 & 0 & 0\\
    0 & 1 & 0 & 0
  \end{pmatrix}.
\end{equation}

Note that \textsc{bs}s' multipliers are ranked according to the throughput they are granting to the users. Since \textsc{bs}$\#2$ is providing the highest throughput of the system in uplink ($32 bits/s/Hz$ to User$\#4$), its multiplier has converged to the biggest value of all. Observe, in downlink, how multipliers for base stations $2$ and $3$ converge to almost the same value, due to the fact that both are delivering a similar performance to their respective associated users.

A further inspection of the allocation matrices reveals that the introduction of the new base station has brought several consequences. Contrary to what happened in previous simulations the user with the worst rate in \textsc{bs}$\#2$, which was User$\#2$, has moved to base station number $4$, so that the remaining user (User$\#4$) can obtain the best possible performance of the \textsc{bs}. This means a great gain for latter while a slight performance degradation for the former. It should be taken into account that the entire network has encouraged this change without the need of a centralised entity which explicitly takes the decision.

\subsubsection*{Test 6 - Adding more base stations}
The aim of this test is to assess the effects of adding new base stations on the user's association decisions. Throughout this test we will carry out two different experiments. The first one consist of adding two new base stations. Then, we examine how users are relocated in order to achieve the new global optimal state. Fairness parameter remains fixed to $0.5$ and the per-user asymmetry constraint to $2$. Uplink - downlink rate matrices are

 \begin{equation}
\text{Rates}_{\textsc{dl}} = \begin{pmatrix}
    8 & 1 & 30 & 0 & 1\\
    0.5 & 15 & 1 & 0 & 1\\
    25 & 2 & 2 & 0 & 1\\
    8 & 28 & 0.9 & 0 & 1
  \end{pmatrix};\quad
\text{Rates}_{\textsc{ul}} = \begin{pmatrix}
    8 & 1 & 20 & 0 & 1 \\
    0.5 & 15 & 1 & 0 & 1\\
    27 & 1 & 5.2 & 0 & 1 \\
    0.3 & 32 & 0.5 & 0 & 1
  \end{pmatrix}.
\end{equation}

Furthermore, figures~\ref{fig:test6_multipliers_first} and~\ref{fig:test6_allocations_first} show, respectively, the progress of the multipliers and the evolution of granted allocations during the simulation. 

\begin{figure}[!htb]
\centering     
\subfigure[Downlink multipliers.]{\label{fig:test6_DLMult_first}\includegraphics[scale=0.46]{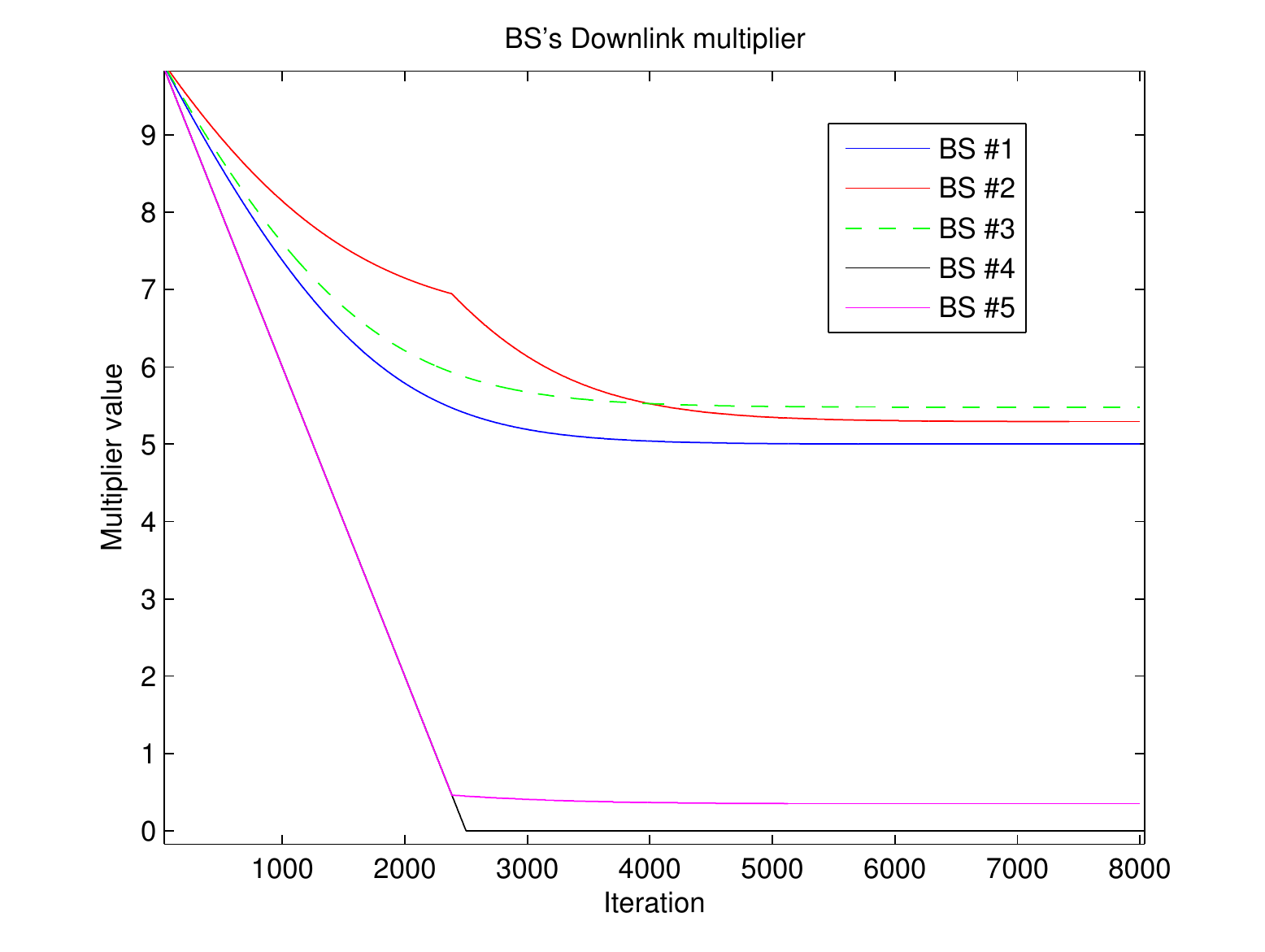}}
\subfigure[Uplink multipliers.]{\label{fig:test6_ULMult_first}\includegraphics[scale=0.46]{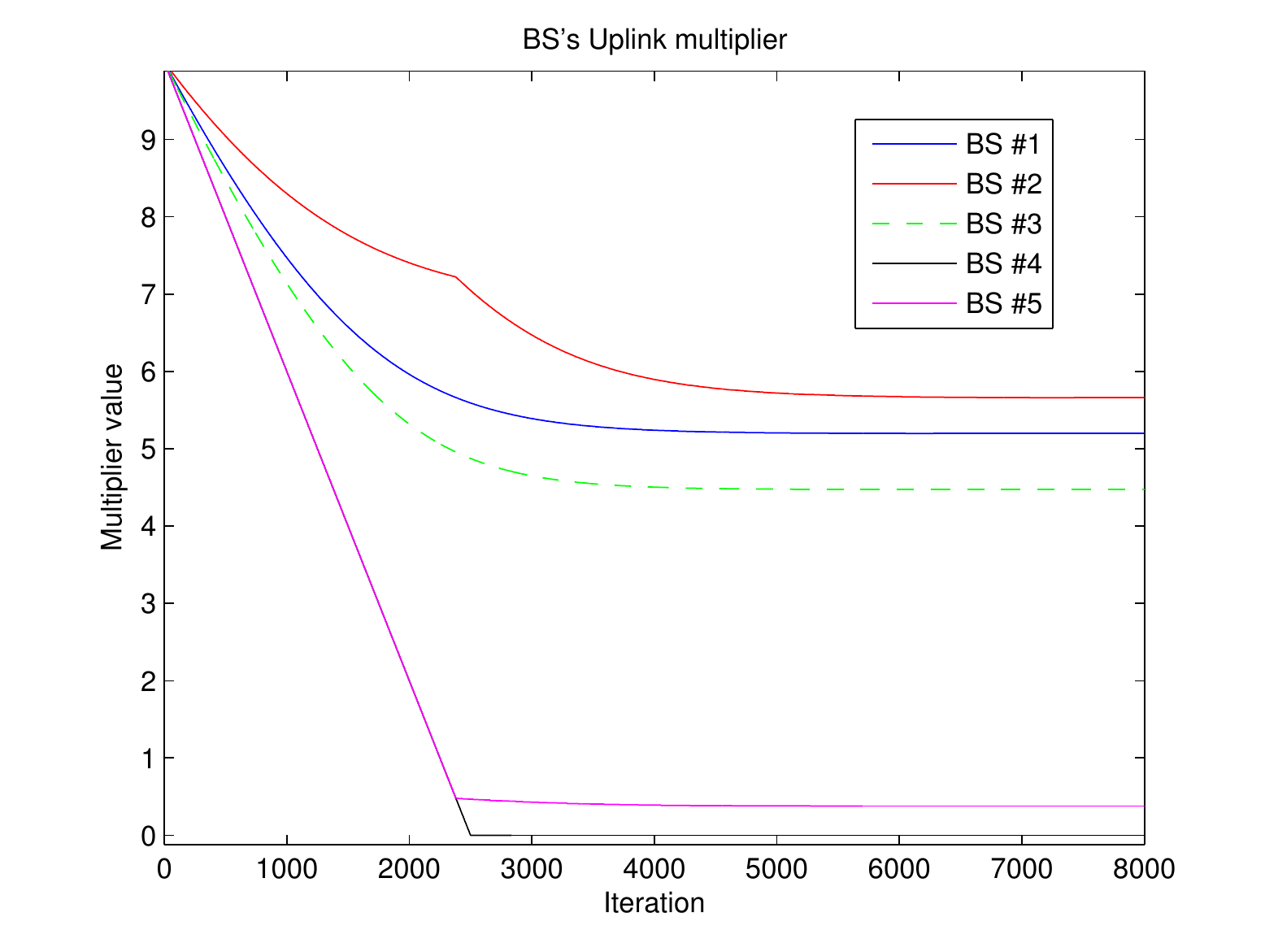}}
\caption{BSs' multipliers.}
\label{fig:test6_multipliers_first}
\end{figure}

\begin{figure}[!htb]
\centering     
\subfigure[Downlink allocations.]{\label{fig:test6_DLAlloc_first}\includegraphics[scale=0.46]{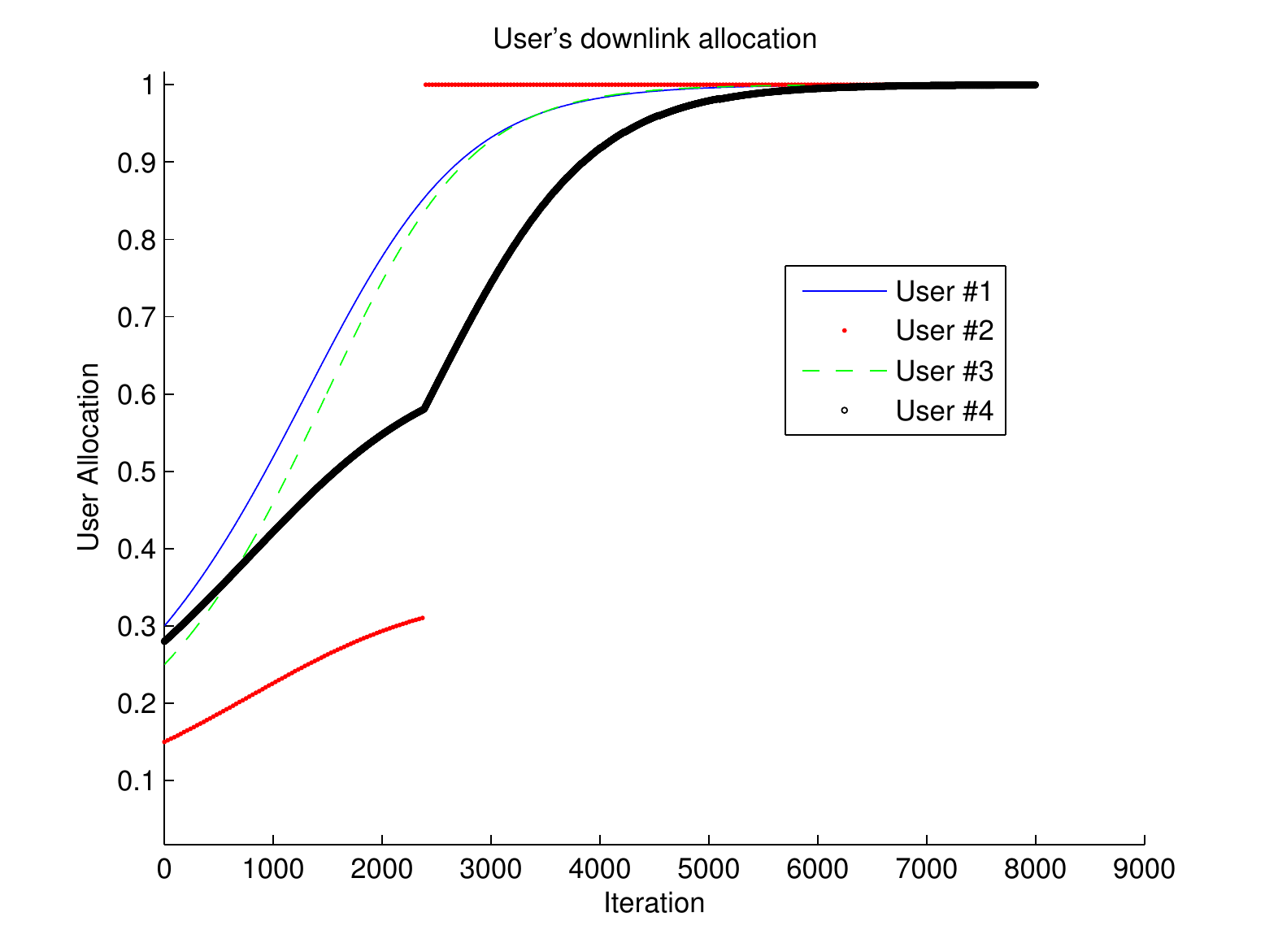}}
\subfigure[Uplink allocations.]{\label{fig:test6_ULAlloc_first}\includegraphics[scale=0.46]{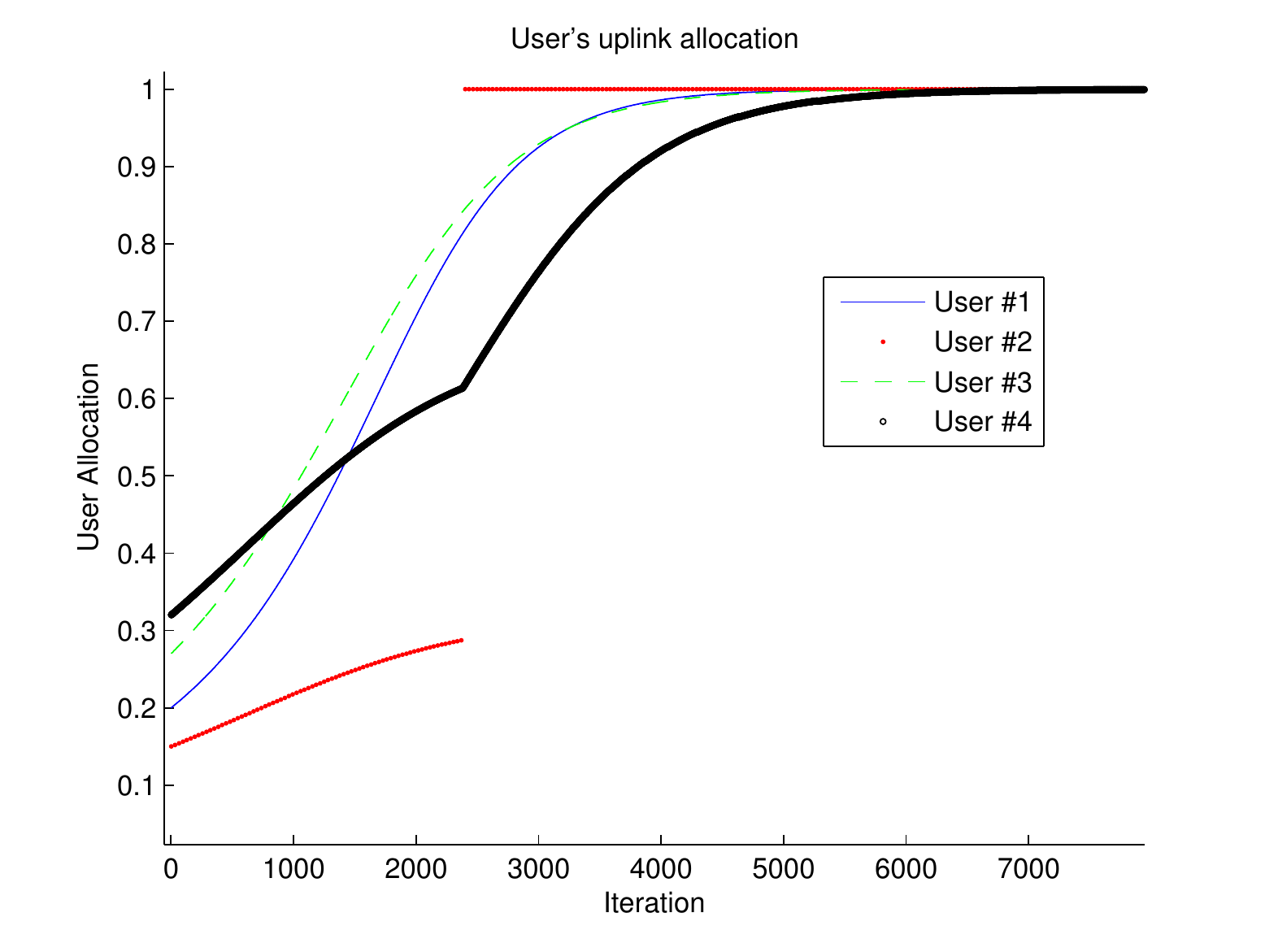}}
\caption{Users' allocations.}
\label{fig:test6_allocations_first}
\end{figure}

We find here a scenario that resembles the last one, that is, since there are empty base stations User$\#2$ is invited to leave base station number $2$. In this case, observe that User$\#2$ has got two options that would represent the same gain for him (\textsc{bs}$\#3$ and \textsc{bs}$\#5$) in both the uplink and the downlink. A closer look at the final allocation matrices reveals that the user has associated to \textsc{bs}$\#5$ (see. \eqref{eq:test6_allocations_first}). Otherwise, User$\#2$ joining \textsc{bs}$\#3$ would have hugely decreased utility and uplink-downlink symmetry since User$\#3$ is already using that base station.

\begin{equation}\label{eq:test6_allocations_first}
\text{Allocation}_{\textsc{dl}} = \begin{pmatrix}
    0 & 0 & 1 & 0 & 0\\
    0 & 0 & 0 & 0 & 1 \\
    1 & 0 & 0 & 0 & 0\\
    0 & 1 & 0 & 0 & 0
  \end{pmatrix};\quad
\text{Allocation}_{\textsc{ul}} = \begin{pmatrix}
    0 & 0 & 1 & 0 & 0\\
    0 & 0 & 0 & 0 & 1 \\
    1 & 0 & 0 & 0 & 0\\
    0 & 1 & 0 & 0 & 0
  \end{pmatrix}.
\end{equation}

Finally, for the second expermient, we will again add two new base stations but in this case we observe that both of them are going to be used. The fairness parameter remains fixed to $0.5$ and the per-user asymmetry constraint to $2$. The rate matrices are these:

 \begin{equation}
\text{Rates}_{\textsc{dl}} = \begin{pmatrix}
    8 & 1 & 30 & 0 & 1\\
    0.5 & 15 & 1 & 8 & 2\\
    25 & 2 & 2 & 0 & 1\\
    8 & 28 & 0.9 & 0 & 1
  \end{pmatrix};\quad
\text{Rates}_{\textsc{ul}} = \begin{pmatrix}
    8 & 1 & 20 & 0 & 1 \\
    0.5 & 15 & 1 & 1 & 3\\
    27 & 1 & 5.2 & 0 & 1 \\
    0.3 & 32 & 0.5 & 0 & 1
  \end{pmatrix}.
\end{equation}

Again, we include the plots showing the convergence of the multipliers (Figure~\ref{fig:test6_multipliers_second}) and the final resource allocation (Figure~\ref{fig:test6_allocations_second}).

\begin{figure}[!htb]
\centering     
\subfigure[Downlink multipliers.]{\label{fig:test6_DLMult_second}\includegraphics[scale=0.46]{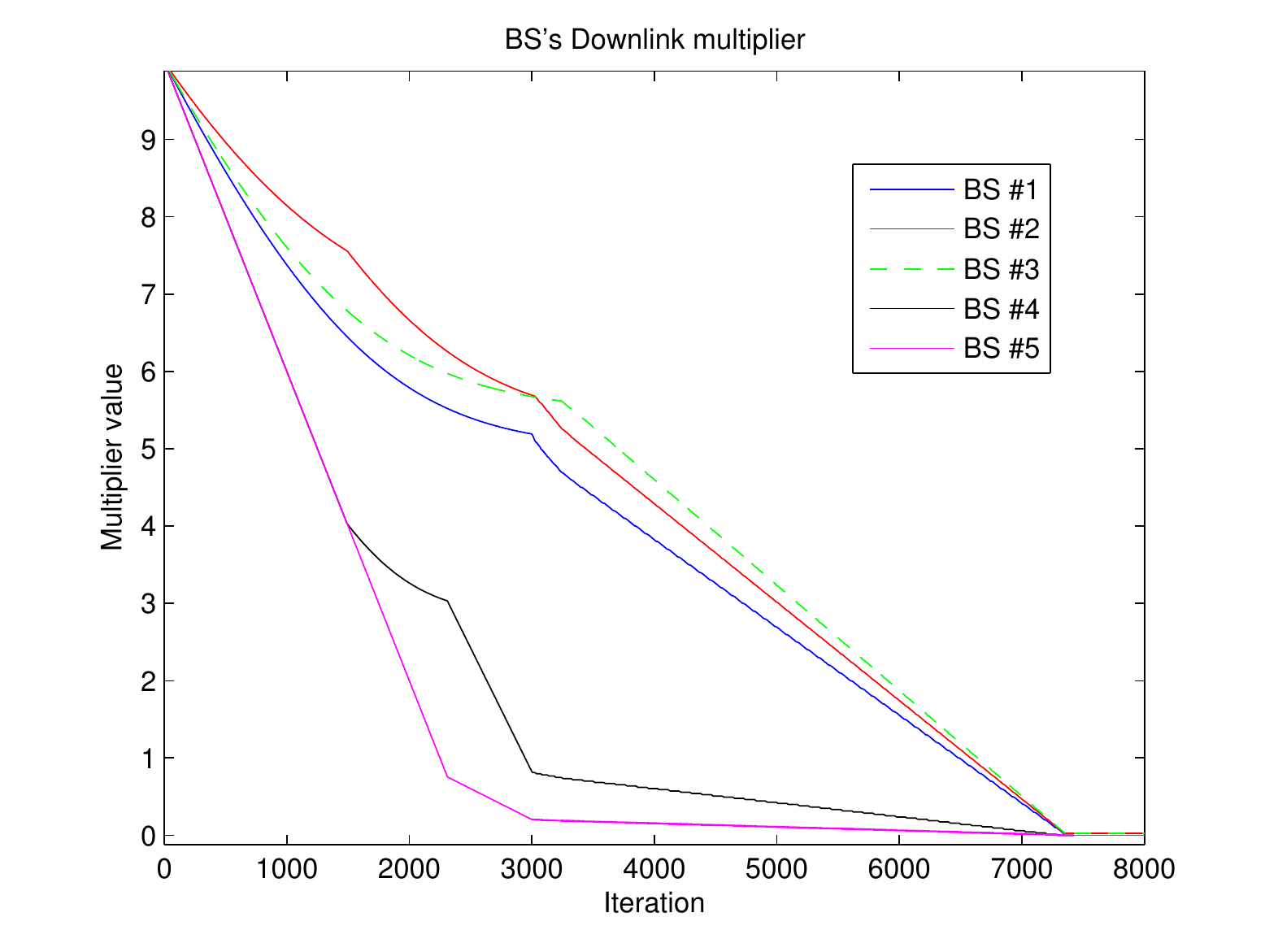}}
\subfigure[Uplink multipliers.]{\label{fig:test6_ULMult_second}\includegraphics[scale=0.46]{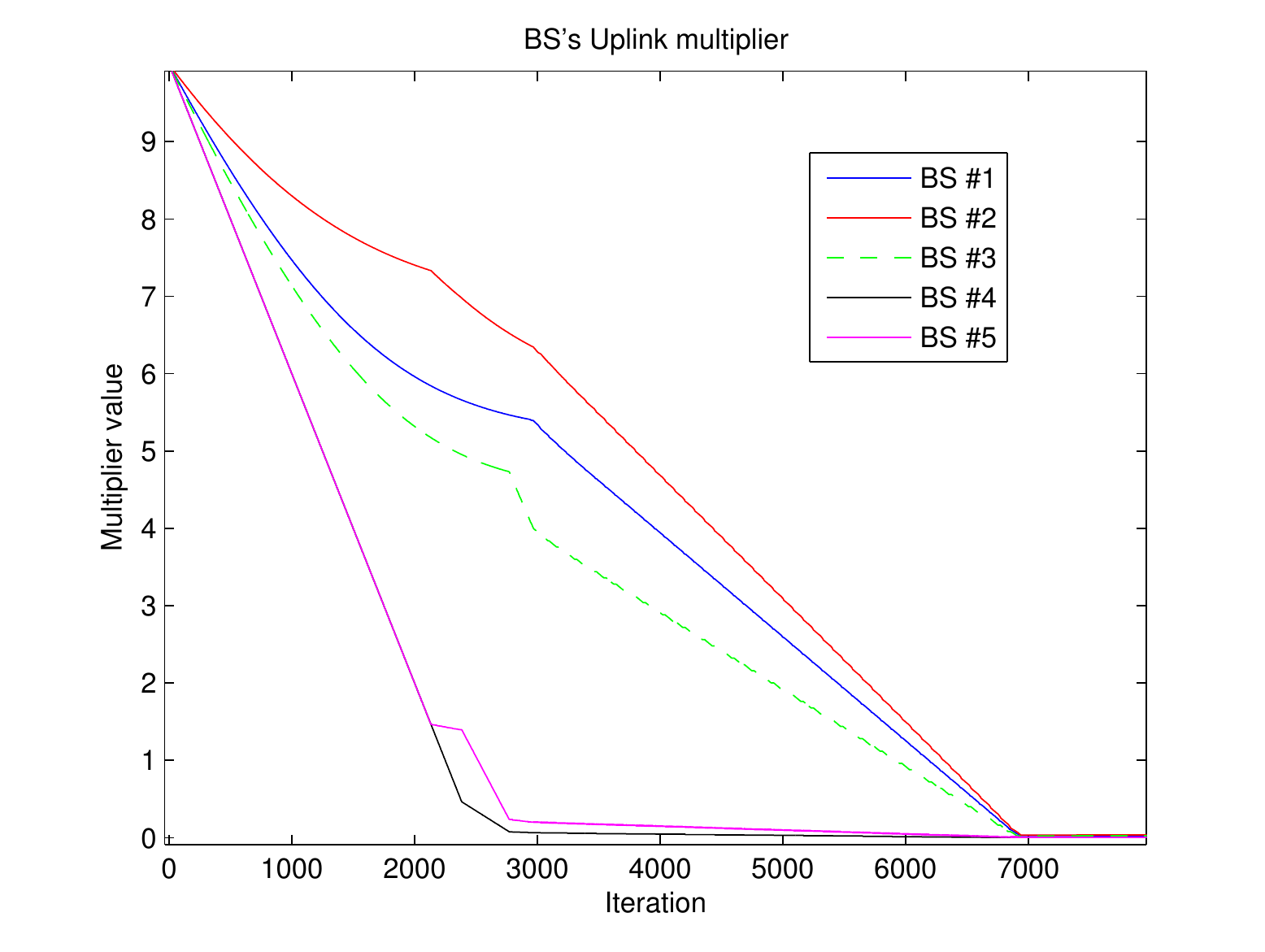}}
\caption{BSs' multipliers.}
\label{fig:test6_multipliers_second}
\end{figure}

\begin{figure}[!htb]
\centering     
\subfigure[Downlink allocations.]{\label{fig:test6_DLAlloc_second}\includegraphics[scale=0.46]{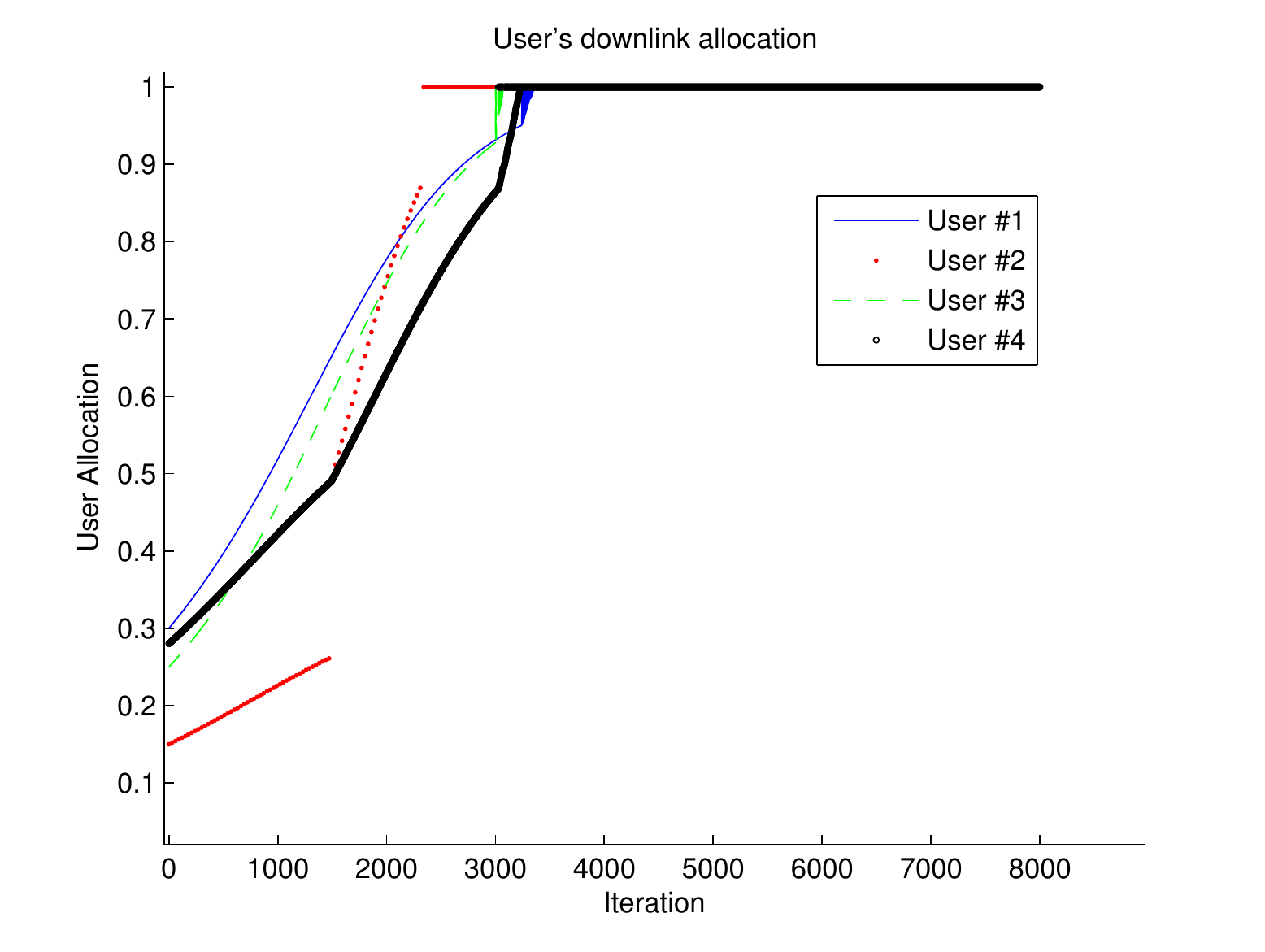}}
\subfigure[Uplink allocations.]{\label{fig:test6_ULAlloc_second}\includegraphics[scale=0.46]{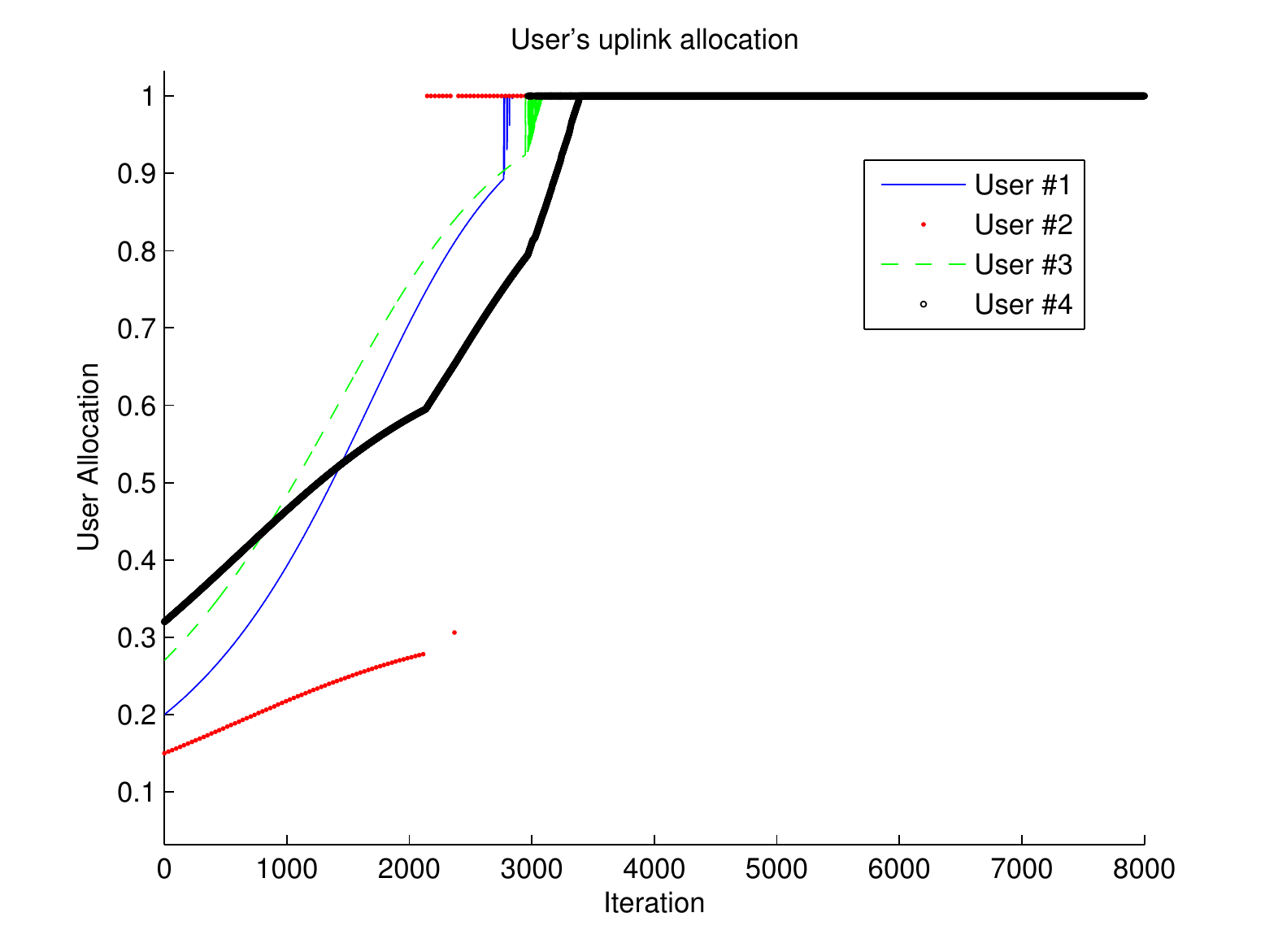}}
\caption{Users' allocations.}
\label{fig:test6_allocations_second}
\end{figure}

User$\#2$ leaves on more time the second base station but now, a decoupled association scheme for uplink and downlink arises. Note in~\eqref{eq:test6_allocations_second} that User$\#2$ decides to associate \textsc{bs}$\#4$ in downlink, while \textsc{bs}$\#5$  is preferred in uplink because that represents a better overall performance. Observe that any of the base stations remains completely idle.

\begin{equation}\label{eq:test6_allocations_second}
\text{Allocation}_{\textsc{dl}} = \begin{pmatrix}
    0 & 0 & 1 & 0 & 0\\
    0 & 0 & 0 & 1 & 0 \\
    1 & 0 & 0 & 0 & 0\\
    0 & 1 & 0 & 0 & 0
  \end{pmatrix};\quad
\text{Allocation}_{\textsc{ul}} = \begin{pmatrix}
    0 & 0 & 1 & 0 & 0\\
    0 & 0 & 0 & 0 & 1 \\
    1 & 0 & 0 & 0 & 0\\
    0 & 1 & 0 & 0 & 0
  \end{pmatrix}.
\end{equation}

\subsection{About the convergence of the algorithm}\label{sec:speed}
Since the algorithm under test is based on a \emph{gradient projection method}, it is worth mentioning some aspects about the speed of convergence. As we have already mentioned previously, the algorithm seems to be fast enough to execute in conventional mobile devices. Nevertheless, its convergence time obviously depends upon the multipliers' initial value as well as upon the step size of the gradient method, $\gamma$. The former aspect would require further research that we will consider as future work. Regarding the latter, we may choose the size of the step in an adaptive way. The main idea is to use a big step size at the beginning of the algorithm so as to improve the convergence speed to a value near the optimal point. However, have in mind that a large step size might cause the gradient method to become unstable. That is the reason why a smaller step size should be chosen as the iterative process advances in order to ensure a small steady state error. Plenty of examples dealing with this topic can found in the literature of the subject, especially, that related to  the \emph{Least-Mean-Square algorithms}~\cite{lms} which are widely used in adaptative filtering.

\subsection{Algorithm mode of operation}
Untill now, we assumed that the algorithm is being executed continuously. Another option would be to recalculate the allocations and multiplier values after a certain amount of time or when certain conditions are met. If the network has achieved the optimal association and allocation values, for instance, because no further changes happen after a given number of iterations we shall stop the execution of the algorithm. If a new user joins or leaves the system, we need to recompute again both optimal association and allocation values for each affected user, that is, for each user who was associated to the same base station that the new user is joining or leaving in a gicen link (uplink or downlink). Every affected user then, can be seen as a new user entering the system (since their optimal allocation and allocation decision may change) and they may trigger the resource allocation process in their respectice serving base stations in both uplink and downlink.

\section{Integration with the network simulator}
After the validation of the distributed algorithms, they have been integrated into the simulation tool presented in Chapter~\ref{chap:dude}. In this section, we conduct a comparison between the three main options studied throughout this work, that is, the initial approach based on a fixed association and allocation scheme, the second one, which implied allowing a dynamic allocation of resources while mantaining a fixed assocatiation strategy. Finally, we bring into comparison the third approach which relies on a less restrictive non-fixed policy for both association and allocation processes.

Table~\ref{parameters_final} shows the test-bed parameters that have been used in order to conduct the assessment. Note that we are carrying out this comparative evaluation under the throughput maximisation policy ($\alpha - fairness < 1$) since it might be the case of greater interest for a real network deployment and network operators.

\begin{savenotes}
  \begin{table}[!htb]
    \caption{Deployment parameters}
    \centering
    \label{parameters_final}
    \begin{tabular}{cl} \\ \hline
    
    \multicolumn{2}{|c|}{\cellcolor[HTML]{329A9D}{\color[HTML]{000000} Shared network parameters}}   \\ \hline
      \multicolumn{1}{|c|}{\cellcolor[HTML]{EFEFEF}{\color[HTML]{343434} Area of interest}} & \multicolumn{1}{l|}{$1000$ m $\times$ $1000$ m} \\ \hline
      \multicolumn{1}{|c|}{\cellcolor[HTML]{EFEFEF}{\color[HTML]{343434} }}
                                                                                            &
                                                                                              \multicolumn{1}{l|}{$\lambda_\text{Macrocells}
                                                                                              =
                                                                                              3$}
      \\ 
\multicolumn{1}{|c|}{\cellcolor[HTML]{EFEFEF}{\color[HTML]{343434} }}                                                                                                 & \multicolumn{1}{l|}{$\lambda_\text{Femtocells}$ = $\lambda_\text{Macrocells}\cdot$  ratio\footnote{$ratio = \frac{\lambda_F}{\lambda_M}$. Ratio of the number of femtocells to the number of macrocells. In our case, this ratio is equal to 3.}} \\
\multicolumn{1}{|c|}{\multirow{-3}{*}{\cellcolor[HTML]{EFEFEF}{\color[HTML]{343434} \begin{tabular}[c]{@{}c@{}}Network deployment\\ (PPP intensities)\end{tabular}}}} & \multicolumn{1}{l|}{$\#_ {Users} = 50$}                       \\ \hline
\multicolumn{1}{|c|}{\cellcolor[HTML]{EFEFEF}{\color[HTML]{343434} }}                                                                                                 & \multicolumn{1}{l|}{MBS = 46 dBm}                                   \\
\multicolumn{1}{|c|}{\cellcolor[HTML]{EFEFEF}{\color[HTML]{343434} }}                                                                                                 & \multicolumn{1}{l|}{FBS = 20 dBm}                                   \\
\multicolumn{1}{|c|}{\multirow{-3}{*}{\cellcolor[HTML]{EFEFEF}{\color[HTML]{343434} Transmit power}}}  & \multicolumn{1}{l|}{Device = 20 dBm} \\ \hline
\multicolumn{1}{|c|}{\cellcolor[HTML]{EFEFEF}{\color[HTML]{343434} Path-loss exponent}}                                                                               & \multicolumn{1}{l|}{$4$}                                      \\ \hline
\multicolumn{1}{|c|}{\cellcolor[HTML]{EFEFEF}{\color[HTML]{343434} Noise level}}                                                                                      & \multicolumn{1}{l|}{$- 106$ dBm}                                      \\ \hline
\multicolumn{2}{|c|}{\cellcolor[HTML]{329A9D}{\color[HTML]{000000} Chapter 2 Algorithm's parameters}}  \\ \hline
\multicolumn{1}{|c|}{\cellcolor[HTML]{EFEFEF}{\color[HTML]{343434} Fairness parameter}}                                                                                      & \multicolumn{1}{l|}{$\alpha-fairness = 0.5$} \\ \hline
\multicolumn{1}{|c|}{\cellcolor[HTML]{EFEFEF}{\color[HTML]{343434} Asymmetry  weight}}                                                                                      & \multicolumn{1}{l|}{$A = 2$}                                      \\ \hline
\multicolumn{2}{|c|}{\cellcolor[HTML]{329A9D}{\color[HTML]{000000} Chapter 3 Algorithm's parameters}}  \\ \hline
\multicolumn{1}{|c|}{\cellcolor[HTML]{EFEFEF}{\color[HTML]{343434} Fairness parameter}}                                                                                      & \multicolumn{1}{l|}{$\alpha-fairness = 0.5$} \\ \hline
\multicolumn{1}{|c|}{\cellcolor[HTML]{EFEFEF}{\color[HTML]{343434} Per-user asymmetry}}                                                                                      & \multicolumn{1}{l|}{$\epsilon_u = 2$}     \\ \hline
\multicolumn{1}{|c|}{\cellcolor[HTML]{EFEFEF}{\color[HTML]{343434} Gradient step}}                                                                                      & \multicolumn{1}{l|}{$\gamma = 0.004$} \\ \hline
\multicolumn{1}{|c|}{\cellcolor[HTML]{EFEFEF}{\color[HTML]{343434} Number of iterations}}                                                                                      & \multicolumn{1}{l|}{$8000$} \\ \hline

\end{tabular}
\end{table}
\end{savenotes}

Let us start by assessing the gains of each approach by presenting several measurements. Some of these are: aggregate throughput for both links, base stations' workload, per-user aymmetry, as well as, the different coverage maps resulting from each solution under the same network and user deployment. Figure~\ref{fig:Final_assessment_original_association} illustrates the coverage maps consequence of applying the original pathloss and received power criteria for uplink and downlink, respectively. The points represent the base stations while crosses indicate the location of the users. Additionally, light blue lines divide up the space according to the distance policy in the uplink map.

\begin{figure}[!htb]
\centering     
\subfigure[Downlink coverage.]{\label{fig:Final_DL_association_original}\includegraphics[scale=0.46]{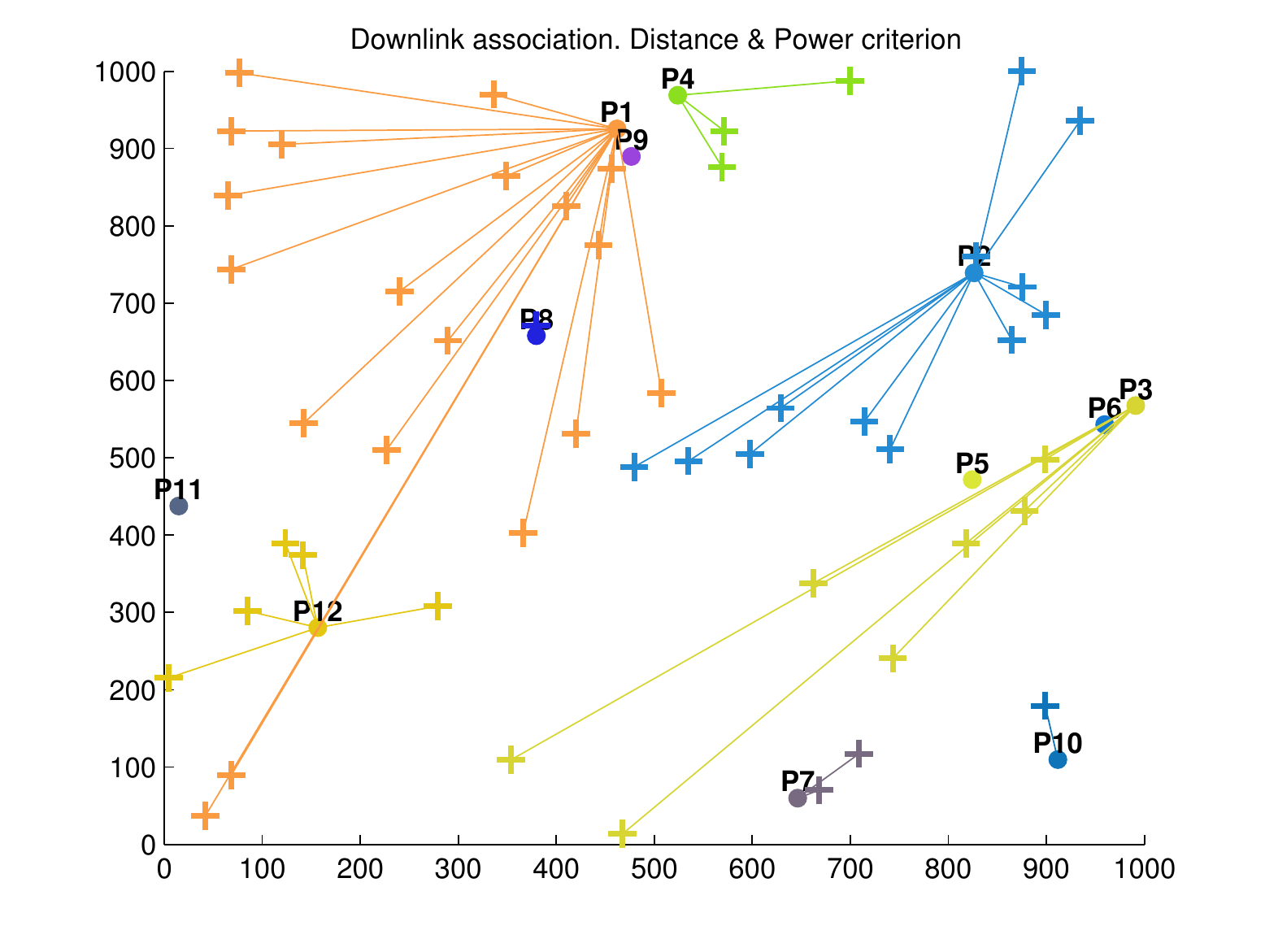}}
\subfigure[Uplink coverage.]{\label{fig:Final_UL_association_original}\includegraphics[scale=0.46]{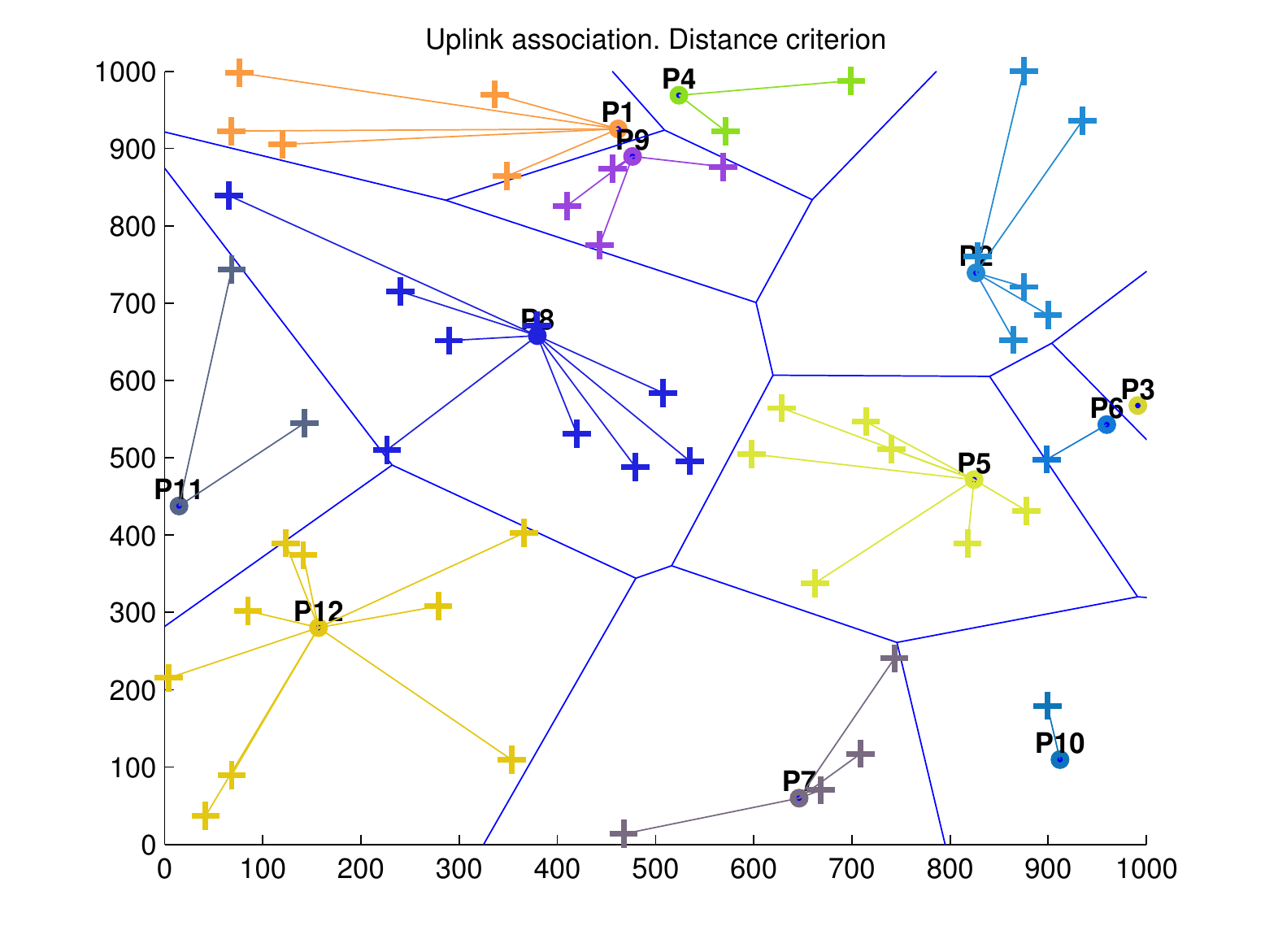}}
\caption{Downlink and uplink coverage maps.}
\label{fig:Final_assessment_original_association}
\end{figure}

Figure~\ref{fig:Final_assessment_joint_association} shows the effect of using the implemented gradient algorithm in user association. It should be noted that the original solution makes some base stations to be highly loaded while others remain almost idle (see \textsc{bs p}1 in Fig.~\ref{fig:Final_DL_association_original}). We can also observe this fact in the uplink. We should notice that the users appear to share base stations in a  more homogeneous and balanced way with the last algorithm, as numerically shown below. It is rather important to stress that we are not providing the association maps for the algorithm studied in Chapter~\ref{chap:num}, since they are the same as for the original approach. Recall that it allowed changing the resource allocation parameters for each user but the association decision was fixed and equal to that of the original \textsc{dud}e scheme.

\begin{figure}[!htb]
\centering     
\subfigure[Downlink coverage.]{\label{fig:Final_DL_association_joint}\includegraphics[scale=0.46]{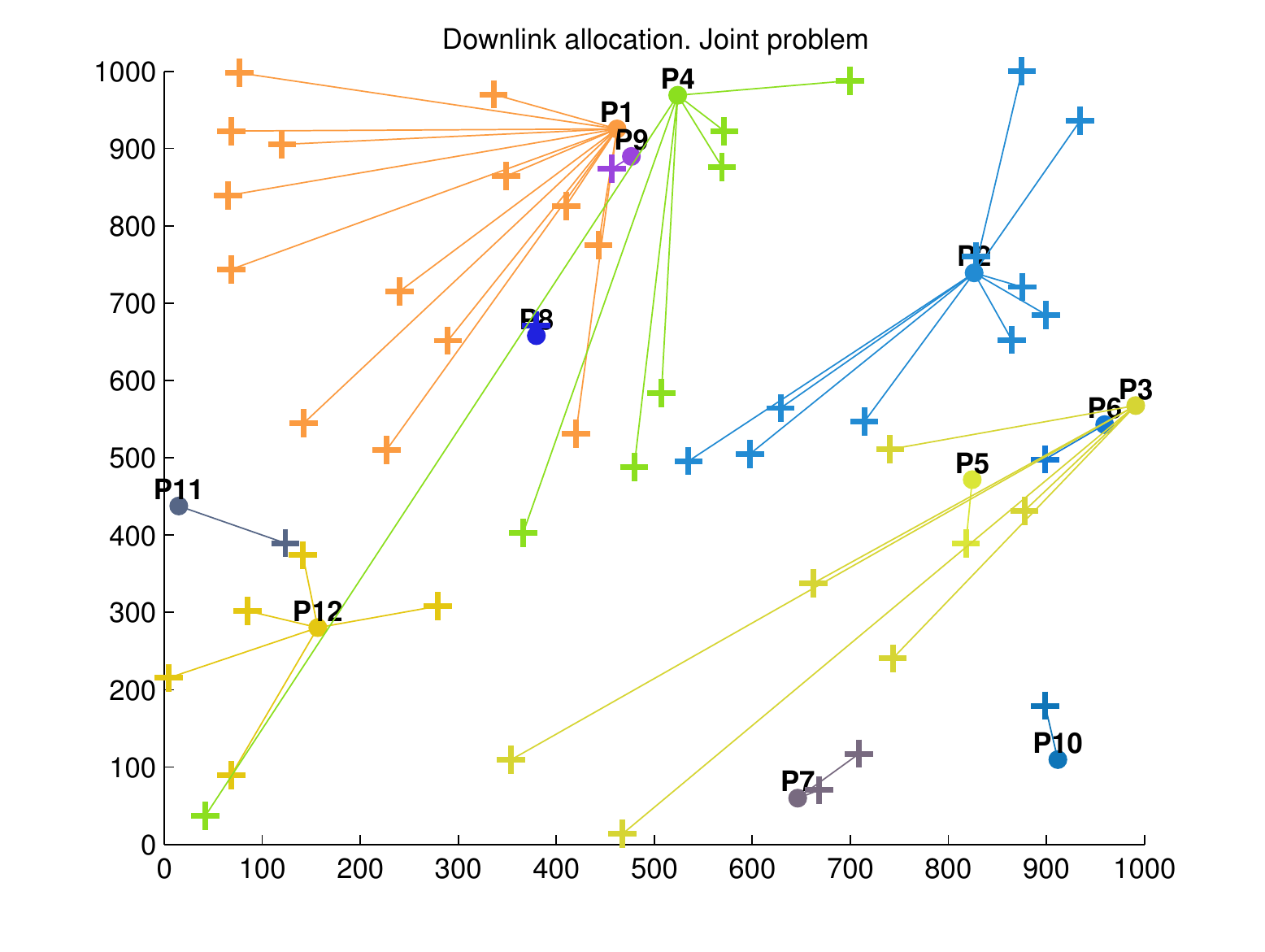}}
\subfigure[Uplink coverage.]{\label{fig:Final_UL_association_joint}\includegraphics[scale=0.46]{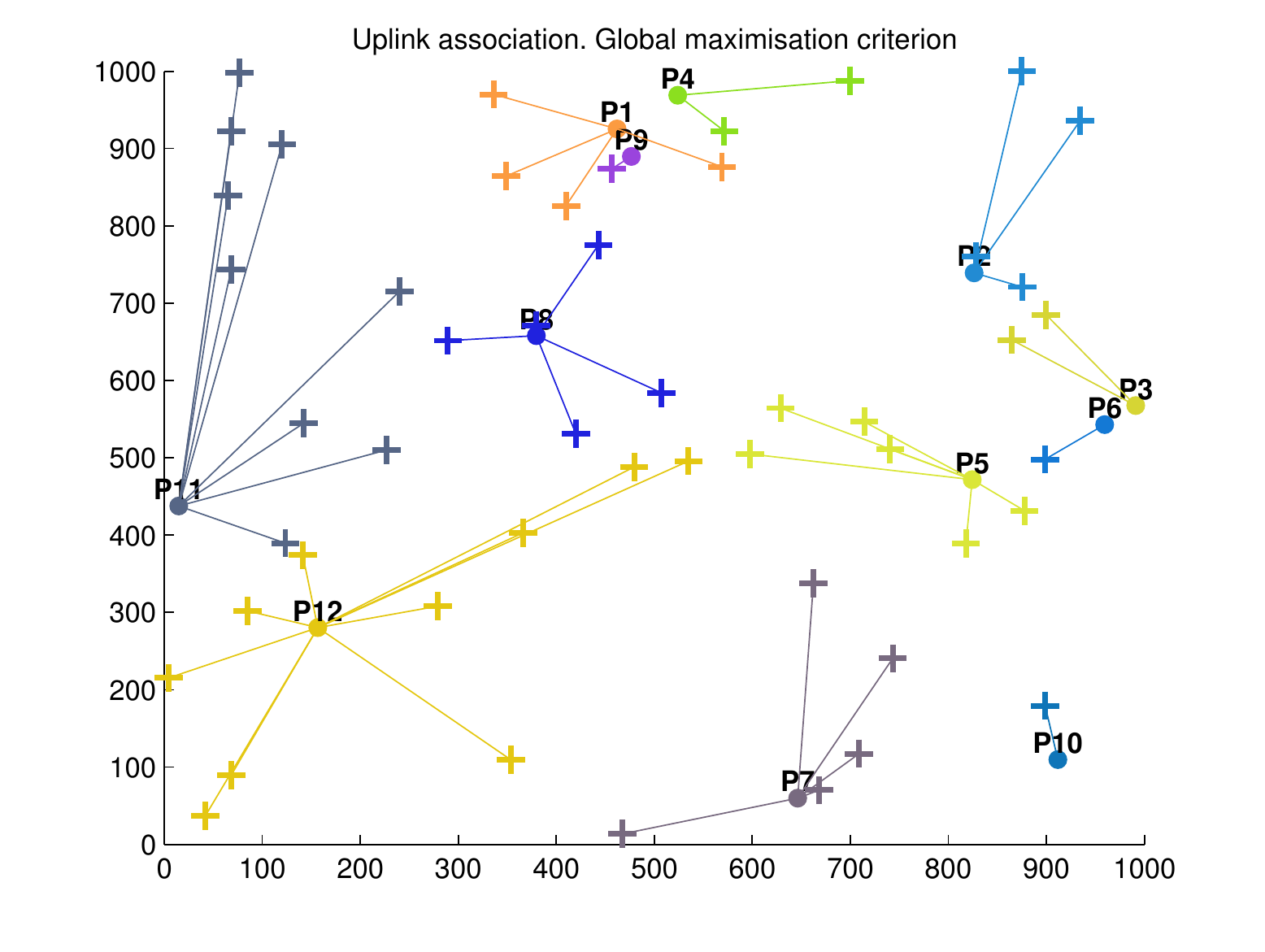}}
\caption{Downlink and uplink coverage maps.}
\label{fig:Final_assessment_joint_association}
\end{figure}

Prior to the discussion of the above-mentioned measurements, we include the plots showing the convergence of a subset of the base stations' multipliers when using the gradient-based algorithm (see. Figure~\ref{fig:Final_assessment_joint_multipliers}).

\begin{figure}[!htb]
\centering     
\subfigure[Downlink \textsc{bs}s' multipliers.]{\label{fig:Final_DL_Mult}\includegraphics[scale=0.46]{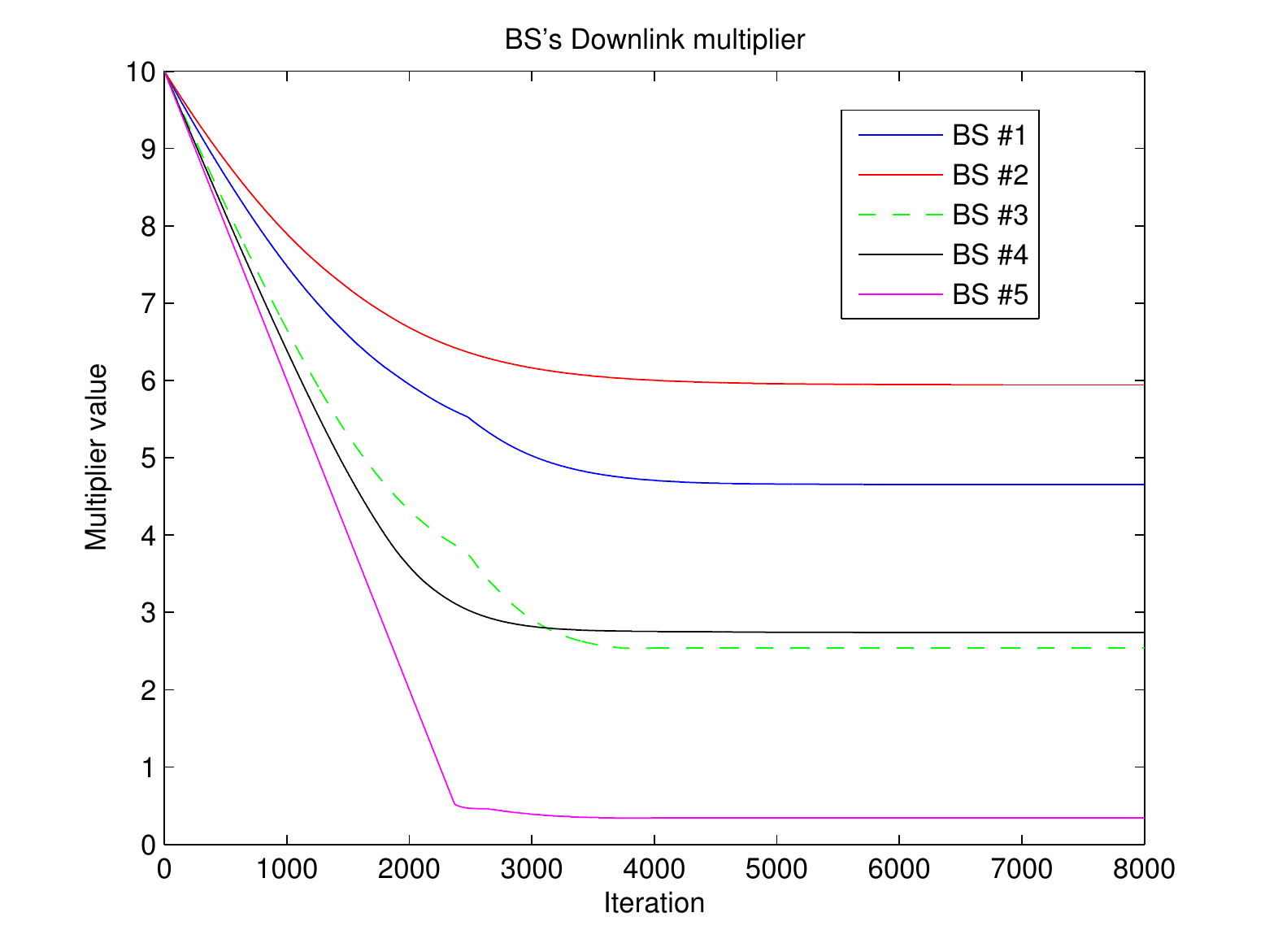}}
\subfigure[Uplink \textsc{bs}s' multipliers.]{\label{fig:Final_UL_Mult}\includegraphics[scale=0.46]{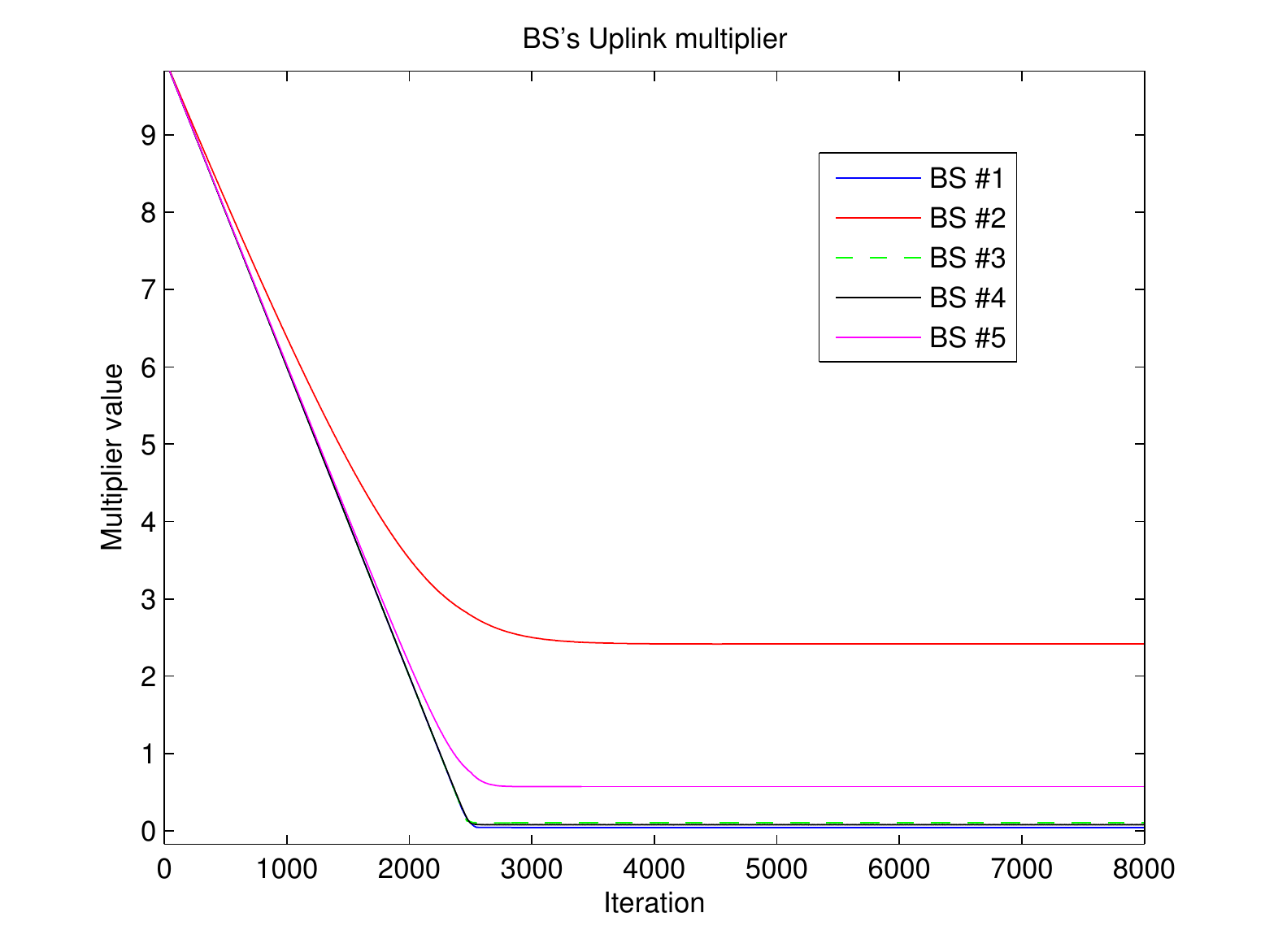}}
\caption{\textsc{bs}s' multipliers.}
\label{fig:Final_assessment_joint_multipliers}
\end{figure}

Regarding the performance indicators, we focus on three main measurementes. Namely, the aggregate capacity that the network is able to provide, the amount of work a given base station is expected to handle and the rate asymmetry a user may expect depending on the chosen approach. Figure~\ref{fig:Final_aggregate} compares the aggregate spectral efficiency of each one of the three schemes. The first two bars on the left show the overall performance of the network while using the original uplink-downlink decoupling policy. The second pair, suggest a slight improvement in both links as we are able to tweak the allocations for each user, making use of the algorithm studied in Chapter~\ref{chap:num}. Finally, the third pair of bars represents the performance of the network under the gradient-based global optimisation algorithm. It is worth noting the great gain that this scheme yields for both uplink and downlink.

\begin{figure}[!htb]
\centering     
\subfigure[Throughput aggregate.]{\label{fig:Final_aggregate}\includegraphics[scale=0.46]{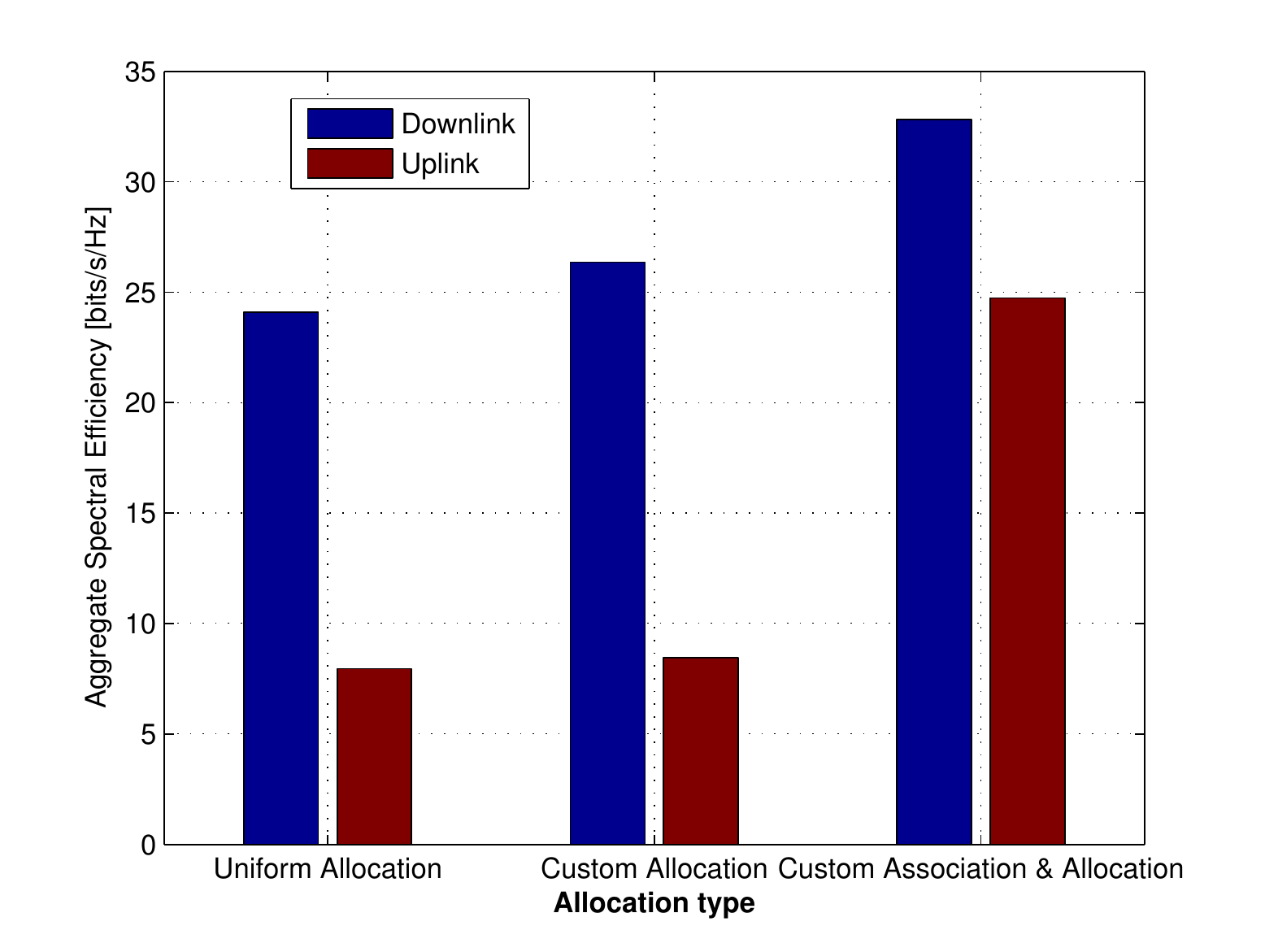}}
\subfigure[Mean user's rate asymmetry.]{\label{fig:Final_mean_asymmetry}\includegraphics[scale=0.46]{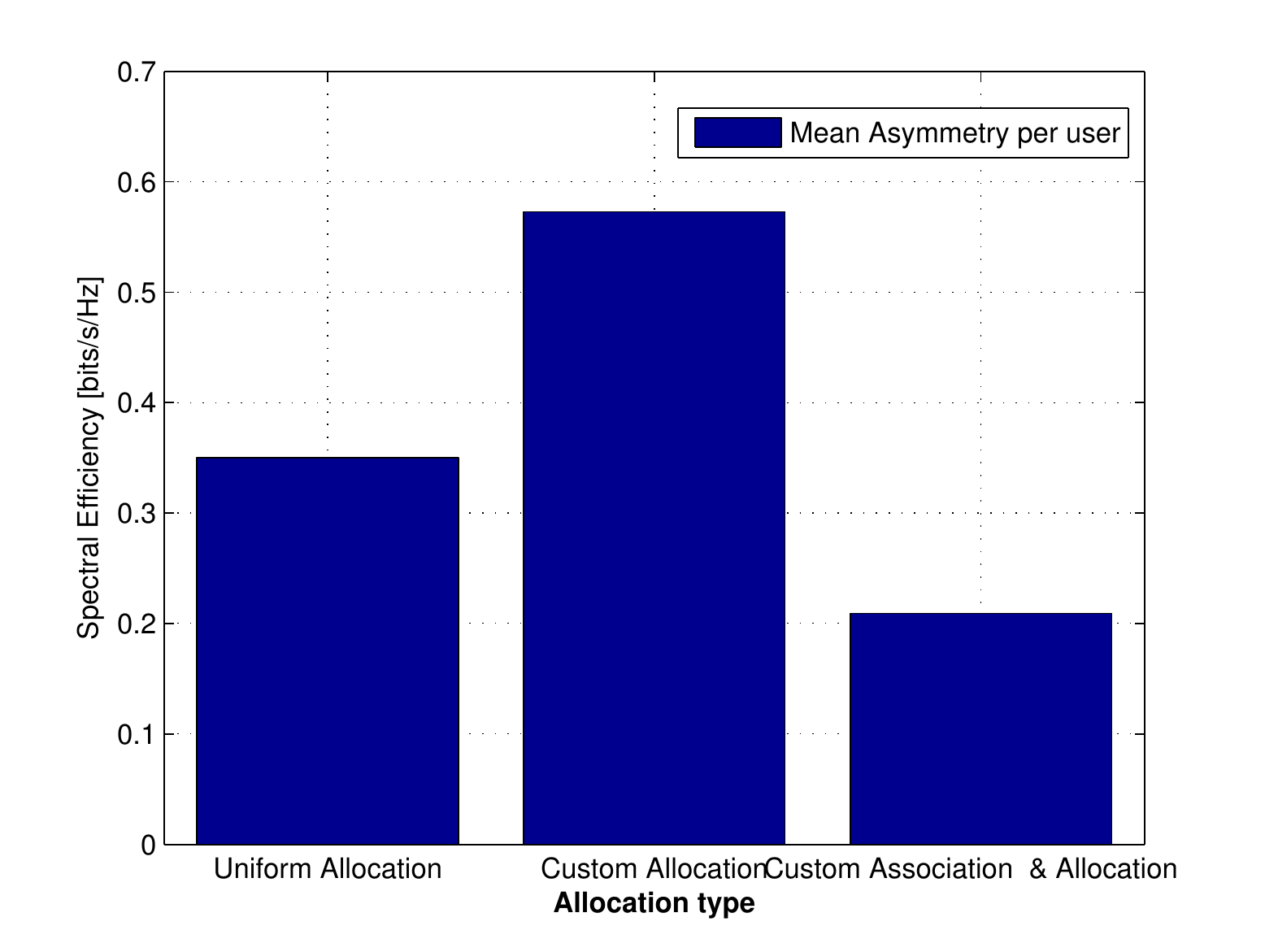}}
\caption{Performance of the algorithms.}
\label{fig:Final_assessment_joint_measurements1}
\end{figure}

In addition, we observe in Figure~\ref{fig:Final_mean_asymmetry}, how the third option also leads to an improvement in rate asymmetry. Conversely, the second algorithm is not able to achieve a throughput maximisation without affecting the rate symmetry, which seems reasonable since it is not allowed to alter the users' association policies. Finally, we investigate the consequences of each scheme concerning the workload of the base stations. Again, note that we only provide two different values: one for both the original decoupling scheme and Chapter~\ref{chap:num}'s algorithm and another one for the gradient-based association and allocation scheme. This is due to the fact that the first two share the same allocation decisions.

\begin{figure}[!htb]
  \begin{center}
    \includegraphics[scale=0.65]{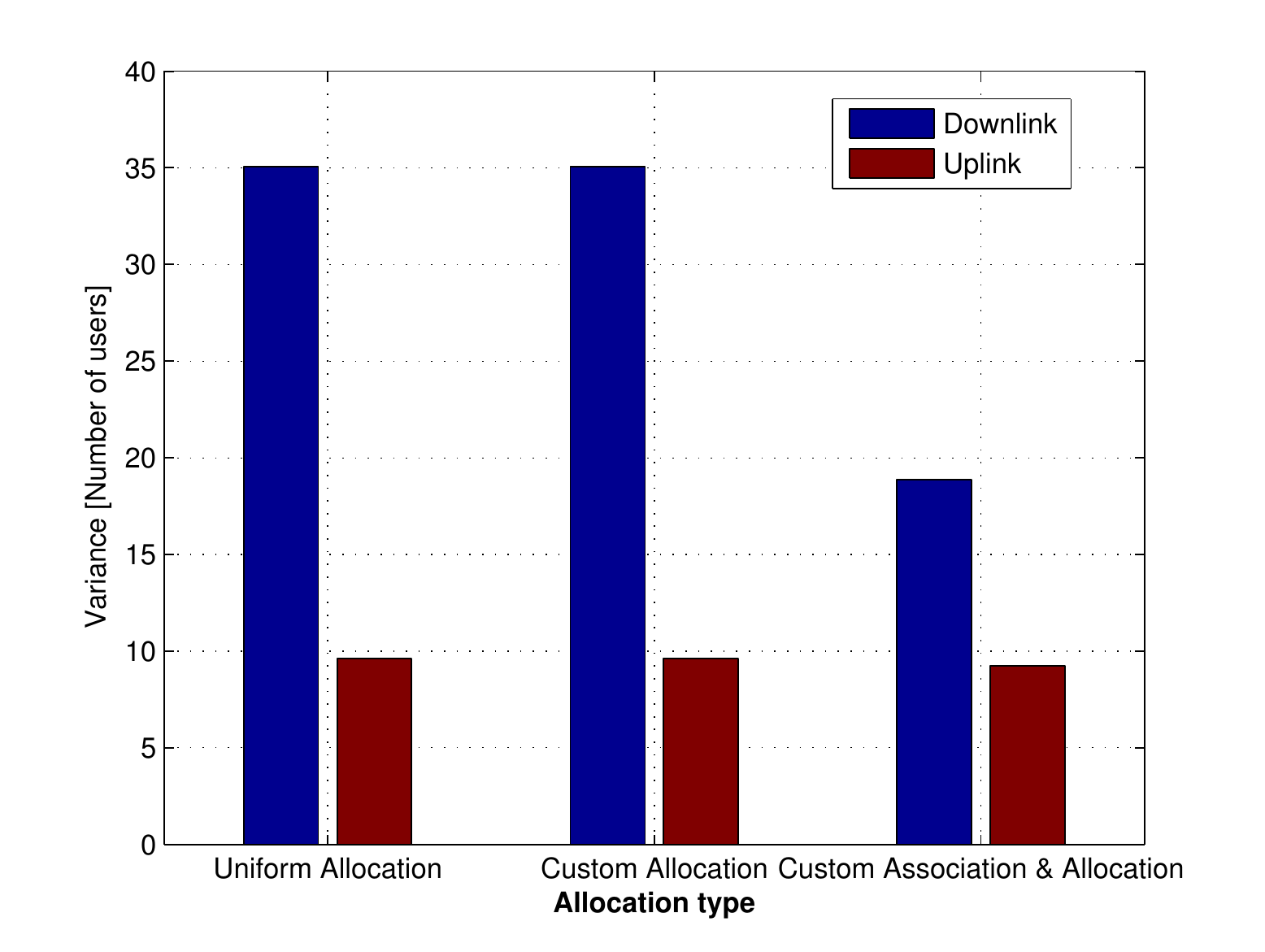}
    \caption{User load variance.}
    \label{fig:Final_user_variance}
  \end{center}
\end{figure}

Figure~\ref{fig:Final_user_variance} plots the variance of the number of users a base station is serving in a given link. It is worth emphasising the variance reduction in both uplink and downlink when employing the decentralised algorithm for utility maximisation. This means that the load a given base station has to deal with is more similiar to that of its neighbors and, therefore, the situations where few base stations serve most of the users and the majority remains almost idle are more unlikely to appear.
\chapter{Conclusions}

\section{Contributions}

We studied network utility from different perspectives. Firstly, we have assessed the performance gain obtained when using uplink-downlink decoupling on a 2-tier heterogeneous network deployment. A random spatial distribution approach has been used to compute both the position of base stations and user devices (\textsc{ppp}). These models, based on stochastic geometry, have shown their accuracy to model real-world network deployments and they represent the trending alternative to the traditional hexagonal grid deployments. With this, we try to provide a tool through which a network topology may be designed and tested. The pool of parameters that each of the algorithms devised in this work have, should help the operators design the best network deployment for each particular case.

Then, we study the network maximisation problem (\textsc{num}) within the context of single station association (\textsc{ssa}) policy, paying special attention to the symmetrical link balance. We separate the user's association decision from the resource allocation process and then devise an explicit solution when the association decision is fixed. A performance comparison between DUDe scheme and this solution is performed over different regimes of general $\alpha$-utility fairness function. We show that the implemented solution reduces the gap between uplink and downlink aggregates on the network and that different fairness and asymmetry parameters combinations may lead to diverse network behaviours. Furthermore, we establish the non-convexity of network utility maximisation problem under single station association policy when the association decision is not fixed and propose a new approach.

Finally, since the joint user association and resource allocation problem is not readily tractable following the \textsc{ssa} approach, we address this problem under Multi-Station association (\textsc{msa}) policy. Therefore, we relax the initial approach and allow each user to associate to more than one base station per link at the same time so as to retain convexity of the problem. Surprisingly, despite the fact that this approach is naturally more complex than the previous ones, it leads to a fast and scalable \emph{decentralised} solution via full dual decomposition of the global optimisation problem. We derive the simpler subproblems that each user and base station shall solve in order to reach the global optimum point and identify the computational complexity and message passing needs. The decomposition process revealed that the calculations which should be carried out by users and base stations are very simple and most of the message passing might be omitted and replaced by direct sensing of the physical channel conditions. In addition, the gradient projection method has been proven to converge in a reasonable amount of time, which make it suitable for real network deployments. Besides, we provide the numerical results which support the validity and the performance of the decentralised algorithm. We present how the algorithm behaves in different scenarios, highlighting the main features of the implemented solution. Regarding the distributed algorithm, a lot of flexibility is possible thanks to the parameters which are used to control its behaviour. A given user may activate/deactivate its own multipliers depending on its needs or even modify $\epsilon_u$ depending on the \textsc{q}o\textsc{s} that it is allowed to achieve at each moment.

To conclude, we conduct a comparison between the thee main options studied throughout this work. After the simulations, we note the great gain that the decentralised algorithm represents in terms of rate aggregate, base station offloading and per-user uplink-downlink rate symmetry.

\section{Future work}

Natural extensions to this work may include $i)$ extending the network simulator, $ii)$ studying the joint cell association and resource allocation problem under \textsc{ssa} policy making use of signomial geometric programming (\textsc{sgp}) and $iii)$ adding mobility support to the gradient-based decentralised algorithm. 

With reference to the first matter, the provided model can be easily extended to implement new physical layer technologies such as \textsc{mimo}, cell biasing, power control, etc. At the network level, device to device communications, scheduling and complex cooperation techniques between base stations can be included with minimal effort. Future work may also include point processes which model a minimum separation between points, i.e, \emph{Hard core point processes} (\textsc{hcpp}s). In that case, no two points of the process coexist with a separating distance less than a predefined hard core parameter. \emph{Poisson cluster processes} (\textsc{pcp}s), built from a parent \textsc{ppp} can also be useful to model the clustering behaviour observed on real cellular networks. The same discussion applies to user devices.

In view of the last point, increasing the base stations density and allowing heterogeneity with the presence of macro and small cells, poses new challenges for the mobility concerns which should be tackled in the near future in order to support continuous connectivity.

\appendix
\chapter{Appendix}

\section{The circle of Apollonius}\label{app:before}

\begin{proof}

\begin{itemize}
\item Let $\mathbb{X} = (x,y)$ be a point on a two dimensional grid which is equidistant to two given points ($P, Q$).
\item Let $P,\, Q$ be the two points under study. We want to compute the dominance area for each one of them. 
\item Let $\displaystyle d_w(a,\, b) = \frac{|a-b|}{W_b}$ be the definition of the weighted distance between two given points.
\item Let $W_p$ and $W_q$ be the weigth factors for each one of the points.

\end{itemize}

\begin{equation}
d\,(\mathbb{X},\,P) = d\,(\mathbb{X},\,Q)
\end{equation}

\begin{equation}
\frac{|\mathbb{X} - P|}{W_p} = \frac{|\mathbb{X} - Q|}{W_q} \longrightarrow \frac{|\mathbb{X} - P|}{|\mathbb{X} - Q|} = \frac{W_p}{W_q} = \lambda
\end{equation}

\begin{equation}
\frac{|(x,y) - (P_x, P_y)|}{|(x,y) - (Q_x, Q_y)|} = \lambda \longrightarrow \frac{\sqrt{(X-P_x)^2 + (y-P_y)^2}}{\sqrt{(x-Q_x)^2+(y-Q_y)^2}} = \lambda
\end{equation}

\begin{equation}
(x-P_x)^2 + (y-P_y)^2 = \lambda^2 [(x-Q_x)^2 + (y-Q_y)^2]
\end{equation}

\begin{equation}
\begin{aligned}
&x^2 - 2x P_x + P_x^2 - \lambda^2 x^2 + 2x Q_x \lambda^2 - Q_x^2\lambda^2  \\
&+ y^2 - 2y P_y + P_y^2 - \lambda^2 y^2 + 2y Q_y \lambda^2 - \lambda^2 Q_y^2 = 0
\end{aligned}
\end{equation}

Now, reworking the terms depending on $x$ yields

\begin{equation}
(1-\lambda^2)x^2 - 2x P_x + P_x^2 + 2x Q_x\lambda^2 - Q_x^2\lambda^2
\end{equation}

\begin{equation}
x^2 - 2x \frac{P_x}{(1-\lambda^2)} + \frac{P_x^2}{(1-\lambda^2)} + 2x \frac{Q_x\lambda^2}{(1-\lambda^2)} - \frac{Q_x^2\lambda^2}{(1-\lambda^2)}
\end{equation}

\begin{equation}
x^2 - 2x \frac{P_x - Q_x\lambda^2}{(1-\lambda^2)} + \frac{P_x^2 - Q_x^2\lambda^2}{(1-\lambda^2)}
\end{equation}

The above expression can be rewritten as

\begin{equation}
\bigl(x - \frac{P_x - Q_x\lambda^2}{(1-\lambda^2)}\bigr)^2 + \frac{P_x^2 - Q_x^2\lambda^2}{(1-\lambda^2)} - \frac{P_x^2 - 2P_xQ_x\lambda^2 + Q_x^2\lambda^4}{(1-\lambda^2)^2}
\end{equation}

\begin{equation}\label{eq:indep1}
\bigl(x - \frac{P_x - Q_x\lambda^2}{(1-\lambda^2)}\bigr)^2 + \frac{2P_xQ_x\lambda^2 - \lambda^2 P_x^2 - Q_x^2\lambda^2}{(1-\lambda^2)^2}
\end{equation}

We repeat the same procedure with the terms which depend on $y$

\begin{equation}
(1-\lambda^2)y^2 - 2yP_y + 2yQ_y\lambda^2 + P_y^2 - \lambda^2 Q_y^2
\end{equation}

\begin{equation}
y^2 - 2y \frac{P_y - Q_y\lambda^2}{(1-\lambda^2)} + \frac{P_y^2 - \lambda^2 Q_y^2}{(1-\lambda^2)}
\end{equation}

\begin{equation}
\bigl(y - \frac{P_y - Q_y\lambda^2}{(1-\lambda^2)}\bigr)^2 + \frac{P_y^2 - \lambda^2 Q_y^2}{(1-\lambda^2)} - \frac{P_y^2 - 2P_yQ_y\lambda^2 + Q_y^2\lambda^4}{(1-\lambda^2)^2}
\end{equation}

\begin{equation}\label{eq:indep2}
\bigl(y - \frac{P_y - Q_y\lambda^2}{(1-\lambda^2)}\bigr)^2 + \frac{2P_yQ_y\lambda^2 - P_y^2\lambda^2 - Q_y^2\lambda^2}{(1-\lambda^2)^2}
\end{equation}

In addition, combining the independent terms in \eqref{eq:indep1} and \eqref{eq:indep2} we obtain

\begin{equation}
\frac{-\lambda^2[P_x^2 - 2P_xQ_x + Q_x^2 + P_y^2 - 2P_yQ_y + Q_y^2]}{(1-\lambda^2)^2}
\end{equation}

\begin{equation}
\frac{-\lambda^2[(P_x - Q_x)^2 + (P_y - Q_y)^2]}{(1-\lambda^2)^2}
\end{equation}

Finally, the final equation that all the points must satisfy is

\begin{equation}
\bigl(x - \frac{P_x - Q_x\lambda^2}{(1-\lambda^2)}\bigr)^2 + \bigl(y - \frac{P_y - Q_y\lambda^2}{(1-\lambda^2)}\bigr)^2 = \frac{\lambda^2[(P_x - Q_x)^2 + (P_y - Q_y)^2]}{(1-\lambda^2)^2}.
\end{equation}

\end{proof}

\section{Convexity of $f_\alpha^{\textsc{ssa}}$}\label{app:a}

\begin{proof}
  The downlink/uplink rate for users $u$, $R_u(\mathbf{y}, \mathbf{z}) =
  \sum_b r_{ub} y_{ub} z_{ub}$ is linear in $\mathbf{y}$ for fixed
  $\mathbf{z}$. The $\alpha$-fair utility function is concave, so the
  composition $U_\alpha \bigl( R_\alpha(\mathbf{y}, \mathbf{z}) \bigr)$ is
  concave in $\mathbf{y}$. The difference $R_u(\mathbf{y}, \mathbf{z}) -
  R_u(\mathbf{y}^\prime, \mathbf{z}^\prime)$ is linear in $(\mathbf{y},
  \mathbf{y}^\prime)$ for fixed vectors $\mathbf{z}, \mathbf{z}^\prime$, and
  the absolute value $| \cdot |$ is a convex function. Hence, $-|
  R_u(\mathbf{y}, \mathbf{z}) - R_u(\mathbf{y}^\prime, \mathbf{z}^\prime)|$ is
  a concave function of $(\mathbf{y}, \mathbf{y}^\prime)$.
\end{proof}

\backmatter
\cleardoublepage
\phantomsection

\bibliographystyle{IEEEtran}
\bibliography{IEEEabrv,tail/thesis}
\addcontentsline{toc}{chapter}{Bibliography}

\end{document}